\newcommand{\fillednodes}{\fillednodestrue \graphnodecolour{0} \graphnodesize{0.125} \graphlinewidth{0.015} \grapharrowwidth{0.6}}
\newcommand{\G}{\mathcal{G}}
\newcommand{\R}{\mathcal{R}}
\newcommand{\dder}{\Rightarrow}
\newcommand{\mt}[1]{\text{\tt{#1}}}
\newcommand{\tuple}[1]{\langle#1\rangle}
\newcommand{\failrm}{\mathrm{fail}}
\newcommand{\ifte}[3]{\mathtt{if}\ #1\ \mathtt{then}\ #2\ \mathtt{else}\ #3}
\newcommand{\tryte}[3]{\mathtt{try}\ #1\ \mathtt{then}\ #2\ \mathtt{else}\ #3}
\begin{document}
	
	\title{Verifying Monadic Second-Order Properties of Graph Programs}
	\subtitle{\vspace{10pt}\emph{---\ extended version\ ---}}
	\author{Christopher M. Poskitt\inst{1} and Detlef Plump\inst{2}}

	\institute{Department of Computer Science, ETH Z\"{u}rich, Switzerland
	\and Department of Computer Science, The University of York, UK
	}

	\maketitle
	
	\begin{center}\textbf{Updated:} 16th June 2014\end{center}

	\begin{abstract}
		The core challenge in a Hoare- or Dijkstra-style proof system for graph programs is in defining a weakest liberal precondition construction with respect to a rule and a postcondition. Previous work addressing this has focused on assertion languages for first-order properties, which are unable to express important global properties of graphs such as acyclicity, connectedness, or existence of paths. In this paper, we extend the nested graph conditions of Habel, Pennemann, and Rensink to make them equivalently expressive to monadic second-order logic on graphs. We present a weakest liberal precondition construction for these assertions, and demonstrate its use in verifying non-local correctness specifications of graph programs in the sense of Habel et al.
	\end{abstract}

	\section{Introduction}
	
	Many problems in computer science and software engineering can be modelled in terms of graphs and graph transformation, including the specification and analysis of pointer structures, object-oriented systems, and model transformations; to name just a few. These applications, amongst others, motivate the development of techniques for verifying the functional correctness of both graph transformation rules and programs constructed over them.
	
	A recent strand of research along these lines has resulted in the development of \emph{proof calculi} for graph programs. These, in general, provide a means of systematically proving that a program is correct relative to a specification. A first approach was considered by Habel, Pennemann, and Rensink \cite{Habel-Pennemann-Rensink06a,Pennemann09a}, who contributed weakest precondition calculi -- in the style of Dijkstra -- for simple rule-based programs, with specifications expressed using \emph{nested conditions} \cite{Habel-Pennemann09a}. Subsequently, we developed Hoare logics \cite{Poskitt-Plump12a,Poskitt13a} for the graph transformation language GP 2 \cite{Plump12a}, which additionally allows computation over labels, and employed as a specification language an extension of nested conditions with support for expressions.
	
	Both approaches suffer from a common drawback, in that they are limited to first-order structural properties. In particular, neither of them support proofs about important \emph{non-local} properties of graphs, e.g.\ acyclicity, connectedness, or the existence of arbitrary-length paths. Part of the difficulty in supporting such assertions is at the core of both approaches: defining an effective construction for the weakest property guaranteeing that an application of a given rule will establish a given postcondition (i.e.\ the construction of a \emph{weakest liberal precondition} for graph transformation rules).
	
	Our paper addresses exactly this challenge. We define an extension of nested conditions that is equivalently expressive to monadic second-order (MSO) logic on graphs \cite{Courcelle-Engelfriet12a}. For this assertion language, and for graph programs similar to those of \cite{Habel-Pennemann-Rensink06a,Pennemann09a}, we define a weakest liberal precondition construction that can be integrated into Dijkstra- and Hoare-style proof calculi. Finally we demonstrate its use in verifying non-local correctness specifications (properties including that the graph is bipartite, acyclic) of some simple programs.
	
	The paper is organised as follows. In Section \ref{sec:preliminaries} we provide some preliminary definitions and notations. In Section \ref{sec:expressing_mso_properties} we define an extension of nested conditions for MSO properties. In Section \ref{sec:graph_programs} we define graph programs, before presenting our weakest liberal precondition construction in Section \ref{sec:constructing_wlps}, and demonstrating in Section \ref{sec:example_proofs} its use in Hoare-style correctness proofs. Finally, Section \ref{sec:related_work} presents some related work before we conclude the paper in Section \ref{sec:conclusion}.
	
	This is an extended version of \cite{Poskitt-Plump14a}, and includes the semantics of graph programs as well as the missing proofs.
	
	\section{Preliminaries}\label{sec:preliminaries}
	
	Let $\mathbb{B} = \{\text{true},\text{false}\}$ denote the set of Boolean values, $\text{Vertex},\text{Edge}$ denote (disjoint) sets of node and edge identifiers (which shall be written in lowercase typewriter font, e.g.\ $\mathtt{v},\mathtt{e}$), and $\text{VSetVar},\text{ESetVar}$ denote (disjoint) sets of node- and edge-set variables (which shall be written in uppercase typewriter font, e.g.\ $\mathtt{X},\mathtt{Y}$).
	
	A \emph{graph}\/ over a label alphabet $\mathcal{C} = \langle \mathcal{C}_V, \mathcal{C}_E \rangle$ is defined as a system $G=(V_G,E_G,s_G,t_G,l_G,m_G)$, where $V_G\subset\text{Vertex}$ and $E_G\subset\text{Edge}$ are finite sets of \emph{nodes} (or \emph{vertices}) and \emph{edges}, $s_G,t_G\colon E_G\rightarrow V_G$ are the \emph{source} and \emph{target} functions for edges, $l_G\colon V_G\to \mathcal{C}_V$ is the node labelling function and $m_G\colon E_G\to \mathcal{C}_E$ is the edge labelling function. The \emph{empty graph}, denoted by $\emptyset$, has empty node and edge sets. For simplicity, we fix the label alphabet throughout this paper as $\mathcal{L} = \langle \{\square\}, \{\square\} \rangle$, where $\square$ denotes the blank label (which we render as $\begin{graph}(0.3,0.1)(0,0)  \fillednodes \roundnode{A}(0.2,0.075) \end{graph}$ and $\begin{graph}(0.8,0.1)(0,0) \fillednodes \graphnodecolour{1} \graphnodesize{0}  \roundnode{A}(0.1,0.075)  \roundnode{B}(0.7,0.075)  \diredge{A}{B} 
	 \end{graph}$ in pictures). We note that our technical results hold for any fixed finite label alphabet.
	
	Given a graph $G$, the \emph{(directed) path predicate} $\text{path}_G\!: V_G \times V_G \times 2^{E_G} \rightarrow \mathbb{B}$ is defined inductively for nodes $v,w\in V_G$ and sets of edges $E\subseteq E_G$. If $v=w$, then $\text{path}_G(v,w,E)$ holds. If $v\neq w$, then $\text{path}_G(v,w,E)$ holds if there exists an edge $e\in E_G\setminus E$ such that $s_G(e) = v$ and $\text{path}_G(t_G(e),w,E)$.

	A \emph{graph morphism} $g\colon G\rightarrow H$ between graphs $G,H$ consists of two functions $g_V\colon V_G\rightarrow V_H$ and $g_E\colon E_G\rightarrow E_H$\/ that preserve sources, targets and labels; that is, $s_H\circ g_E=g_V\circ s_G$, $t_H\circ g_E=g_V\circ t_G$, $l_H \circ g_V = l_G$, and $m_H \circ g_E = m_G$. We call $G,H$ the \emph{domain} (resp.\ \emph{codomain}) of $g$. Morphism $g$ is an \emph{inclusion} if $g(x)=x$ for all nodes and edges $x$. It is \emph{injective}\/ (\emph{surjective}) if $g_V$ and $g_E$ are injective (surjective). It is an \emph{isomorphism} if it is both injective and surjective. In this case $G$ and $H$\/ are \emph{isomorphic}, which is denoted by $G\cong H$.

	\section{Expressing Monadic Second-Order Properties}\label{sec:expressing_mso_properties}
	
	We extend the nested conditions of \cite{Habel-Pennemann09a} to a formalism equivalently expressive to MSO logic on graphs. The idea is to introduce new quantifiers for node- and edge-set variables, and equip morphisms with constraints about set membership. The definition of satisfaction is then extended to require an interpretation of these variables in the graph such that the constraint evaluates to true. Furthermore, constraints can also make use of a predicate for explicitly expressing properties about directed paths. Such properties can of course be expressed in terms of MSO expressions, but the predicate is provided as a more compact alternative.

	\begin{definition}[Interpretation; interpretation constraint]\label{def:interpretation}\rm
		Given a graph $G$, an \emph{interpretation $I$ in $G$} is a partial function $I\!:\text{VSetVar}\cup\text{ESetVar}\rightarrow2^{V_G} \cup 2^{E_G}$, such that for all variables $\mathtt{X}$ on which it is defined, $I(\mathtt{X}) \in 2^{V_G}$ if $\mathtt{X}\in \text{VSetVar}$ (resp.\ $2^{E_G}$, ESetVar). An \emph{(interpretation) constraint} is a Boolean expression that can be derived from the syntactic category Constraint of the following grammar:

	\begin{center}
			\begin{tabular}{lcl}
		Constraint & ::= & Vertex$\ \text{'}\mathtt{\in}\text{'}\ $VSetVar  $\mid$ Edge$\ \text{'}\mathtt{\in}\text{'}\ $ESetVar\\
					& & $\mid$ \verb#path# '\verb#(#' Vertex '\verb#,#' Vertex ['\verb#,#' $\mathtt{not}$ Edge \{'\verb#|#' Edge\}] '\verb#)#'\\
					& & $\mid$ $\mathtt{not}$ Constraint $\mid$ Constraint ($\mathtt{and}$ $\mid$ $\mathtt{or}$) Constraint $\mid$ $\mathtt{true}$ \\
			\end{tabular}
	\end{center}

	Given a constraint $\gamma$, an interpretation $I$ in $G$, and a morphism $q$ with codomain $G$, the value of $\gamma^{I,q}$ in $\mathbb{B}$ is defined inductively. If $\gamma$ contains a set variable for which $I$ is undefined, then $\gamma^{I,q} = \text{false}$. Otherwise, if $\gamma$ is $\mathtt{true}$, then $\gamma^{I,q} = \text{true}$. If $\gamma$ has the form $x\in\mathtt{X}$ with $x$ a node or edge identifier and $\mathtt{X}$ a set variable, then $\gamma^{I,q}=\text{true}$ if $q(x) \in I(\mathtt{X})$. If $\gamma$ has the form $\mathtt{path}\mt{(}v\mt{,}w\mt{)}$ with $v,w$ node identifiers, then $\gamma^{I,q}=\text{true}$ if the predicate $\text{path}_G(q(v),q(w),\emptyset)$ holds. If $\gamma$ has the form $\mathtt{path}\mt{(}v\mt{,}w\mt{,}\mathtt{not}\ e_1\mt{|}\dots\mt{|}e_n\mt{)}$ with $v,w$ node identifiers and $e_1,\dots,e_n$ edge identifiers, then $\gamma^{I,q}=\text{true}$ if it is the case that the path predicate $\text{path}_G(q(v),q(w),\{q(e_1),\dots,q(e_n)\})$ holds. If $\gamma$ has the form $\mathtt{not}\ \gamma_1$ with $\gamma_1$ a constraint, then $\gamma^{I,q} = \text{true}$ if $\gamma_1^{I,q} = \text{false}$. If $\gamma$ has the form $\gamma_1\ \mathtt{and}\ \gamma_2$ (resp.\ $\gamma_1\ \mathtt{or}\ \gamma_2$) with $\gamma_1,\gamma_2$ constraints, then $\gamma^{I,q} = \text{true}$ if both (resp.\ at least one of) $\gamma_1^{I,q}$ and $\gamma_2^{I,q}$ evaluate(s) to true.
		\qed
	\end{definition}

	\begin{definition}[M-condition; M-constraint]\rm\label{defn:M-condition}
		An \emph{MSO condition} (short.\ \emph{M-condition}) over a graph $P$ is of the form $\mathtt{true}$, $\exists_\mathtt{V} \mathtt{X}\mt{[}c\mt{]}$, $\exists_\mathtt{E} \mathtt{X}\mt{[}c\mt{]}$, or $\exists (a \mid \gamma, c')$, where $\mathtt{X}\in\text{VSetVar}$ (resp.\ ESetVar), $c$ is an M-condition over $P$, $a\!: P \hookrightarrow C$ is an injective morphism (since we consider programs with injective matching), $\gamma$ is an interpretation constraint over items in $C$, and $c'$ is an M-condition over $C$. Furthermore, Boolean formulae over M-conditions over $P$ are also M-conditions over $P$; that is, $\neg c$, $c_1 \wedge c_2$, and $c_1 \vee c_2$ are M-conditions over $P$ if $c,c_1,c_2$ are M-conditions over~$P$.
		
		An M-condition over the empty graph $\emptyset$ in which all set variables are bound to quantifiers is called an $\emph{M-constraint}$.
		\qed
	\end{definition}

		For brevity, we write $\mathtt{false}$ for $\neg\mathtt{true}$, $c\Rightarrow d$ for $\neg c \vee d$, $c \Leftrightarrow d$ for $c\Rightarrow d \wedge d \Rightarrow c$, $\forall_\mathtt{V}\mathtt{X}\mt{[}c\mt{]}$ for $\neg\exists_\mathtt{V}\mathtt{X}\mt{[}\neg c\mt{]}$, $\forall_\mathtt{E}\mathtt{X}\mt{[}c\mt{]}$ for $\neg\exists_\mathtt{E}\mathtt{X}\mt{[}\neg c\mt{]}$, $\exists_\mathtt{V}\mathtt{X_1}\mt{,}\dots\mathtt{X_n}\mt{[}c\mt{]}$ for $\exists_\mathtt{V}\mathtt{X}_1\mt{[}\dots \exists_\mathtt{V}\mathtt{X_n}\mt{[}c\mt{]}\dots\mt{]}$ (analogous for other set quantifiers), $\exists (a \mid \gamma)$ for $\exists (a \mid \gamma,\mathtt{true})$, $\exists (a,c')$ for $\exists(a\mid\mathtt{true},c')$, and $\forall(a \mid \gamma, c')$ for $\neg \exists (a \mid \gamma, \neg c')$.
		
		In our examples, when the domain of a morphism $a\colon P \hookrightarrow C$ can unambiguously be inferred, we write only the codomain $C$. For instance, an M-constraint $\exists (\emptyset \hookrightarrow C, \exists (C \hookrightarrow C'))$ can be written as $\exists (C, \exists (C'))$.

	\begin{definition}[Satisfaction of M-conditions]\rm
		Let $p\!:P\hookrightarrow G$ denote an injective morphism, $c$ an M-condition over $P$, and $I$ an interpretation in $G$. We define inductively the meaning of $p \models^I c$, which denotes that $p$ \emph{satisfies} $c$ \emph{with respect to} $I$. If $c$ has the form $\mathtt{true}$, then $p \models^I c$. If $c$ has the form $\exists_\mathtt{V}\mathtt{X}\mt{[}c'\mt{]}$ (resp.\ $\exists_\mathtt{E}\mathtt{X}\mt{[}c'\mt{]}$), then $p \models^I c$ if $p \models^{I'} c'$, where $I' = I \cup \{\mathtt{X}\mapsto V\}$ for some $V \subseteq V_G$ (resp.\ $\{\mathtt{X}\mapsto E\}$ for some $E\subseteq E_G$). If $c$ has the form $\exists (a\!: P \hookrightarrow C \mid \gamma, c')$, then $p \models^I c$ if there is an injective morphism $q\!:C\hookrightarrow G$ such that $q\circ a = p$, $\gamma^{I,q} = \text{true}$, and $q \models^{I} c'$.
		
		A graph $G$ \emph{satisfies} an M-constraint $c$, denoted $G \models c$, if $i_G\!:\emptyset \hookrightarrow G \models^{I_\emptyset} c$, where $I_\emptyset$ is the \emph{empty interpretation in} $G$, i.e.\ undefined on all set variables.		
	\qed
	\end{definition}
	
	We remark that model checking for both first-order and monadic second-order logic is known to be PSPACE-complete \cite{Flum-Grohe06a}. However, the model checking problem for monadic second-order logic on graphs of bounded treewidth can be solved in linear time \cite{Courcelle90b}.

	\begin{example}\label{eg:2colouring}\rm
		The following M-constraint $col$ (translated from the corresponding formula \S 1.5 of \cite{Courcelle90a}) expresses that a graph is 2-colourable (or bipartite); i.e.\ every node can be assigned one of two colours such that no two adjacent nodes have the same one. Let $\gamma_\mathtt{col}$ denote $\mathtt{not\ }\mt{(}\mathtt{v\!\in\!X\ and\ w\!\in\!X}\mt{)}\mathtt{\ and\ not\ }\mt{(}\mathtt{v\!\in\!Y\ and\ w\!\in\!Y}\mt{)}$.\\
		
		\begin{tabular}{rll}
			\hspace{-7pt}\vspace{5pt}$\exists_\mathtt{V}\mathtt{X}\mt{,}\mathtt{Y}$ & $\mt{[}\ $&$\forall( \begin{graph}(0.5,0.3)(0,0) \fillednodes \roundnode{A}(0.2,0.075) \opaquetextfalse  \autonodetext{A}[se]{\vspace{4pt}\scriptsize $\mathtt{v}$} \end{graph},$  $\exists( \begin{graph}(0.5,0.3)(0,0) \fillednodes \roundnode{A}(0.2,0.075) \opaquetextfalse \autonodetext{A}[se]{\vspace{4pt}\scriptsize $\mathtt{v}$} \end{graph} \mid \mt{(}\mathtt{v\!\in\!X\ or\ v\!\in\!Y}\mt{)}\mathtt{\ and\ not\ }\mt{(}\mathtt{v\!\in\!X\ and\ v\!\in\!Y}\mt{)}))$ \\
			
			\vspace{5pt}&& $\wedge\ \forall( \begin{graph}(1,0.3)(0,0) \fillednodes \roundnode{A}(0.2,0.075) \opaquetextfalse  \autonodetext{A}[se]{\vspace{4pt}\scriptsize $\mathtt{v}$} \roundnode{B}(0.7,0.075) \opaquetextfalse  \autonodetext{B}[se]{\vspace{4pt}\scriptsize $\mathtt{w}$} \end{graph} , \exists(\begin{graph}(1.1,0.3)(0,0) \fillednodes  \roundnode{A}(0.2,0.075) \opaquetextfalse \autonodetext{A}[se]{\vspace{4pt}\scriptsize $\mathtt{v}$} \roundnode{B}(0.8,0.075) \opaquetextfalse \autonodetext{B}[se]{\vspace{4pt}\scriptsize $\mathtt{w}$} \diredge{A}{B} 
			 \end{graph}) \Rightarrow \exists( \begin{graph}(1,0.3)(0,0) \fillednodes \roundnode{A}(0.2,0.075) \opaquetextfalse  \autonodetext{A}[se]{\vspace{4pt}\scriptsize $\mathtt{v}$} \roundnode{B}(0.7,0.075) \opaquetextfalse  \autonodetext{B}[se]{\vspace{4pt}\scriptsize $\mathtt{w}$} \end{graph} \mid \gamma_\mathtt{col}))\ \mt{]}$ \\
			
		\end{tabular}\\
				
		\noindent A graph $G$ will satisfy $col$ if there exist two subsets of $V_G$ such that: (1) every node in $G$ belongs to \emph{exactly one} of the two sets; and (2) if there is an edge from one node to another, then those nodes are not in the same set. Intuitively, one can think of the sets $\mathtt{X}$ and $\mathtt{Y}$ as respectively denoting the nodes of colour one and colour two. If two such sets do not exist, then the graph cannot be assigned a 2-colouring.
		\qed
	\end{example}
	
	\begin{theorem}[M-constraints are equivalent to MSO formulae]\label{thm:formulae_equiv}\rm
		The assertion languages of M-constraints and MSO graph formulae are equivalently expressive: that is, given an M-constraint $c$, there exists an MSO graph formula $\varphi$ such that for all graphs $G$, $G\models c$ if and only if $G \models \varphi$; and vice versa.
		\qed
	\end{theorem}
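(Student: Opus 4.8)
The plan is to establish the equivalence by exhibiting, in each direction, a recursive translation and proving its correctness by induction on the structure of the (M-)formula. Throughout I fix the variant of MSO logic on graphs of \cite{Courcelle-Engelfriet12a} with first-order variables for nodes and for edges, monadic second-order variables for node sets and for edge sets, atomic formulae for incidence ($s(e)=x$, $t(e)=x$), equality of nodes, equality of edges, and set membership (label predicates are vacuous since $\mathcal{L}$ is a singleton, but nothing changes for a fixed finite alphabet). Two observations make the correspondence natural. First, the node- and edge-set quantifiers $\exists_{\mathtt{V}}$, $\exists_{\mathtt{E}}$ of M-conditions are in evident bijection with the set quantifiers of MSO. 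Second, the $\mathtt{path}$ predicate is itself MSO-definable, so it contributes no expressive power and is handled purely as syntactic sugar. Hence the essential content is the match between the existential morphism quantifier $\exists(a\mid\gamma,c')$ and nested first-order quantification over nodes and edges, which is the monadic extension of the known equivalence between ordinary nested conditions and first-order logic due to Habel and Pennemann \cite{Habel-Pennemann09a}, adapted to also track edge identifiers.

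For the direction from M-constraints to MSO formulae I define a translation $\mathrm{mso}$ by recursion on an M-condition $c$ over a pattern graph $P$, carrying along (i) an injective assignment of the nodes and edges of $P$ to distinct free first-order variables and (ii) a list of free set variables for the set variables currently in scope. The Boolean cases and $\mathtt{true}$ are homomorphic; $\exists_{\mathtt{V}}\mathtt{X}[c']$ and $\exists_{\mathtt{E}}\mathtt{X}[c']$ become $\exists X.\,\mathrm{mso}(c')$ over a fresh node-set, resp.\ edge-set, variable; and $\exists(a\colon P\hookrightarrow C\mid\gamma,c')$ becomes $\exists\bar z.\,(\delta_C\wedge\mathrm{tr}(\gamma)\wedge\mathrm{mso}(c'))$, where $\bar z$ are fresh first-order variables naming the items of $C$ outside the image of $a$, and $\delta_C$ is the ``diagram'' formula stating that these items together with those already named realise exactly the incidence structure of $C$ and are pairwise distinct (this last clause encodes the injectivity of the witnessing $q$). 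The constraint translation $\mathrm{tr}$ sends membership atoms to membership atoms, and a path atom $\mathtt{path}(v,w,\mathtt{not}\ e_1\mid\dots\mid e_n)$ to the standard reachability formula: $v=w$, or every set $U$ of nodes containing $v$ and closed under following edges $e\notin\{e_1,\dots,e_n\}$ (i.e.\ $s(e)\in U$ implies $t(e)\in U$) also contains $w$ (with $\mathtt{path}(v,w)$ the case $n=0$). A routine induction on $c$ shows $p\models^I c$ iff $\mathrm{mso}(c)$ holds in $G$ under the assignment read off from $p$ and $I$; taking $P=\emptyset$ yields, for an M-constraint, an equivalent MSO sentence.

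For the converse I first put the given MSO formula in a normal form whose atoms are among $s(e)=x$, $t(e)=x$, $x=y$, $e=f$, $x\in X$, $e\in E$, with bound variables renamed so that each quantifier introduces a genuinely new element. I then translate by induction on the formula relative to an \emph{environment}: a finite graph $P$ together with an assignment of the free node and edge variables onto the items of $P$, subject to $P$ being generated by the named items and the (necessarily present) endpoints of the named edges. Because an M-condition's witnessing morphism is injective, coincidences between variables, and between an edge's endpoint and a named node, cannot be collapsed silently, so the translation of a formula is a finite disjunction indexed by the environments compatible with its free variables; within each environment an atom such as $x=y$ or $s(e)=x$ is decided outright (becoming $\mathtt{true}$ or $\mathtt{false}$) while membership atoms become the corresponding interpretation constraints. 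First-order quantifiers are handled with the morphism quantifier: $\exists x.\,\varphi$ becomes $\exists(P\hookrightarrow P',\dots)$ where $P'$ adds one node, and $\exists e.\,\varphi$ becomes a finite disjunction of $\exists(P\hookrightarrow P',\dots)$ over the possible shapes of the new edge, each of its two endpoints being some existing node of $P$ or a freshly added node; any identification of the new item with an existing one is accounted for at the level of environments. Set quantifiers $\exists X.\,\varphi$ and $\exists E.\,\varphi$ become $\exists_{\mathtt{V}}\mathtt{X}[\dots]$ and $\exists_{\mathtt{E}}\mathtt{X}[\dots]$; Boolean connectives are homomorphic, distributed over the environment-disjunction. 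A sentence has no free variables, so the only environment is $P=\emptyset$, and the construction yields an M-constraint; correctness follows by induction, the environment-disjunction guaranteeing that every configuration of a model is matched by exactly one disjunct.

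I expect the bookkeeping in this second direction to be the main obstacle: one must enumerate environments and edge-shapes so that the disjunctions are genuinely exhaustive and mutually exclusive, reconcile the injective matching of M-conditions with the unrestricted valuations of MSO variables, and thread the anonymous endpoint nodes introduced by $\exists e$ correctly through all deeper levels of nesting. By contrast, the M-constraint-to-MSO direction is essentially mechanical once the MSO reachability encoding of $\mathtt{path}$ is in place, and the set-quantifier correspondence is immediate in both directions.
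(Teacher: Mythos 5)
Your proposal is correct and follows essentially the same route as the paper's proof in Appendix A: two structurally inductive translations, with MSO-to-M-constraints handled by a disjunction over all ways the newly quantified node or edge can be added to or identified with items of the current pattern graph (the paper's $\text{VMerge}/\text{EMerge}$ sets play the role of your ``environments'' and ``edge-shapes''), M-constraints-to-MSO handled by existential blocks with distinctness clauses encoding injectivity, set quantifiers mapped directly in both directions, and the $\mathtt{path}$ predicate eliminated via the standard MSO reachability encoding. The only organizational difference is that the paper first normalises M-conditions (decomposing each morphism into single-item extensions and removing path predicates within the M-condition language) before translating, whereas you fold both steps into the translation itself; this changes nothing essential.
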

	
	\begin{proof}
		See Appendix \ref{app:proofs:expressiveness}.
	\end{proof}

		\section{Graph Programs}\label{sec:graph_programs}

		In this section we define rules, rule application, and graph programs. Whilst the syntax and semantics of the control constructs are based on those of GP 2 \cite{Plump12a}, the rules themselves follow \cite{Habel-Pennemann-Rensink06a,Pennemann09a}, i.e.\ are labelled over a fixed finite alphabet, and do not support relabelling or expressions. We equip the rules with application conditions (M-conditions over the left- and right-hand graphs), and define \emph{rule application} via the standard double-pushout construction \cite{Ehrig-Ehrig-Prange-Taentzer06a}.
		
		\begin{definition}[Rule; direct derivation]\label{def:rule}\rm
			A \emph{plain rule} $r' = \langle L \hookleftarrow K \hookrightarrow R \rangle$ comprises two inclusions $K\hookrightarrow L$, $K\hookrightarrow R$. We call $L,R$ the left- (resp.\ right-) hand graph and $K$ the interface. An \emph{application condition} $\text{ac} = \langle \text{ac}_L,\text{ac}_R \rangle$ for $r'$ consists of two M-conditions over $L$ and $R$ respectively. A \emph{rule} $r = \langle r', \text{ac} \rangle$ is a plain rule $r'$ and an application condition $\text{ac}$ for $r'$.
			\[\begin{graph}(4,1)
			\opaquetextfalse
			\graphlinecolour{1}\graphlinewidth{.01}

			\textnode L(1,1){$L$}
			\textnode K(2,1){$K$}
			\textnode R(3,1){$R$}

			\textnode Y(1,0){$G$}
			\textnode Z(2,0){$D$}
			\textnode X(3,0){$H$}

			\freetext(1.5,0.5){$(1)$}
			\freetext(2.5,0.5){$(2)$}

			\newcommand{\arr}[3]{ \diredge{#1}{#2}[\graphlinecolour{0}]\bowtext{#1}{#2}{-.12}{#3}[] }
			\newcommand{\arrr}[3]{ \diredge{#1}{#2}[\graphlinecolour{0}]\bowtext{#1}{#2}{.12}{#3}[] }

			\arr{K}{L}{}
			\arr{K}{R}{}
			\arr{Z}{Y}{}
			\arr{Z}{X}{}
			\arr{L}{Y}{$g$}
			\arr{K}{Z}{}
			\arrr{R}{X}{$h$}
			\end{graph}\]
			
			For a plain rule $r'$ and a morphism $K\hookrightarrow D$, a \emph{direct derivation} $G\Rightarrow_{r',g,h} H$ (short.\ $G\Rightarrow_{r'} H$ or $G\Rightarrow H$) is given by the pushouts $(1)$ and $(2)$. For a rule $r = \langle r', \text{ac} \rangle$, there is a \emph{direct derivation} $G\Rightarrow_{r,g,h} H$ if $G\Rightarrow_{r',g,h} H$, $g \models^{I_\emptyset} \text{ac}_L$, and $h \models^{I_\emptyset} \text{ac}_R$. We call $g,h$ a \emph{match} (resp.\ \emph{comatch}) for $r$. Given a set of rules $\R$, we write $G\Rightarrow_\R H$ if $G\Rightarrow_{r,g,h} H$ for some $r\in \R$.
			\qed
		\end{definition}
		
		It is known that, given a (plain) rule $r$, graph $G$, and morphism $g$ as above, there exists a direct derivation if and only if $g$ satisfies the \emph{dangling condition}, i.e.\ that no node in $g(L)\setminus g(K)$ is incident to an edge in $G\setminus g(L)$. In this case, $D$ and $H$ are determined uniquely up to isomorphism, constructed from $G$ as follows: first, remove all edges in $g(L)\setminus g(K)$ obtaining $D$. Then add disjointly all nodes and edges from $R\setminus K$ retaining their labels. For $e\in E_R\setminus E_K$, $s_H(e) = s_R(e)$ if $s_R(e) \in V_R\setminus V_K$, otherwise $s_H(e) = g_V(s_R(e))$, (targets defined analogously) resulting in the graph $H$.

		We will often give rules without the interface, writing just $L\Rightarrow R$. In such cases we number nodes that correspond in $L$ and $R$, and establish the convention that $K$ comprises exactly these nodes and that $E_K = \emptyset$ (i.e.\ $K$ can be completely inferred from $L,R$). Furthermore, if the application condition of a rule is $\langle \mathtt{true},\mathtt{true} \rangle$, then we will only write the plain rule component.
		
		We consider now the syntax and semantics of graph programs, which provide a mechanism to control the application of rules to some graph provided as input.

			\begin{definition}[Graph program]\label{def:graph_program}\rm
				\emph{(Graph) programs} are defined inductively. First, every rule (resp.\ rule set) $r,\R$ and $\tt skip$ are programs. Given programs $C,P,Q$, we have that $P;Q$, $P!$, $\mathtt{if}\ C\ \mathtt{then}\ P\ \mathtt{else}\ Q$, and $\mathtt{try}\ C\ \mathtt{then}\ P\ \mathtt{else}\ Q$ are programs.
				\qed
			\end{definition}

			Graph programs are \emph{nondeterministic}, and their execution on a particular graph could result in one of several possible outcomes. That outcome could be a graph, or it could be the special state ``fail'' which occurs when a rule (set) is not \emph{applicable} to the current graph.

			An operational semantics for programs is given in Appendix \ref{app:semantics}, but the informal meaning of the constructs is as follows. Let $G$ denote an input graph. Programs $r, \R$ correspond to rule (resp.\ rule set) application, returning $H$ if there exists some $G\Rightarrow_r H$ (resp.\ $G\Rightarrow_\R H$); otherwise fail. Program $P;Q$ denotes sequential composition. Program $P!$ denotes as-long-as-possible iteration of $P$. Finally, the conditional programs execute the first or second branch depending on whether executing $C$ returns a graph or fail, with the distinction that the $\tt if$ construct does not retain any effects of $C$, whereas the $\tt try$ construct does.

	\begin{example}\label{eg:grow_tree}\rm
		Consider the program $\tt init;\ grow!$ defined by the rules:

				\begin{center}
				\fbox{%
			$\begin{array}{l c l}
			\mathtt{init:}  & \hspace{0.25in} & \mathtt{grow:}
			\\
		    \begin{graph}(.5,.5)
		     \freetext(0.5,0.2){$\emptyset$}
		    \end{graph}
			    \begin{graph}(.5,.5)
			     \freetext(0.45,0.2){$\dder$}
			    \end{graph}
		    \begin{graph}(0.5,.5)
			\fillednodes
		     \roundnode{A}(0.4,0.2)

		    \end{graph}

		&&

		    \begin{graph}(0.4,.5)
			\fillednodes
		     \roundnode{A}(0.4,0.2)
		     \autonodetext{A}[s]{\tiny 1}

		    \end{graph}
		    \begin{graph}(.5,.2)
		     \freetext(0.5,0.2){$\dder$}
		    \end{graph}
	    \begin{graph}(1.4,.5)
		\fillednodes
	      \roundnode{A}(0.4,0.2)
	      \autonodetext{A}[s]{\tiny 1}

	      \roundnode{B}(1.2,0.2)

			\diredge{A}{B}
	    \end{graph}

		\\

				&& \text{ac}_L = \neg tc

			\end{array}$

			}
		\end{center}

\noindent	where $tc$ is an (unspecified) M-condition over $L$ expressing some termination condition for the iteration (proving termination is not our concern here, see e.g.\ \cite{Poskitt-Plump13a}). The program, if executed on the empty graph, nondeterministically constructs and returns a tree. It applies the rule $\tt init$ exactly once, creating an isolated node. It then iteratively applies the rule $\tt grow$ (each application adding a leaf to the tree) until the termination condition $tc$ holds. An example program run, with $tc = \exists(\begin{graph}(0.9,0.3)(0,0) \fillednodes \roundnode{A}(0.1,0.075) \opaquetextfalse  \autonodetext{A}[se]{\vspace{4pt}\scriptsize $\mathtt{1}$} \roundnode{B}(0.4,0.075) \roundnode{C}(0.6,0.075) \roundnode{D}(0.8,0.075)  \end{graph})$, is:
\[ \emptyset \Rightarrow \begin{graph}(0.2,0.3)(0,0) \fillednodes \roundnode{A}(0.1,0.075) \end{graph} \Rightarrow \begin{graph}(0.7,0.3)(0,0) \fillednodes \roundnode{A}(0.1,0.075) \roundnode{B}(0.6,0.075) \diredge{A}{B} \end{graph} \Rightarrow \begin{graph}(0.7,0.3)(0,0) \fillednodes \roundnode{A}(0.1,0.075) \roundnode{B}(0.6,0.075) \roundnode{C}(0.45,0.3) \diredge{A}{B} \diredge{A}{C} \end{graph} \Rightarrow \begin{graph}(1.2,0.3)(0,0) \fillednodes \roundnode{A}(0.1,0.075) \roundnode{B}(0.6,0.075) \roundnode{D}(1.1,0.075) \roundnode{C}(0.45,0.3) \diredge{A}{B} \diredge{A}{C} \diredge{B}{D} \end{graph} \]
	\qed
	\end{example}

\section{Constructing a Weakest Liberal Precondition}\label{sec:constructing_wlps}
	
In this section, we present a construction for the \emph{weakest liberal precondition} relative to a rule $r$ and a postcondition $c$ (which is an M-constraint). In our terminology, if a graph satisfies a weakest liberal precondition, then: (1) any graphs resulting from applications of $r$ will satisfy $c$; and (2) there does not exist another M-constraint with this property that is weaker. (Note that we do not address termination or existence of results in this paper.)

The construction is adapted from the one for nested conditions in \cite{Habel-Pennemann09a}, and as before, is broken down into a number of stages. First, a translation of postconditions into M-conditions over $R$ (transformation ``A''); then, from M-conditions over $R$ into M-conditions over $L$ (transformation ``L''); and finally, from M-conditions over $L$ into an M-constraint expressing the weakest liberal precondition (via transformations ``App'' and ``Pre'').

First, we consider transformation A, which constructs an M-condition over $R$ from a postcondition (an M-constraint) by computing a disjunction over all the ways that the M-constraint and comatches might ``overlap''.

\begin{theorem}[M-constraints to M-conditions over $R$]\label{thm:A}\rm
	There is a transformation A, such that for all M-constraints $c$, all rules $r$ with right-hand side $R$, and all injective morphisms $h\!: R\hookrightarrow H$,
	\[ h \models^{I_\emptyset} \text{A}(r,c)\ \ \text{if and only if}\ \ H \models c. \]
\end{theorem}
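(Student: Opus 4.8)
The plan is to construct $\text{A}(r,c)$ by recursion on the structure of the M-constraint $c$, mirroring the ``Shift'' construction for nested conditions in Habel--Pennemann but now tracking two extra ingredients: the set quantifiers $\exists_\mathtt{V}, \exists_\mathtt{E}$, and the interpretation that witnesses them. The key idea, exactly as in the first-order case, is that since $c$ is a condition over the empty graph $\emptyset$, to say $H \models c$ via $h\colon R \hookrightarrow H$ we must account for all possible ways in which the graph structure quantified inside $c$ can overlap with the image $h(R)$. Concretely, I would define $\text{A}(r,c) = \text{A}(R \colon \emptyset \hookrightarrow R, c)$ for an auxiliary operator $\text{A}(a\colon P \hookrightarrow P', c)$ that shifts an M-condition $c$ over $P$ along $a$ to an M-condition over $P'$.

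The recursion would run as follows. For $c = \mathtt{true}$, return $\mathtt{true}$; Boolean connectives commute with the transformation. For $c = \exists_\mathtt{V}\mathtt{X}[c_0]$ (and symmetrically $\exists_\mathtt{E}$), set $\text{A}(a, c) = \exists_\mathtt{V}\mathtt{X}[\text{A}(a, c_0)]$ — the set variable simply stays bound at the front, since a subset of $V_H$ restricts appropriately and the constraint semantics only ever queries $I(\mathtt{X})$ on images of node/edge identifiers. The interesting case is $c = \exists(b\colon P \hookrightarrow C \mid \gamma, c')$: here I would take a disjunction, over all pairs of jointly surjective injective morphisms $(C \hookrightarrow E \hookleftarrow P')$ that commute with $b$ and $a$ in the evident sense (i.e.\ over the ``overlaps'' of $C$ and $P'$ identifying the shared copy of $P$), of conditions $\exists(P' \hookrightarrow E \mid \gamma', \text{A}(C \hookrightarrow E, c'))$, where $\gamma'$ is $\gamma$ with each identifier $x$ of $C$ renamed to its image in $E$. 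The finiteness of this disjunction (up to isomorphism of $E$) is what makes $\text{A}(r,c)$ a well-defined M-condition. The constraint $\gamma$ needs a moment's care since it may mention the $\mathtt{path}$ predicate, but $\gamma$ only refers to node/edge identifiers of $C$, so transporting it along $C \hookrightarrow E$ is purely a renaming; semantically $\text{path}_G(\cdot)$ is evaluated in the ambient host graph in either case, so the renaming is sound.

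The correctness proof is then an induction on the structure of $c$, with the $\exists(b \mid \gamma, c')$ case being the crux. For the ``if'' direction, given $h \models^{I_\emptyset} \text{A}(r,c)$ one must recover a satisfying interpretation and morphisms showing $H \models c$: a witness for one of the disjuncts supplies a morphism $E \hookrightarrow H$, whose restrictions along $C \hookrightarrow E$ and $P' \hookrightarrow E$ give the required $q\colon C \hookrightarrow H$ and recover $h$; jointly-surjective-ness guarantees we have not missed any overlap pattern. The set-quantifier cases need the observation that an interpretation chosen over $V_H$ (resp.\ $E_H$) is exactly what the outer condition needs, and that $\gamma^{I,q}$ depends on $q$ and $I$ only through the finitely many identifiers occurring in $\gamma$ — so the renaming of $\gamma$ along $C \hookrightarrow E$ precisely matches the change of witnessing morphism. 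For the ``only if'' direction, given $H \models c$ with witnessing interpretation and morphism $q\colon C \hookrightarrow H$, one forms $E$ as (an isomorphic copy of) the union $q(C) \cup h(R)$ inside $H$, which is one of the overlaps in the disjunction, and checks the transported constraint holds and that the residual condition is satisfied by induction.

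The main obstacle I anticipate is bookkeeping rather than conceptual: ensuring that the interpretation $I$ threads correctly through the recursion in the presence of overlaps. Because $\gamma$ in $\exists(b \mid \gamma, c')$ lives over $C$ and gets pulled along $C \hookrightarrow E$ while the set variables are bound way outside at the top level, I must verify that renaming the finitely many identifiers of $\gamma$ to their $E$-images yields $\gamma'$ with $(\gamma')^{I, E\hookrightarrow H} = \gamma^{I, q}$ for the corresponding $q$ — i.e.\ that membership and path facts are invariant under the identification. This is where the restriction to injective matching and the fact that constraints only mention concrete identifiers (never set-valued terms beyond the bound variables) does the work; it should be stated as an auxiliary lemma and then applied uniformly. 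The rest — commutation with Booleans, passing set quantifiers through, the finiteness of the overlap disjunction — follows the established pattern for nested conditions with only cosmetic changes.
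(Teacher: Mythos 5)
Your proposal is correct and follows essentially the same route as the paper: the paper likewise defines $\text{A}(r,c)$ via an auxiliary shift $\text{A}'(p,c)$ over arbitrary injective morphisms, passes the set quantifiers through unchanged, takes a finite disjunction over overlaps (presented there as surjective quotients $e$ of the pushout of $p$ and $a$ whose induced legs $b,s$ are injective --- equivalent to your jointly surjective cospans), and proves a generalised shifting lemma (Lemma~\ref{lemma:shifting}, over arbitrary interpretations $I$ and arbitrary $p''\colon P'\hookrightarrow H$) by structural induction, using exactly your key observation that $\gamma$ mentions only items of $C$ and hence evaluates identically under $q''$ and $q''\circ s$. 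The theorem is then obtained as the instance $p = i_R\colon \emptyset\hookrightarrow R$ with $I = I_\emptyset$, just as you describe.
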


\noindent	\emph{Construction.} Let $c$ denote an M-constraint, and $r$ a rule with right-hand side $R$. We define $\text{A}(r,c) = \text{A}'(\emptyset\hookrightarrow R,c)$ where $\text{A}'$ is defined inductively as follows. For injective graph morphisms $p\!: P\hookrightarrow P'$ and M-conditions over $P$, define:

	\begin{eqnarray*}
	\text{A}'(p,\mathtt{true}) &=& \mathtt{true}, \\
	\text{A}'(p,\exists_\mathtt{V}\mathtt{X}\mt{[}c'\mt{]}) &=& \exists_\mathtt{V}\mathtt{X}\mt{[}\text{A}'(p,c')\mt{]}, \\
	\text{A}'(p,\exists_\mathtt{E}\mathtt{X}\mt{[}c'\mt{]}) &=& \exists_\mathtt{E}\mathtt{X}\mt{[}\text{A}'(p,c')\mt{]}, \\
	\text{A}'(p,\exists(a\!:P\hookrightarrow C\mid\gamma,c')) &=& \textstyle{\bigvee}_{e \in \varepsilon} \exists (b\!:P'\hookrightarrow E \mid\gamma, \text{A}'(s\!:C\hookrightarrow E,c')).
	\end{eqnarray*}\\

	\noindent	The final equation relies on the following. First, construct the pushout $(1)$ of $p$ and $a$ leading to injective graph morphisms $a'\!: P' \hookrightarrow C'$ and $q\!: C \hookrightarrow C'$.

\begin{wrapfigure}[10]{r}{0.35\textwidth}
	\centering
		\vspace{-25pt}
		\includegraphics[width=0.3\textwidth]{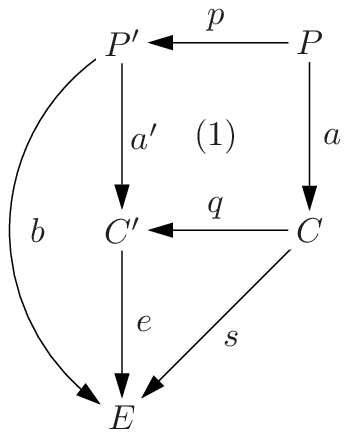}
\end{wrapfigure}
	\noindent The disjunction then ranges over the set $\varepsilon$, which we define to contain every surjective graph morphism $e\!: C' \rightarrow E$ such that $b = e \circ a'$ and $s = e \circ q$ are injective graph morphisms (we consider the codomains of each $e$ up to isomorphism, hence the disjunction is finite).

	The transformations $\text{A},\text{A}'$ are extended for Boolean formulae over M-conditions in the usual way, that is, $\text{A}(r,\neg c) = \neg\text{A}(r,c)$, $\text{A}(r,c_1\wedge c_2) = \text{A}(r,c_1)\wedge\text{A}(r,c_2)$, and $\text{A}(r,c_1\vee c_2) = \text{A}(r,c_1)\vee\text{A}(r,c_2)$ (analogous for $\text{A}'$).
	\qed

\begin{example}\label{eg:grow_A_new}\rm
	Recall the rule $\tt grow$ from Example \ref{eg:grow_tree}. Let $c$ denote the M-constraint:
	\[ \exists_\mathtt{V}\mathtt{X}\mt{,}\mathtt{Y} \mt{[}\ \forall (\begin{graph}(0.8,0.3)(0,0) \fillednodes \roundnode{A}(0.2,0.075) \opaquetextfalse  \autonodetext{A}[se]{\vspace{4pt}\scriptsize $\mathtt{v}$} \roundnode{B}(0.5,0.075) \opaquetextfalse  \autonodetext{B}[se]{\vspace{4pt}\scriptsize $\mathtt{w}$} \end{graph} , \exists(\begin{graph}(0.8,0.3)(0,0) \fillednodes \roundnode{A}(0.2,0.075) \opaquetextfalse  \autonodetext{A}[se]{\vspace{4pt}\scriptsize $\mathtt{v}$} \roundnode{B}(0.5,0.075) \opaquetextfalse  \autonodetext{B}[se]{\vspace{4pt}\scriptsize $\mathtt{w}$} \end{graph} \mid \mathtt{path}\mt{(}\mathtt{v,w}\mt{)}) \Rightarrow \exists (\begin{graph}(0.8,0.3)(0,0) \fillednodes \roundnode{A}(0.2,0.075) \opaquetextfalse  \autonodetext{A}[se]{\vspace{4pt}\scriptsize $\mathtt{v}$} \roundnode{B}(0.5,0.075) \opaquetextfalse  \autonodetext{B}[se]{\vspace{4pt}\scriptsize $\mathtt{w}$} \end{graph} \mid \gamma))    \ \mt{]} \]
	
	\noindent for $\gamma = \mt{(}\mathtt{v\in X\ and\ w\in Y}\mt{)}\ \mathtt{and\ not\ }\mt{(}\mathtt{v\in Y\ or\ w\in X}\mt{)}$, which expresses that there are two sets of nodes $X,Y$ in the graph, such that if there is a path from some node $v$ to some node $w$, then $v$ belongs only to $X$ and $w$ only to $Y$. Applying transformation A:

	\begin{center}
		\begin{tabular}{r c l}
			\multicolumn{3}{l}{$\text{A}(\mathtt{grow},c)$} \\

			&$=$& \vspace{5pt}$\text{A}'(\emptyset\hookrightarrow \begin{graph}(1.1,0.3)(0,0) \fillednodes  \roundnode{A}(0.2,0.075) \opaquetextfalse \autonodetext{A}[se]{\vspace{4pt}\scriptsize $\mathtt{1}$} \roundnode{B}(0.8,0.075) \opaquetextfalse \autonodetext{B}[se]{\vspace{4pt}\scriptsize $\mathtt{2}$} \diredge{A}{B} 
			 \end{graph},c)$\\
			
			&$=$& $\exists_\mathtt{V}\mathtt{X}\mt{,}\mathtt{Y}\mt{[}\ \text{A}'(\emptyset\hookrightarrow \begin{graph}(1.1,0.3)(0,0) \fillednodes  \roundnode{A}(0.2,0.075) \opaquetextfalse \autonodetext{A}[se]{\vspace{4pt}\scriptsize $\mathtt{1}$} \roundnode{B}(0.8,0.075) \opaquetextfalse \autonodetext{B}[se]{\vspace{4pt}\scriptsize $\mathtt{2}$} \diredge{A}{B} 
			 \end{graph},\forall (\begin{graph}(0.8,0.3)(0,0) \fillednodes \roundnode{A}(0.2,0.075) \opaquetextfalse  \autonodetext{A}[se]{\vspace{4pt}\scriptsize $\mathtt{v}$} \roundnode{B}(0.5,0.075) \opaquetextfalse  \autonodetext{B}[se]{\vspace{4pt}\scriptsize $\mathtt{w}$} \end{graph} ,$\\
			
			&& \vspace{5pt}\hspace{0.125in}$\exists(\begin{graph}(0.8,0.3)(0,0) \fillednodes \roundnode{A}(0.2,0.075) \opaquetextfalse  \autonodetext{A}[se]{\vspace{4pt}\scriptsize $\mathtt{v}$} \roundnode{B}(0.5,0.075) \opaquetextfalse  \autonodetext{B}[se]{\vspace{4pt}\scriptsize $\mathtt{w}$} \end{graph} \mid \mathtt{path}\mt{(}\mathtt{v,w}\mt{)}) \Rightarrow \exists (\begin{graph}(0.8,0.3)(0,0) \fillednodes \roundnode{A}(0.2,0.075) \opaquetextfalse  \autonodetext{A}[se]{\vspace{4pt}\scriptsize $\mathtt{v}$} \roundnode{B}(0.5,0.075) \opaquetextfalse  \autonodetext{B}[se]{\vspace{4pt}\scriptsize $\mathtt{w}$} \end{graph} \mid \gamma)))\ \mt{]}$\\

			&$=$& $\exists_\mathtt{V}\mathtt{X}\mt{,}\mathtt{Y}\mt{[}\ \bigwedge_{i=1}^7\forall (\begin{graph}(1.1,0.3)(0,0) \fillednodes  \roundnode{A}(0.2,0.075) \opaquetextfalse \autonodetext{A}[se]{\vspace{4pt}\scriptsize $\mathtt{1}$} \roundnode{B}(0.8,0.075) \opaquetextfalse \autonodetext{B}[se]{\vspace{4pt}\scriptsize $\mathtt{2}$} \diredge{A}{B} 
			 \end{graph}\hookrightarrow E_i , \exists(E_i \mid \mathtt{path}\mt{(}\mathtt{v,w}\mt{)}) \Rightarrow \exists (E_i \mid \gamma))\ \mt{]}$\\
		\end{tabular}
	\end{center}
	
	\noindent where the graphs $E_i$ are as given in Figure \ref{fig:eg-A-L}.
	\qed
\end{example}

\begin{figure}[htb]
	\centering
	\includegraphics[width=0.95\textwidth]{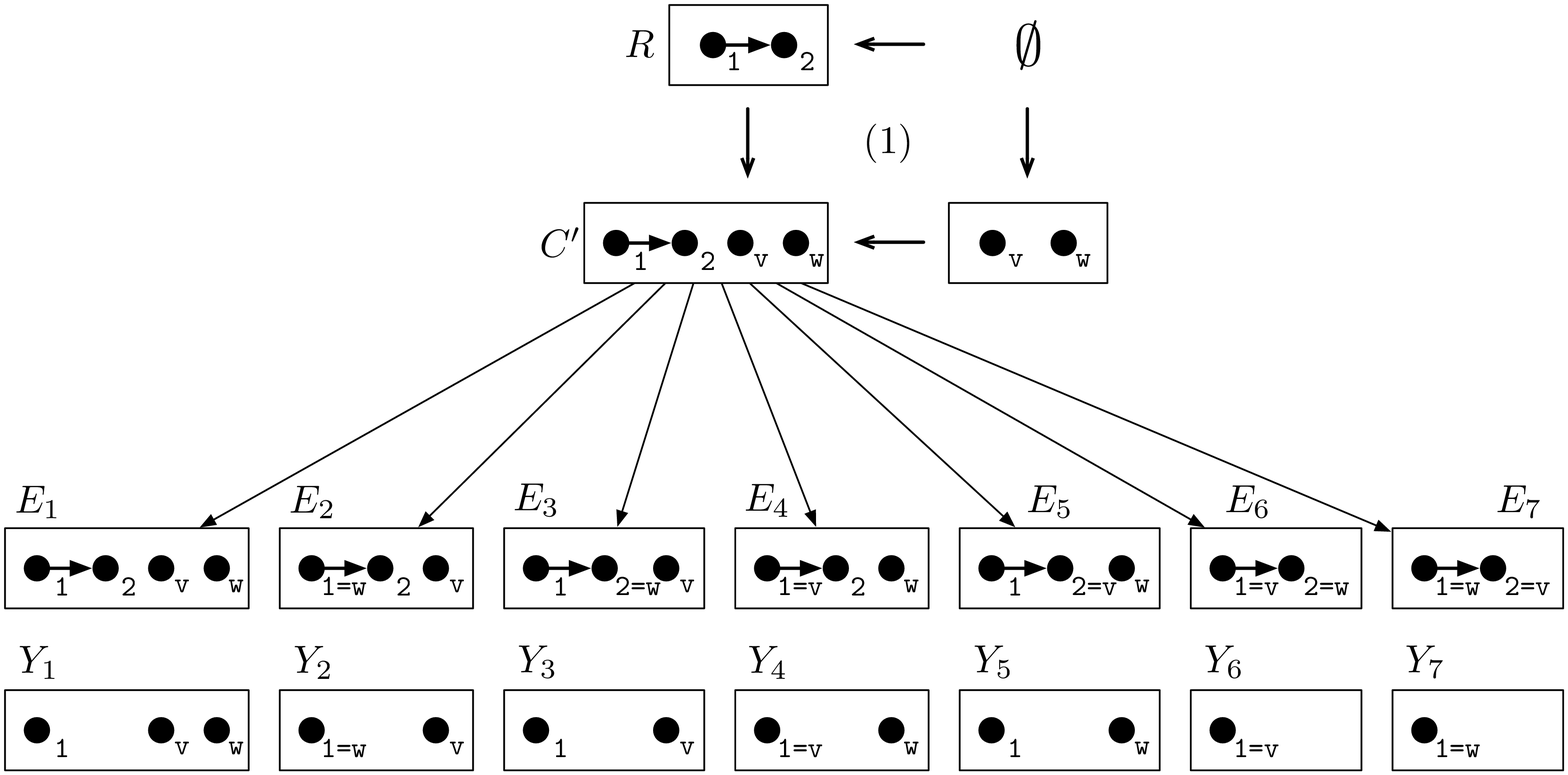}
	\caption{Applying the construction in Examples \ref{eg:grow_A_new} and \ref{eg:grow_L_new}}\label{fig:eg-A-L}
\end{figure}

In order to prove the statement of Theorem \ref{thm:A}, we first prove a more general lemma stating that an M-condition over $P$ can be shifted over a morphism $p$ with domain $P$. It is a generalised version of Lemma 3 in \cite{Habel-Pennemann09a}, but the proof is almost identical as interpretation constraints are not manipulated by this transformation, and both sides of the statement are interpreted in the same graph.

\begin{lemma}[Shifting M-conditions over morphisms]\rm\label{lemma:shifting}
	For all M-conditions $c$ over $P$, all interpretations $I$ in $H$, and all injective morphisms $p\!: P\hookrightarrow P', p''\!: P'\hookrightarrow H$, we have:
	\[ p'' \models^{I} \text{A}'(p,c)\ \ \text{if and only if}\ \ p'' \circ p \models^{I} c. \]
	\qed
\end{lemma}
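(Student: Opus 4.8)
The plan is to prove the statement by structural induction on the M-condition $c$ over $P$, following closely the proof of Lemma 3 in Habel and Pennemann \cite{Habel-Pennemann09a}, and handling the new syntactic cases (the set quantifiers $\exists_\mathtt{V}$, $\exists_\mathtt{E}$ and the interpretation constraints $\gamma$) by essentially routine bookkeeping. The key observation to exploit throughout is that $\text{A}'$ never touches the set variables or the constraints $\gamma$: it only refactors the morphism part of each nested $\exists$. Hence the interpretation $I$ is carried along unchanged on both sides, and the equivalence reduces to the classical shift lemma for (plain) nested conditions together with a check that constraint satisfaction $\gamma^{I,q}$ is preserved when one replaces a comatch by a suitably composed morphism.

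First I would set up the induction. The base case $c = \mathtt{true}$ is immediate since $\text{A}'(p,\mathtt{true}) = \mathtt{true}$ and both sides hold vacuously. The Boolean cases ($\neg c_1$, $c_1 \wedge c_2$, $c_1 \vee c_2$) follow directly from the induction hypothesis because $\text{A}'$ distributes over the connectives and $\models^I$ is defined compositionally over them. For the set-quantifier cases, say $c = \exists_\mathtt{V}\mathtt{X}\mt{[}c'\mt{]}$: unfolding the definition of $\models^I$, we have $p'' \models^I \text{A}'(p, \exists_\mathtt{V}\mathtt{X}\mt{[}c'\mt{]}) = \exists_\mathtt{V}\mathtt{X}\mt{[}\text{A}'(p,c')\mt{]}$ iff there is $V \subseteq V_H$ with $p'' \models^{I \cup \{\mathtt{X}\mapsto V\}} \text{A}'(p,c')$, which by the induction hypothesis (applied with the extended interpretation $I \cup \{\mathtt{X}\mapsto V\}$, still an interpretation in $H$) is equivalent to $p'' \circ p \models^{I \cup \{\mathtt{X}\mapsto V\}} c'$, i.e.\ $p'' \circ p \models^I \exists_\mathtt{V}\mathtt{X}\mt{[}c'\mt{]}$. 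The $\exists_\mathtt{E}$ case is identical with edge subsets.

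The main case, and the only one requiring real work, is $c = \exists(a\colon P \hookrightarrow C \mid \gamma, c')$. Here $\text{A}'(p,c) = \bigvee_{e \in \varepsilon} \exists(b\colon P' \hookrightarrow E \mid \gamma, \text{A}'(s\colon C \hookrightarrow E, c'))$ where $(1)$ is the pushout of $p$ and $a$ giving $a'\colon P' \hookrightarrow C'$, $q\colon C \hookrightarrow C'$, and $\varepsilon$ ranges over surjections $e\colon C' \to E$ making $b = e\circ a'$ and $s = e\circ q$ injective. For the forward direction, suppose $p'' \models^I \text{A}'(p,c)$; then for some $e \in \varepsilon$ there is an injective $q'\colon E \hookrightarrow H$ with $q' \circ b = p''$, $\gamma^{I,q'} = \text{true}$, and $q' \models^I \text{A}'(s,c')$. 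Set $q_C = q' \circ s \colon C \hookrightarrow H$. Then $q_C \circ a = q' \circ s \circ a = q' \circ e \circ q \circ a$; using the pushout square $q \circ a = a' \circ p$ this equals $q' \circ e \circ a' \circ p = q' \circ b \circ p = p'' \circ p$, so $q_C$ is a valid witness morphism for $c$ over the morphism $p'' \circ p$. Since $s$ maps each identifier in $C$ to its image in $E$ and $q' \circ b = p''$ restricts appropriately, one checks $q_C(x) = q'(e(q(x)))$ agrees with $q'(b(\cdot))$ on the image of $P$, so the constraint evaluation $\gamma^{I,q_C}$ over items of $C$ coincides with $\gamma^{I,q'}$ over items of $C'$/$E$ — the point being that $\gamma$'s variables refer to named nodes/edges of $C$, and the path predicates are evaluated in the common codomain $H$ in both cases; hence $\gamma^{I,q_C} = \text{true}$. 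Finally the induction hypothesis applied to $c'$ over $C$, with the morphism $s\colon C \hookrightarrow E$ and $q'\colon E \hookrightarrow H$, gives $q' \models^I \text{A}'(s,c')$ iff $q' \circ s = q_C \models^I c'$, so $p'' \circ p \models^I \exists(a \mid \gamma, c')$. For the converse, given a witness $q_C\colon C \hookrightarrow H$ for $p''\circ p \models^I c$, one uses the pushout property of $(1)$: the morphisms $p''\colon P' \to H$ and $q_C\colon C \to H$ agree on $P$ (both compose with $p$, resp.\ $a$, to $p''\circ p$), so they induce a unique $u\colon C' \to H$; factoring $u$ as an epi followed by a mono through its image yields the object $E$, the surjection $e \in \varepsilon$, and the injection $q'\colon E \hookrightarrow H$, and one verifies $b,s$ are injective (they are, since $p''$ and $q_C$ are) so indeed $e\in\varepsilon$; then the constraint and the recursive condition transfer back by the same computations.

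I expect the main obstacle to be purely notational rather than conceptual: carefully tracking the four or five morphisms in the pushout-plus-epi/mono diagram and verifying the commuting identities $q'\circ b = p''$, $q_C = q'\circ s$, and the compatibility of $\gamma$-evaluation between $C$ and $E$. The key sub-point to get right is that $\gamma^{I,q} = \text{true}$ is insensitive to replacing the comatch $q\colon C \hookrightarrow G$ with a factorisation $q = q' \circ s$ where $s$ is an inclusion-like injection of named items — this holds because the membership clauses only inspect $q(x) = q'(s(x))$ for identifiers $x$ of $C$, and the path clauses are evaluated entirely in the fixed ambient graph $H$, independently of which subgraph the morphism's codomain is. Everything else is the standard shift argument, and since, as the paper notes, interpretation constraints are never manipulated by $\text{A}'$ and both sides are interpreted in the same graph, no genuinely new difficulty arises beyond what is already handled in \cite{Habel-Pennemann09a}.
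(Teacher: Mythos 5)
Your proposal is correct and follows essentially the same route as the paper's proof: structural induction, routine handling of the set quantifiers by extending $I$, and reduction of the $\exists(a\mid\gamma,c')$ case to the classical shift argument (pushout plus epi--mono factorisation) together with the observation that $\gamma$ refers only to items of $C$ and is evaluated in the common codomain $H$, so $\gamma^{I,q''\circ s}=\gamma^{I,q''}$. The only difference is that you spell out the commuting-diagram computations that the paper delegates to the cited proof of Lemma~3 in \cite{Habel-Pennemann09a}.
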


\begin{proof}
	See Appendix \ref{lemma:shifting:PROOF}.
	\qed
\end{proof}

Theorem \ref{thm:A} then follows as an instance of Lemma \ref{lemma:shifting}.

\begin{proof}[of Theorem \ref{thm:A}]
	With the construction of A, Lemma \ref{lemma:shifting}, and the definition of $\models$, we have: $h \models^{I_\emptyset} \text{A}(r,c)$ iff $h \models^{I_\emptyset} \text{A}'(i_R\!:\emptyset \hookrightarrow R, c)$ iff $ h \circ i_R \models^{I_\emptyset} c$ iff $i_H\!:\emptyset\hookrightarrow H \models^{I_\emptyset} c$ iff $H \models c$.
	\qed
\end{proof}

Transformation L, adapted from \cite{Habel-Pennemann09a}, takes an M-condition over $R$ and constructs an M-condition over $L$ that is satisfied by a match if and only if the original is satisfied by the comatch. The transformation is made more complex by the presence of path and MSO expressions, because nodes and edges referred to on the right-hand side may no longer exist on the left. For clarity, we separate the handling of these two types of expressions, and in particular, define a \emph{decomposition} LPath of path predicates according to the items that the rule is creating or deleting. For example, if an edge is created by a rule, a path predicate decomposes to a disjunction of path predicates collectively asserting the existence of paths to and from the nodes that will eventually become its source and target; whereas if an edge is to be deleted, the predicate will exclude it.

\begin{proposition}[Path decomposition]\label{prop:LPath}\rm
	There is a transformation LPath such that for every rule $r = \langle L \hookleftarrow K \hookrightarrow R \rangle$, direct derivation $G\Rightarrow_{r,g,h} H$, path predicate $p$ over $R$, and interpretation $I$,
	\[ \text{LPath}(r,p)^{I,g} = p^{I,h}. \]
\end{proposition}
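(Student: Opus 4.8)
The plan is to define $\text{LPath}$ by a case analysis on how the rule $r$ acts on the nodes and edges referred to in the path predicate $p$, and then verify the identity $\text{LPath}(r,p)^{I,g} = p^{I,h}$ by unfolding the inductive definition of $\text{path}_G$ and $\text{path}_H$ and using the explicit description of the double-pushout construction of $H$ from $G$ (deleting the edges in $g(L)\setminus g(K)$ to obtain $D$, then adding the items of $R\setminus K$). Recall that a path predicate over $R$ has the form $\mathtt{path}(v,w)$ or $\mathtt{path}(v,w,\mathtt{not}\ e_1\mid\dots\mid e_n)$ where $v,w$ are node identifiers of $R$ and the $e_i$ are edge identifiers of $R$. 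Since the node set of $R$ decomposes as $(V_R\cap V_K)\uplus(V_R\setminus V_K)$ — the ``preserved'' and the ``created'' nodes — and similarly for edges, the construction needs to push references to created items across to their boundary (preserved) nodes.

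The key steps, in order. \textbf{(1)} First isolate the simplest case: if $v,w$ and all the $e_i$ are in the interface $K$, then they all have preimages under $K\hookrightarrow L$ and under $g$, and a path in $H$ avoiding $\{h(e_1),\dots,h(e_n)\}$ together with the created edges of $R\setminus K$ corresponds, edge by edge, to a path in $G$; here $\text{LPath}$ just reindexes along $K$ and additionally excludes the deleted edges $g(L)\setminus g(K)$ (which are not present in $G$'s relevant subgraph anyway, but must be excluded to match the construction of $H$). \textbf{(2)} Handle a created edge $e\in E_R\setminus E_K$ appearing as an excluded edge $e_i$ in $p$: since $e$ does not exist in $G$ (nor in $D$), excluding it is vacuous, so it can simply be dropped from the list. \textbf{(3)} Handle a created edge appearing implicitly — i.e.\ the subtle case where $v$ or $w$, or an intermediate node on the path, is a created node: a created node of $R$ is isolated in $R$ except for edges also created, so in $H$ a path through such a node must enter/leave it via created edges; $\text{LPath}$ replaces $\mathtt{path}(v,w)$ by a disjunction that routes the path through the boundary nodes (the $g$-images of the $K$-nodes that are sources/targets of the created edges), asserting the relevant sub-paths in $G$. \textbf{(4)} Handle deleted edges and deleted nodes: an edge $e\in E_L\setminus E_K$ that is deleted must be excluded when translating a $G$-path to an $H$-path, so $\text{LPath}$ adds the identifiers of all such edges to the exclusion list; the dangling condition guarantees no deleted node lies on any path that survives, so deleted nodes need no special treatment beyond what the dangling condition already forces. \textbf{(5)} Assemble these into the inductive definition of $\text{LPath}$ and prove $\text{LPath}(r,p)^{I,g}=p^{I,h}$ by induction on path length, using at the base case $v=w$ (so both sides are $\mathtt{true}$ provided the node is preserved, and otherwise the routing in step (3) degenerates correctly) and at the step case the bijection between $E_D = E_G\setminus(g(L)\setminus g(K))$ and $E_H\setminus(E_R\setminus E_K$-images$)$ together with the compatibility of sources and targets recorded in the DPO diagram. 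Note that $I$ plays no role here, as path predicates do not mention set variables; it is carried along only because the ambient evaluation $\gamma^{I,q}$ is parameterised by it.

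The main obstacle I expect is step \textbf{(3)}: getting the disjunctive ``re-routing'' through boundary nodes exactly right when a path in $H$ passes through \emph{several} created nodes and created edges, possibly revisiting the preserved part of the graph in between. One has to express, purely in terms of $G$ and the preserved boundary nodes, the existence of a path in $H$ that may weave in and out of the freshly added subgraph $R\setminus K$; since $R$ is a fixed finite graph this is a finite disjunction over the possible ``shapes'' of the intersection of the $H$-path with $R\setminus K$, but bookkeeping these shapes — and arguing that excluded created edges, excluded deleted edges, and the pattern of preserved vs.\ created nodes all interact correctly — is where the real work lies. Everything else is a routine unfolding of the DPO construction and the inductive definition of the path predicate.
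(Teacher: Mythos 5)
Your overall architecture matches the paper's: augment the exclusion list with the deleted edges ($E^\ominus$), case-split on whether $v,w$ lie in $V_K$, and re-route paths that use created edges through boundary (interface) nodes. But the proposal leaves unexecuted exactly the piece that makes the proposition non-trivial, and you say so yourself in the final paragraph: the ``finite disjunction over the possible shapes of the intersection of the $H$-path with $R\setminus K$'' is never pinned down, and no argument is given that finitely many shapes suffice or that the disjunction is sound and complete. The paper's answer is the $\text{FuturePaths}(r,p)$ construction: a disjunction over all non-empty sequences of \emph{distinct} pairs $\langle x_1,y_1\rangle,\dots,\langle x_n,y_n\rangle$ drawn from $\{\langle x,y\rangle \mid x,y\in V_K \wedge \text{path}_R(x,y,E^\ominus) \wedge \neg\text{path}_L(x,y,E^\ominus)\}$, each disjunct conjoining $G$-path predicates between consecutive boundary nodes ($v$ to $x_1$, $y_i$ to $x_{i+1}$, $y_n$ to $w$). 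Finiteness comes from distinctness of the pairs over the finite $V_K$; correctness comes from decomposing an arbitrary $H$-path into its maximal segments of created edges and observing (via the DPO construction and the dangling condition) that the endpoints of each such segment must be images of interface nodes. That decomposition argument is the key lemma of the whole proof and is absent from your proposal.

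Two further points. First, your step (3) is framed around created \emph{nodes}, but the re-routing is already needed when a rule merely adds an edge between two preserved nodes and creates no nodes at all: the new edge yields an $H$-path with no $G$-counterpart, so the boundary-pair disjunction must fire in that case too. Second, your step (2) claim that an excluded created edge ``can simply be dropped from the list'' because it does not exist in $G$ is only correct for the $G$-path conjuncts; in the re-routing disjuncts the exclusion of a created edge genuinely matters, because it changes which interface pairs count as newly connected. The paper handles this uniformly by evaluating $\text{path}_R(x,y,E^\ominus)$ with the full exclusion set when forming the pair set, rather than discarding created edges from it.
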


\noindent \emph{Construction.} Let $r = \langle L \hookleftarrow K \hookrightarrow R \rangle$ and $p = \mathtt{path}\mt{(}v,w,\mathtt{not}\ E\mt{)}$. For simplicity, we will treat the syntactic construct $E$ as a set of edges and identify $\mathtt{path}\mt{(}v,w,\mathtt{not}\ E\mt{)}$ and $\mathtt{path}\mt{(}v,w{)}$ when $E$ is empty. Then, define:
\[ \text{LPath}(r,p) = \text{LPath}'(r,v,w,E^\ominus)\ \mathtt{or}\ \text{FuturePaths}(r,p). \]

\noindent Here, $E^\ominus$ is constructed from $E$ by adding edges $e\in E_L\setminus E_R$, i.e.\ that the rule will delete. Furthermore, $\text{LPath}'(r,v,w,E^\ominus)$ decomposes to path predicates according to whether $v$ and $w$ exist in $K$. If $\text{path}_R(v,w,E^\ominus)$ holds, then $\text{LPath}'(r,v,w,E^\ominus)$ returns $\mathtt{true}$. Otherwise, if both $v,w\in V_K$, then it returns $\mathtt{path}\mt{(}v,w,\mathtt{not}\ E^\ominus\mt{)}$. If $v\notin V_K, w\in V_K$, it returns:
\[\mathtt{false\ or}\ \mathtt{path}\mt{(}x_1,w,\mathtt{not}\ E^\ominus\mt{)}\ \mathtt{or}\ \mathtt{path}\mt{(}x_2,w,\mathtt{not}\ E^\ominus\mt{)}\ \mathtt{or}\dots\]

\noindent for each $x_i\in V_K$ such that $\text{path}_R(v,x_i,E^\ominus)$. Case $v\in V_K,w\notin V_K$ analogous. If $v,w\notin V_K$, then it returns $\mathtt{false\ or}\ \mathtt{path}\mt{(}x_i,y_j,\mathtt{not}\ E^\ominus\mt{)}\ \mathtt{or}\ \dots$ for all $x_i,y_j\in V_K$ such that $\text{path}_R(v,x_i,E^\ominus)$ and $\text{path}_R(y_j,w,E^\ominus)$.

Finally, $\text{FuturePaths}(r,p)$ denotes $\tt false$ in disjunction with:
\begin{center}
	\begin{tabular}{c}
		$\mt{(}\text{LPath}'(r,v,x_1,E^\ominus)\ \mathtt{and}\ \mathtt{path}\mt{(}y_1,x_2,\mathtt{not}\ E^\ominus\mt{)} \dots \mathtt{and}\ \mathtt{path}\mt{(}y_i,x_{i+1},\mathtt{not}\ E^\ominus\mt{)}$\\
		$\dots \mathtt{and}\ \text{LPath}'(r,y_n,w,E^\ominus)\mt{)}$
	\end{tabular}
\end{center}

\noindent over all non-empty sequences of distinct pairs $\langle \langle x_1,y_1 \rangle, \dots, \langle x_n,y_n \rangle \rangle$ drawn from:
\[ \{\langle x,y \rangle \mid x,y\in V_K \wedge \text{path}_R(x,y,E^\ominus) \wedge \neg \text{path}_L(x,y,E^\ominus)\}. \]
	\qed
	
\begin{proof}
	See Section \ref{prop:LPath:PROOF}.
	
	\qed
\end{proof}

In addition to paths, transformation L must handle MSO expressions that refer to items present in $R$ but absent in $L$. To achieve this, it computes a disjunction over all possible ``future'' (i.e.\ immediately after the rule application) set memberships of these missing items. The idea being, that if a set membership exists for these missing items that satisfies the interpretation constraints \emph{before} the rule application, then one will still exist once they have been created. The transformation keeps track of such potential memberships via sets of pairs as follows.

\begin{definition}[Membership set]\rm
	A \emph{membership set} $M$ is a set of pairs $(x,\mathtt{X})$ of node or edge identifiers $x$ with set variables of the corresponding type. Intuitively, $(x,\mathtt{X}) \in M$ encodes that $x\in\mathtt{X}$, whereas $(x,\mathtt{X}) \notin M$ encodes that $x\notin\mathtt{X}$.
	\qed
\end{definition}

\begin{theorem}[From M-conditions over $R$ to $L$]\label{thm:L}\rm
	There is a transformation L such that for every rule $r = \langle \langle L \hookleftarrow K \hookrightarrow R \rangle, \text{ac} \rangle$, every M-condition $c$ over $R$ (with no free variables, and distinct variables for distinct quantifiers), and every direct derivation $G\Rightarrow_{r,g,h} H$,
	\[ g \models^{I_\emptyset} \text{L}(r,c)\ \ \text{if and only if}\ \ h \models^{I_\emptyset} c. \]
\end{theorem}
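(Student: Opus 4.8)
\noindent\emph{Proof proposal.} The plan is to follow the construction of transformation L in~\cite{Habel-Pennemann09a}, defining $\text{L}$ by structural recursion on the M-condition and proving correctness by structural induction on it, with the standard decomposition property of double pushouts handling the ``overlap'' morphisms. Two new ingredients are needed: path atoms inside interpretation constraints will be discharged by Proposition~\ref{prop:LPath}, and set-membership atoms force me to carry a \emph{membership set} $M$ through the recursion --- recording, for the items that the rule creates (and which have no counterpart on the left), whether they are assumed to lie in each set variable --- and to disjoin over all extensions of $M$ whenever a set quantifier is crossed. The hypotheses that $c$ has no free variables and uses distinct variables for distinct quantifiers ensure that every membership atom refers to a variable in scope and that extending $M$ never causes capture.

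\noindent\emph{Construction.} I would set $\text{L}(r,c) = \text{L}'(\emptyset, \langle L\hookleftarrow K\hookrightarrow R\rangle, c)$, where $\text{L}'$ recurses on a (possibly derived) rule and on the condition. It sends $\mathtt{true}$ to $\mathtt{true}$ and distributes over $\neg,\wedge,\vee$. On $\exists_\mathtt{V}\mathtt{X}\mt{[}c'\mt{]}$ (resp.\ $\exists_\mathtt{E}$) with current rule $\langle L^*\hookleftarrow K^*\hookrightarrow R^*\rangle$ it returns $\bigvee_{M'}\exists_\mathtt{V}\mathtt{X}\mt{[}\text{L}'(M\cup M', \langle L^*\hookleftarrow K^*\hookrightarrow R^*\rangle, c')\mt{]}$, where $M'$ ranges over all membership sets recording, for each node (resp.\ edge) of $R^*\setminus K^*$, whether it is deemed to lie in $\mathtt{X}$. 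On $\exists(a\colon R^*\hookrightarrow C\mid\gamma,c')$ it proceeds thus: if the pushout complement of $K^*\hookrightarrow R^*$ and $a$ does not exist, the result is $\mathtt{false}$; otherwise, forming this pushout complement $Z$ and then the pushout $C'$ of $L^*\hookleftarrow K^*\hookrightarrow Z$ yields a derived rule $\langle C'\hookleftarrow Z\hookrightarrow C\rangle$ and a morphism $a^*\colon L^*\hookrightarrow C'$, and the result is $\exists(a^*\colon L^*\hookrightarrow C'\mid\gamma^{\downarrow}, \text{L}'(M,\langle C'\hookleftarrow Z\hookrightarrow C\rangle, c'))$. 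Here $\gamma^{\downarrow}$ replaces each path atom of $\gamma$ by its image under $\text{LPath}$ for $\langle C'\hookleftarrow Z\hookrightarrow C\rangle$ (Proposition~\ref{prop:LPath}), and each membership atom $x\in\mathtt{X}$ with $x$ a created item (in $C\setminus Z$) by $\mathtt{true}$ if $(x,\mathtt{X})\in M$ and $\mathtt{false}$ otherwise, leaving the remaining membership atoms unchanged up to renaming along the items that persist from $C$ to $C'$.

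\noindent\emph{Correctness.} I would prove, by induction on the structure of $c$, the following strengthening: for every direct derivation $G\Rightarrow_{r,g,h} H$, every membership set $M$, and all interpretations $I_G$ in $G$ and $I_H$ in $H$ that \emph{agree via $M$} --- i.e.\ they coincide on the items preserved by the derivation, and for each created item $x$ and set variable $\mathtt{X}$ we have $x\in I_H(\mathtt{X})$ iff $(x,\mathtt{X})\in M$ --- it holds that $g\models^{I_G}\text{L}'(M,\langle L\hookleftarrow K\hookrightarrow R\rangle, c)$ iff $h\models^{I_H} c$; the theorem is the case $M=\emptyset$, $I_G=I_H=I_\emptyset$. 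The base and Boolean cases are immediate. For $\exists_\mathtt{V}\mathtt{X}$: from an $H$-side witness $V\subseteq V_H$ its created part selects a disjunct $M'$ and $V$ restricted to the preserved items is a $G$-side witness agreeing via $M\cup M'$, and conversely a chosen disjunct $M'$ together with a $G$-side witness reconstructs such a $V$; the induction hypothesis then applies. For $\exists(a\mid\gamma,c')$: when the pushout complement does not exist, no injective comatch $q\colon C\hookrightarrow H$ with $q\circ a=h$ can exist --- by the structure of the double pushout the created nodes of $H$ are incident only to edges of $h(R)$, so injectivity of such a $q$ would make that complement exist --- hence $\mathtt{false}$ is correct. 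Otherwise the standard double-pushout decomposition gives a bijection between such $q$ and injective $q'\colon C'\hookrightarrow G$ with $q'\circ a^*=g$, under which $G\Rightarrow_{\langle C'\hookleftarrow Z\hookrightarrow C\rangle,\,q',\,q} H$ is a direct derivation and $q,q'$ coincide on preserved items. Hence $\gamma^{I_H,q}=(\gamma^{\downarrow})^{I_G,q'}$: path atoms by Proposition~\ref{prop:LPath} applied to that derivation; membership atoms on preserved items because $q,q'$ and $I_G,I_H$ agree there; membership atoms on created items because $M$ resolves them consistently with $I_H$. Finally $q\models^{I_H} c'$ iff $q'\models^{I_G}\text{L}'(M,\langle C'\hookleftarrow Z\hookrightarrow C\rangle, c')$ by the induction hypothesis, as $I_G,I_H$ still agree via $M$ for the derived rule, whose created items $C\setminus Z$ are the images of those of $r$.

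\noindent\emph{Expected main obstacle.} The hard part will be the set-variable bookkeeping: pinning down the right generalised invariant (``agree via $M$'') and verifying that the quantifier and $\exists(a\mid\gamma,c')$ cases preserve it, in particular that the disjunctions over membership sets range over exactly the information an $H$-side interpretation carries about created items but a $G$-side one cannot. The double-pushout decomposition and the distinctness-of-variables hypothesis are what keep this consistent, while Proposition~\ref{prop:LPath} absorbs the remaining difficulty, namely paths that cross items the rule creates or deletes.
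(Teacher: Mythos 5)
Your proposal matches the paper's own development: the same construction of $\text{L}'$ threading a membership set $M$ through the recursion, with the generalised statement you prove by induction being exactly the paper's Lemma~\ref{lemma:L} (your ``agree via $M$'' relation is precisely the paper's $I$ versus $I_M$), the same pushout-complement/derived-rule decomposition in the $\exists(a\mid\gamma,c')$ case following Theorem~6 of \cite{Habel-Pennemann09a}, and Proposition~\ref{prop:LPath} discharging the path atoms. The only deviations are cosmetic --- you place the disjunction over membership sets outside the set quantifier rather than inside it (equivalent under $\exists$), and you spell out the no-pushout-complement argument that the paper defers to the literature --- so this is essentially the paper's proof.
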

	
	\noindent \emph{Construction.} Let $r = \langle \langle L \hookleftarrow K \hookrightarrow R \rangle, \text{ac} \rangle$ denote a rule and $c$ an M-condition over $R$. We define $\text{L}(r,c) = \text{L}'(r,c,\emptyset)$. For such an $r,c$, and membership set $M$, the transformation $\text{L}'$ is defined inductively as follows:

\begin{eqnarray*}
\text{L}'(r,\mathtt{true}, M) &=& \mathtt{true}, \\
\text{L}'(r,\exists_\mathtt{V}\mathtt{X}\mt{[}c'\mt{]}, M ) &=& \exists_\mathtt{V}\mathtt{X}\mt{[}\ \bigvee_{M'\in 2^{M_\mathtt{V}}}\text{L}'(r,c', M\cup M')\ \mt{]}\\
\text{L}'(r,\exists_\mathtt{E}\mathtt{X}\mt{[}c'\mt{]}, M ) &=& \exists_\mathtt{E}\mathtt{X}\mt{[}\ \bigvee_{M'\in 2^{M_\mathtt{E}}}\text{L}'(r,c', M\cup M')\ \mt{]}
\end{eqnarray*}

	\noindent where $M_\mathtt{V} = \{ (v,\mathtt{X}) \mid v \in V_R\setminus V_L \}$ and $M_\mathtt{E} = \{ (e,\mathtt{X}) \mid e \in E_R\setminus E_L \}$.
	
	For case $c = \exists(a\mid\gamma,c')$, we define:
	\begin{eqnarray*}
		\text{L}'(r,\exists(a\mid\gamma,c'),M) &=& \mathtt{false}
	\end{eqnarray*}	
	
	\noindent if $\langle K \hookrightarrow R, a \rangle$ has no pushout complement; otherwise:

	\[ \text{L}'(r,\exists(a\mid\gamma,c'),M) = \exists(b\mid\gamma_{M},\text{L}'(r^{*},c',M)) \]

	\begin{wrapfigure}[6]{r}{0.5\textwidth}
		\centering
			\vspace{-2.5pt}
			\includegraphics[width=0.45\textwidth]{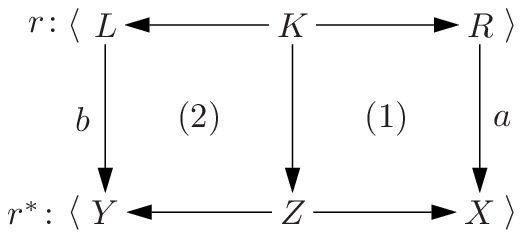}
	\end{wrapfigure}
	\noindent which relies on the following. First, construct the pushout $(1)$, with $r^{*} = \langle Y \hookleftarrow Z \hookrightarrow X \rangle$ the ``derived'' rule obtained by constructing pushout $(2)$. The interpretation constraint $\gamma_{M}$ is obtained from $\gamma$ as follows. First, consider each predicate $x\in\mathtt{X}$ such that $x\notin Y$. If $(y,\mathtt{X})\in M$ for some $y=x$, replace the predicate with $\mathtt{true}$; otherwise $\mathtt{false}$. Then, replace each path predicate $p$ with $\text{LPath}(r^*,p)$.
	
	The transformation L is extended for Boolean formulae in the usual way, that is, $\text{L}(r,\neg c) = \neg \text{L}(r,c)$, $\text{L}(r,c_1\wedge c_2) = \text{L}(r,c_1) \wedge \text{L}(r,c_2)$, and $\text{L}(r,c_1\vee c_2) = \text{L}(r,c_1) \vee \text{L}(r,c_2)$ (analogous for $\text{L}'$).	
	\qed

\begin{example}\label{eg:grow_L_new}\rm
	Take $\tt grow$, $c$, $\gamma$ and $\text{A}(\mathtt{grow},c)$ as considered in Example \ref{eg:grow_A_new}. Applying transformation L: 
	
	\begin{center}
		\begin{tabular}{r c l}
			\multicolumn{3}{l}{$\text{L}(\mathtt{grow},\text{A}(\mathtt{grow},c)) = \text{L}'(\mathtt{grow},\text{A}(\mathtt{grow},c),\emptyset)$} \\

			&$=$& \vspace{5pt}$\exists_\mathtt{V}\mathtt{X}\mt{,}\mathtt{Y}\mt{[}\ \bigvee_{M'\in 2^{M_\mathtt{V}}}\text{L}'(\mathtt{grow},\bigwedge_{i=1}^7\forall (\begin{graph}(1.1,0.3)(0,0) \fillednodes  \roundnode{A}(0.2,0.075) \opaquetextfalse \autonodetext{A}[se]{\vspace{4pt}\scriptsize $\mathtt{1}$} \roundnode{B}(0.8,0.075) \opaquetextfalse \autonodetext{B}[se]{\vspace{4pt}\scriptsize $\mathtt{2}$} \diredge{A}{B} 
			 \end{graph}\hookrightarrow E_i , \exists(E_i \mid \mathtt{path}\mt{(}\mathtt{v,w}\mt{)})$\\
			
			&& \vspace{5pt}\hspace{0.25in}$\Rightarrow \exists (E_i \mid \gamma)),M')\ \mt{]}$\\
			
			&$=$& $\exists_\mathtt{V}\mathtt{X}\mt{,}\mathtt{Y}\mt{[}\ \bigvee_{M'\in 2^{M_\mathtt{V}}} (\bigwedge_{i\in\{1,2,4\}}\forall (\begin{graph}(0.45,0.3)(0,0) \fillednodes \roundnode{A}(0.2,0.075) \opaquetextfalse \autonodetext{A}[se]{\vspace{4pt}\scriptsize $\mathtt{1}$}   \end{graph}\hookrightarrow Y_i , \exists(Y_i \mid \mathtt{path}\mt{(}\mathtt{v,w}\mt{)}) \Rightarrow \exists (Y_i \mid \gamma))$\\

			&& $\wedge\ \forall (\begin{graph}(0.8,0.3)(0,0) \fillednodes \roundnode{A}(0.2,0.075) \opaquetextfalse \autonodetext{A}[se]{\vspace{4pt}\scriptsize $\mathtt{1}$}  \roundnode{C}(0.5,0.075) \opaquetextfalse  \autonodetext{C}[se]{\vspace{4pt}\scriptsize $\mathtt{v}$} \end{graph} , \exists(\begin{graph}(0.8,0.3)(0,0) \fillednodes \roundnode{A}(0.2,0.075) \opaquetextfalse \autonodetext{A}[se]{\vspace{4pt}\scriptsize $\mathtt{1}$}  \roundnode{C}(0.5,0.075) \opaquetextfalse  \autonodetext{C}[se]{\vspace{4pt}\scriptsize $\mathtt{v}$} \end{graph} \mid \mathtt{path}\mt{(}\mathtt{v,1}\mt{)}) \Rightarrow \exists (\begin{graph}(0.8,0.3)(0,0) \fillednodes \roundnode{A}(0.2,0.075) \opaquetextfalse \autonodetext{A}[se]{\vspace{4pt}\scriptsize $\mathtt{1}$}  \roundnode{C}(0.5,0.075) \opaquetextfalse  \autonodetext{C}[se]{\vspace{4pt}\scriptsize $\mathtt{v}$} \end{graph} \mid \gamma_{M'},\text{L}'(\mathtt{grow},\mathtt{true},M')))$\\
			
			&& $\wedge\ \forall (\begin{graph}(0.8,0.3)(0,0) \fillednodes \roundnode{A}(0.2,0.075) \opaquetextfalse \autonodetext{A}[se]{\vspace{4pt}\scriptsize $\mathtt{1}$}  \roundnode{C}(0.5,0.075) \opaquetextfalse  \autonodetext{C}[se]{\vspace{4pt}\scriptsize $\mathtt{w}$} \end{graph} , \mathtt{false} \Rightarrow \exists (\begin{graph}(0.8,0.3)(0,0) \fillednodes \roundnode{A}(0.2,0.075) \opaquetextfalse \autonodetext{A}[se]{\vspace{4pt}\scriptsize $\mathtt{1}$}  \roundnode{C}(0.5,0.075) \opaquetextfalse  \autonodetext{C}[se]{\vspace{4pt}\scriptsize $\mathtt{w}$} \end{graph} \mid \gamma_{M'},\text{L}'(\mathtt{grow},\mathtt{true},M')))$\\
			
			&& $\wedge\ \forall (\begin{graph}(0.65,0.3)(0,0) \fillednodes \roundnode{A}(0.2,0.075) \opaquetextfalse \autonodetext{A}[se]{\vspace{4pt}\scriptsize $\mathtt{1}\mt{=}\mathtt{v}$}   \end{graph} , \mathtt{true} \Rightarrow \exists (\begin{graph}(0.65,0.3)(0,0) \fillednodes \roundnode{A}(0.2,0.075) \opaquetextfalse \autonodetext{A}[se]{\vspace{4pt}\scriptsize $\mathtt{1}\mt{=}\mathtt{v}$}   \end{graph} \mid \gamma_{M'},\text{L}'(\mathtt{grow},\mathtt{true},M')))$\\
			
			&& \vspace{5pt}$\wedge\ \forall (\begin{graph}(0.65,0.3)(0,0) \fillednodes \roundnode{A}(0.2,0.075) \opaquetextfalse \autonodetext{A}[se]{\vspace{4pt}\scriptsize $\mathtt{1}\mt{=}\mathtt{w}$}   \end{graph} , \mathtt{false} \Rightarrow \exists (\begin{graph}(0.65,0.3)(0,0) \fillednodes \roundnode{A}(0.2,0.075) \opaquetextfalse \autonodetext{A}[se]{\vspace{4pt}\scriptsize $\mathtt{1}\mt{=}\mathtt{w}$}   \end{graph} \mid \gamma_{M'},\text{L}'(\mathtt{grow},\mathtt{true},M'))))\ \mt{]}$\\

			&$=$& $\exists_\mathtt{V}\mathtt{X}\mt{,}\mathtt{Y}\mt{[}\ \bigvee_{M'\in 2^{M_\mathtt{V}}} (\bigwedge_{i\in\{1,2,4\}}\forall (\begin{graph}(0.45,0.3)(0,0) \fillednodes \roundnode{A}(0.2,0.075) \opaquetextfalse \autonodetext{A}[se]{\vspace{4pt}\scriptsize $\mathtt{1}$}   \end{graph}\hookrightarrow Y_i , \exists(Y_i \mid \mathtt{path}\mt{(}\mathtt{v,w}\mt{)}) \Rightarrow \exists (Y_i \mid \gamma))$\\

			&& $\wedge\ \forall (\begin{graph}(0.8,0.3)(0,0) \fillednodes \roundnode{A}(0.2,0.075) \opaquetextfalse \autonodetext{A}[se]{\vspace{4pt}\scriptsize $\mathtt{1}$}  \roundnode{C}(0.5,0.075) \opaquetextfalse  \autonodetext{C}[se]{\vspace{4pt}\scriptsize $\mathtt{v}$} \end{graph} , \exists(\begin{graph}(0.8,0.3)(0,0) \fillednodes \roundnode{A}(0.2,0.075) \opaquetextfalse \autonodetext{A}[se]{\vspace{4pt}\scriptsize $\mathtt{1}$}  \roundnode{C}(0.5,0.075) \opaquetextfalse  \autonodetext{C}[se]{\vspace{4pt}\scriptsize $\mathtt{v}$} \end{graph} \mid \mathtt{path}\mt{(}\mathtt{v,1}\mt{)}) \Rightarrow \exists (\begin{graph}(0.8,0.3)(0,0) \fillednodes \roundnode{A}(0.2,0.075) \opaquetextfalse \autonodetext{A}[se]{\vspace{4pt}\scriptsize $\mathtt{1}$}  \roundnode{C}(0.5,0.075) \opaquetextfalse  \autonodetext{C}[se]{\vspace{4pt}\scriptsize $\mathtt{v}$} \end{graph} \mid \gamma_{M'}))$\\
					
			&& \vspace{5pt}$\wedge\ \forall (\begin{graph}(0.65,0.3)(0,0) \fillednodes \roundnode{A}(0.2,0.075) \opaquetextfalse \autonodetext{A}[se]{\vspace{4pt}\scriptsize $\mathtt{1}\mt{=}\mathtt{v}$}   \end{graph} , \exists (\begin{graph}(0.65,0.3)(0,0) \fillednodes \roundnode{A}(0.2,0.075) \opaquetextfalse \autonodetext{A}[se]{\vspace{4pt}\scriptsize $\mathtt{1}\mt{=}\mathtt{v}$}   \end{graph} \mid \gamma_{M'})))\ \mt{]}$\\

			&$=$& $\exists_\mathtt{V}\mathtt{X}\mt{,}\mathtt{Y}\mt{[}\ \bigwedge_{i\in\{1,2,4\}}\forall (\begin{graph}(0.45,0.3)(0,0) \fillednodes \roundnode{A}(0.2,0.075) \opaquetextfalse \autonodetext{A}[se]{\vspace{4pt}\scriptsize $\mathtt{1}$}   \end{graph}\hookrightarrow Y_i , \exists(Y_i \mid \mathtt{path}\mt{(}\mathtt{v,w}\mt{)}) \Rightarrow \exists (Y_i \mid \gamma))$\\
			
			&& $\wedge\ \forall (\begin{graph}(0.8,0.3)(0,0) \fillednodes \roundnode{A}(0.2,0.075) \opaquetextfalse \autonodetext{A}[se]{\vspace{4pt}\scriptsize $\mathtt{1}$}  \roundnode{C}(0.5,0.075) \opaquetextfalse  \autonodetext{C}[se]{\vspace{4pt}\scriptsize $\mathtt{v}$} \end{graph} , \exists(\begin{graph}(0.8,0.3)(0,0) \fillednodes \roundnode{A}(0.2,0.075) \opaquetextfalse \autonodetext{A}[se]{\vspace{4pt}\scriptsize $\mathtt{1}$}  \roundnode{C}(0.5,0.075) \opaquetextfalse  \autonodetext{C}[se]{\vspace{4pt}\scriptsize $\mathtt{v}$} \end{graph} \mid \mathtt{path}\mt{(}\mathtt{v,1}\mt{)}) \Rightarrow \exists (\begin{graph}(0.8,0.3)(0,0) \fillednodes \roundnode{A}(0.2,0.075) \opaquetextfalse \autonodetext{A}[se]{\vspace{4pt}\scriptsize $\mathtt{1}$}  \roundnode{C}(0.5,0.075) \opaquetextfalse  \autonodetext{C}[se]{\vspace{4pt}\scriptsize $\mathtt{v}$} \end{graph} \mid \mathtt{v\in X\ and\ not\ v\in Y}))$\\
			
			&& $\wedge\ \forall (\begin{graph}(0.65,0.3)(0,0) \fillednodes \roundnode{A}(0.2,0.075) \opaquetextfalse \autonodetext{A}[se]{\vspace{4pt}\scriptsize $\mathtt{1}\mt{=}\mathtt{v}$}   \end{graph} , \exists (\begin{graph}(0.65,0.3)(0,0) \fillednodes \roundnode{A}(0.2,0.075) \opaquetextfalse \autonodetext{A}[se]{\vspace{4pt}\scriptsize $\mathtt{1}\mt{=}\mathtt{v}$}   \end{graph} \mid \mathtt{v\in X\ and\ not\ v\in Y}))\ \mt{]}$\\
			
		\end{tabular}
	\end{center}
	
	\noindent where the graphs $E_i$ and $Y_i$ are as given in Figure \ref{fig:eg-A-L} and $M_\mathtt{V} = \{(2,\mathtt{X}),(2,\mathtt{Y})\}$. Here, only one of the subsets ranged over yields a satisfiable disjunct: $M' = \{(2,\mathtt{Y})\}$, i.e. $\gamma_{M'} = \mt{(}\mathtt{v\in X\ and\ true}\mt{)}\ \mathtt{and\ not\ }\mt{(}\mathtt{v\in Y\ or\ false}\mt{)}$ for $\mathtt{w}=2$.
	\qed
\end{example}

In order to prove the statement about L (which is interpreted over $I_\emptyset$), we need to prove a more general lemma.

\begin{lemma}\label{lemma:L}\rm
	There is a transformation L such that for every rule $r = \langle \langle L \hookleftarrow K \hookrightarrow R \rangle, \text{ac} \rangle$, every M-condition $c$ over $R$ with distinct variables for distinct quantifiers, every interpretation $I$ in $G$ defined for all free set variables of $c$, every membership set $M$ such that $(x,\_)\in M$ implies $x\in R\setminus L$, and every direct derivation $G\Rightarrow_{r,g,h} H$,
	\[ g \models^{I} \text{L}'(r,c,M)\ \ \text{if and only if}\ \ h \models^{I_M} c. \]
	
	\noindent Here, $I_M$ is defined as $I$ except for all $x\in R\setminus L$, where $h(x) \in I_M(\mathtt{X})$ if and only if $(x,\mathtt{X})\in M$.
	\qed
\end{lemma}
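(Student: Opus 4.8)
The plan is to prove the statement by structural induction on the M-condition $c$, with the statement quantified as given (over all rules $r$, all direct derivations $G\Rightarrow_{r,g,h}H$, all interpretations $I$ in $G$ defined on the free set variables of $c$, and all membership sets $M$ with $(x,\_)\in M\Rightarrow x\in R\setminus L$). The base case $c=\mathtt{true}$ is immediate. The Boolean cases $\neg c_1$, $c_1\wedge c_2$, $c_1\vee c_2$ follow directly from the induction hypothesis, since $\text{L}'$ commutes with the connectives and the same $M$ (hence the same $I_M$) is threaded through both sub-conditions; the hypothesis that distinct quantifiers use distinct variables guarantees the sub-conditions do not clash over $M$.

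For the set-quantifier cases $c=\exists_\mathtt{V}\mathtt{X}\mt{[}c'\mt{]}$ (and analogously $\exists_\mathtt{E}$), the key observation is that every node set $V'\subseteq V_H$ decomposes uniquely into a part $V'_{\mathrm{old}}$ inherited from the context graph $D$ (hence corresponding to a subset of $V_G$) together with a part $V'_{\mathrm{new}}$ of freshly created nodes, i.e.\ a subset of $h(V_R\setminus V_L)$; conversely a subset $V\subseteq V_G$ and a choice $M'\in 2^{M_\mathtt{V}}$ of which created nodes to include determine such a $V'$. One then checks that, in $H$, the interpretation $(I\cup\{\mathtt{X}\mapsto V\})_{M\cup M'}$ coincides with $I_M\cup\{\mathtt{X}\mapsto V'\}$ (using that $\mathtt{X}$ is fresh, so $M$ mentions no $\mathtt{X}$), and the equivalence follows by applying the induction hypothesis to $c'$, the same rule $r$, the interpretation $I\cup\{\mathtt{X}\mapsto V\}$, and the augmented membership set $M\cup M'$, which still respects the side condition. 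The disjunction $\bigvee_{M'\in 2^{M_\mathtt{V}}}$ in the construction is exactly what allows us to choose $M'$ on the left to match $V'_{\mathrm{new}}$ on the right.

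The main case is $c=\exists(a\colon R\hookrightarrow C\mid\gamma,c')$. First, if $\langle K\hookrightarrow R,a\rangle$ has no pushout complement then $\text{L}'(r,c,M)=\mathtt{false}$, and one argues that no injective $q\colon C\hookrightarrow H$ with $q\circ a=h$ can exist: pulling back $C$ along the context inclusion $D\hookrightarrow H$ would exhibit a pushout complement, using that in a DPO the square $K\to R\to H$, $K\to D\to H$ is a pullback and that $H=D\cup h(R)$; hence $h\not\models^{I_M}c$ and both sides are false. In the non-degenerate subcase I would invoke the standard double-pushout ``shift over a rule'' machinery of \cite{Habel-Pennemann09a}: constructing pushout $(1)$ and the derived rule $r^{*}=\langle Y\hookleftarrow Z\hookrightarrow X\rangle$ via pushout $(2)$, injective morphisms $q'\colon Y\hookrightarrow G$ with $q'\circ b=g$ are in bijection with injective morphisms $q\colon C\hookrightarrow H$ with $q\circ a=h$, and each such pair is accompanied by a direct derivation $G\Rightarrow_{r^{*},q',q}H$ whose context graph is again $D$ (since $r^{*}$ deletes exactly the items $Y\setminus Z=L\setminus K$). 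Granting this, it remains to establish: (i) $\gamma^{I_M,q}=\gamma_{M}^{I,q'}$, proved by induction on $\gamma$ — a membership atom $x\in\mathtt{X}$ with $x\in Y$ refers to an item of the context and is evaluated consistently by $I$ and $I_M$; one with $x\notin Y$ refers to a created item $h(x)$ whose membership is governed by $M$ precisely as $\gamma_M$ replaces the atom by $\mathtt{true}$ or $\mathtt{false}$; and a path atom $p$ is handled by Proposition~\ref{prop:LPath} applied to the derived derivation $G\Rightarrow_{r^{*},q',q}H$, giving $\text{LPath}(r^{*},p)^{I,q'}=p^{I,q}$; (ii) the interpretation $I_M$ computed for $r$ and that computed for $r^{*}$ agree, as they differ only on the created items, which form the same set $R\setminus K=X\setminus Y$ with the same $M$; and (iii) the recursive equivalence $q'\models^{I}\text{L}'(r^{*},c',M)$ iff $q\models^{I_M}c'$, which is the induction hypothesis applied to $r^{*}$, to $c'$ (an M-condition over $X$), to $I$, and to $M$ (which respects $(x,\_)\in M\Rightarrow x\in X\setminus Y$). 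Chaining (i)--(iii) through the definition of satisfaction for $\exists(\cdot\mid\cdot,\cdot)$ closes the case in both directions.

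The most delicate part is this existential case: pinning down the bijective correspondence between $q$ and $q'$ together with the induced derived derivation (including the role of the pushout-complement condition), and then carefully matching the interpretations $I_M$ across the original and the derived rule so that the induction hypothesis applies verbatim. Proposition~\ref{prop:LPath} isolates all the path-specific reasoning, so no separate treatment of path predicates is needed here beyond invoking it on $G\Rightarrow_{r^{*},q',q}H$.
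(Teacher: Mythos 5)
Your proposal is correct and follows essentially the same route as the paper's proof: structural induction on $c$, with the set-quantifier case handled by splitting a set valuation in $H$ into a context part and a created part (matching the disjunction over $M'\in 2^{M_\mathtt{V}}$), and the $\exists(a\mid\gamma,c')$ case handled via the derived rule $r^{*}$ and the morphism correspondence from \cite{Habel-Pennemann09a}, with Proposition~\ref{prop:LPath} discharging the path atoms in $\gamma_M$. The only differences are cosmetic: you make explicit the no-pushout-complement subcase and the agreement of $I_M$ across $r$ and $r^{*}$, both of which the paper leaves implicit or delegates to the cited proof.
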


\begin{proof}
	See Appendix \ref{lemma:L:PROOF}.
	\qed
\end{proof}

\begin{proof}[of Theorem \ref{thm:A}]\rm
	With the construction of L, Lemma \ref{lemma:L}, and the definition of $\models$, we have: $g \models^{I_\emptyset} \text{L}(r,c)$ iff $g \models^{I_\emptyset} \text{L}'(r,c,\emptyset)$ iff $h \models^{I_\emptyset} c$.
	\qed
\end{proof}

Transformation App, adapted from Def in \cite{Pennemann09a}, takes as input a rule set $\R$ and generates an M-constraint that is satisfied by graphs for which $\R$ is applicable.

\begin{theorem}[Applicability of a rule]\label{thm:A}\rm
	There is a transformation App such that for every rule set $\R$ and every graph $G$,
	\[ G \models \text{App}(\R)\ \ \text{if and only if}\ \ \exists H.\ G \Rightarrow_{\R} H. \]

\noindent \emph{Construction.} If $\R$ is empty, define $\text{App}(\R)=\mathtt{false}$; otherwise, for $\R = \{r_1,\dots,r_n\}$, define:
	\[ \text{App}(\R) = \text{app}(r_1) \vee \dots \vee \text{app}(r_n). \]
	
	\noindent For each rule $r = \langle r',\text{ac}\rangle$ with $r' = \langle L \hookleftarrow K \hookrightarrow R \rangle$, we define $\text{app}(r) = \exists(\emptyset\hookrightarrow L,\text{Dang}(r') \wedge \text{ac}_L \wedge \text{L}(r,\text{ac}_R))$. Here, $\text{Dang}(r') = \bigwedge_{a\in A} \neg \exists a$, where the index set $A$ ranges over all injective graph morphisms $a\!:L \hookrightarrow L^\oplus$ (up to isomorphic codomains) such that the pair $\langle K \hookrightarrow L, a \rangle$ has no pushout complement; each $L^\oplus$ a graph that can be obtained from $L$ by adding either (1) a loop; (2) a single edge between distinct nodes; or (3) a single node and a non-looping edge incident to that node.
	\qed
\end{theorem}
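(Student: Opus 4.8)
The plan is to reduce the claim to a single rule and then read off $\text{app}(r)$ through the semantics of M-conditions. If $\R=\emptyset$ both sides are false, so assume $\R=\{r_1,\dots,r_n\}$. Since $\text{App}(\R)=\bigvee_i\text{app}(r_i)$, satisfaction distributes over $\vee$, and $G\Rightarrow_\R H$ abbreviates ``$G\Rightarrow_{r_i,g,h}H$ for some $i$'', it suffices to prove, for a fixed rule $r=\langle\langle L\hookleftarrow K\hookrightarrow R\rangle,\text{ac}\rangle$, that $G\models\text{app}(r)$ iff there is a direct derivation $G\Rightarrow_r H$. Unfolding the definition of $\models$ for $\text{app}(r)=\exists(\emptyset\hookrightarrow L,\ \text{Dang}(r')\wedge\text{ac}_L\wedge\text{L}(r,\text{ac}_R))$, the left-hand side holds exactly when there is an injective $g\colon L\hookrightarrow G$ with $g\models^{I_\emptyset}\text{Dang}(r')$, $g\models^{I_\emptyset}\text{ac}_L$ and $g\models^{I_\emptyset}\text{L}(r,\text{ac}_R)$. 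By the characterisation recalled after Definition \ref{def:rule} together with the definition of direct derivation there, the right-hand side holds exactly when there is an injective $g\colon L\hookrightarrow G$ satisfying the dangling condition --- so that the plain direct derivation $G\Rightarrow_{r'}H$ exists with a comatch $h\colon R\hookrightarrow H$ that is unique up to isomorphism --- together with $g\models^{I_\emptyset}\text{ac}_L$ and $h\models^{I_\emptyset}\text{ac}_R$. Thus the theorem reduces to two facts about a fixed injective $g$: (i) $g\models^{I_\emptyset}\text{Dang}(r')$ iff $g$ satisfies the dangling condition; and (ii) once $g$ satisfies the dangling condition, $g\models^{I_\emptyset}\text{L}(r,\text{ac}_R)$ iff $h\models^{I_\emptyset}\text{ac}_R$ for the induced comatch $h$.

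Fact (ii) I would obtain directly from Theorem \ref{thm:L}, applied to the M-condition $\text{ac}_R$ over $R$ and the plain direct derivation induced by $g$; note that transformation $\text{L}$, and hence the statement of Theorem \ref{thm:L}, do not inspect the application conditions, so this is legitimate even though $h\models^{I_\emptyset}\text{ac}_R$ is precisely what we are deciding. The only preprocessing needed is the harmless renaming of bound set variables of $\text{ac}_R$ so that distinct quantifiers use distinct variables (both $\text{ac}_L$ and $\text{ac}_R$ being variable-closed by the standing convention on rules). I would also remark here that $\text{app}(r)$ genuinely is an M-constraint: $\text{Dang}(r')$ contains no set variables or path predicates, $\text{ac}_L$ is variable-closed, transformation $\text{L}$ merely copies quantifiers so $\text{L}(r,\text{ac}_R)$ stays variable-closed, and all conjunctions and disjunctions involved are finite.

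Fact (i) --- correctness of $\text{Dang}$ --- carries the real content, and I expect the case analysis to be the main obstacle. I would first record that the index set $A$ is finite up to isomorphism of codomains (there are finitely many ways to extend the finite graph $L$ by a loop, by an edge between two existing distinct nodes, or by a fresh node with one non-looping incident edge), so $\text{Dang}(r')=\bigwedge_{a\in A}\neg\exists a$ is a well-formed finite M-condition. The auxiliary fact to establish is: for $a\colon L\hookrightarrow L^\oplus$ of one of the three allowed shapes, $\langle K\hookrightarrow L,a\rangle$ has no pushout complement iff the edge added in $L^\oplus$ is incident to some node of $V_L\setminus V_K$ --- immediate from the standard gluing conditions of the double-pushout construction, the identification condition holding trivially since $a$ is injective. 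For the forward direction of (i), if $g$ violates the dangling condition, pick $\bar v_0\in V_L\setminus V_K$ and an edge $e\in E_G\setminus g(E_L)$ incident to $g(\bar v_0)$; according as the other endpoint of $e$ equals $g(\bar v_0)$, lies in $g(V_L)$, or lies outside $g(V_L)$, one of the three shapes $L^\oplus$ arises with its new edge incident to $\bar v_0$, so $a\in A$, and mapping the added items to $e$ (and its far endpoint) extends $g$ to an injective $q\colon L^\oplus\hookrightarrow G$, giving $g\models^{I_\emptyset}\exists a$ and hence $g\not\models^{I_\emptyset}\text{Dang}(r')$. Conversely, if $g\models^{I_\emptyset}\exists a$ for some $a\in A$, witnessed by an injective $q\colon L^\oplus\hookrightarrow G$ with $q\circ a=g$, then the added edge $e^\oplus\notin a(E_L)$ is incident to $a(\bar v_0)$ for some $\bar v_0\in V_L\setminus V_K$, and injectivity of $q$ forces $q(e^\oplus)\notin g(E_L)$ while $q(e^\oplus)$ is incident to $g(\bar v_0)\in g(V_L)\setminus g(V_K)$, so the dangling condition fails. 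Combining (i), (ii), and the first-paragraph reduction finishes the proof; the argument follows ``Def'' from \cite{Pennemann09a}, the only new ingredient being that the right-hand application condition is carried across by the MSO-aware transformation $\text{L}$ of Theorem \ref{thm:L} rather than its first-order predecessor.
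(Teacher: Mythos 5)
Your proof is correct and takes essentially the same route as the paper, whose own ``proof'' simply defers to the corresponding App/Def arguments of Pennemann and Poskitt and notes that the only new ingredient is carrying the right application condition across via Theorem \ref{thm:L} --- precisely your facts (i) and (ii). You in fact supply the details the paper leaves to its citations, including the worthwhile observation that invoking Theorem \ref{thm:L} to decide $h \models^{I_\emptyset} \text{ac}_R$ is non-circular because transformation L never inspects the application conditions.
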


\begin{proof}
	See the corresponding proofs in \cite{Pennemann09a} and \cite{Poskitt13a} for nested conditions and E-conditions respectively. (The difference is in the application conditions, i.e.\ M-conditions over $L$ and $R$. Correctness follows from the definition of $\models$ for M-conditions and Theorem \ref{thm:L}.)
	\qed
\end{proof}

Finally, transformation Pre (adapted from \cite{Habel-Pennemann-Rensink06a}) combines the other transformations to construct a weakest liberal precondition relative to a rule and postcondition.

\begin{theorem}[Postconditions to weakest liberal preconditions]\label{thm:Pre}\rm
	There is a transformation Pre such that for every rule $r = \langle \langle L \hookleftarrow K \hookrightarrow R \rangle, \text{ac} \rangle$, every M-constraint $c$, and every direct derivation $G\Rightarrow_r H$,
	\[ G \models \text{Pre}(r,c)\ \ \text{if and only if}\ \ H \models c. \]
	
	\noindent Moreover, $\text{Pre}(r,c) \vee \neg \text{App}(\{r\})$ is the \emph{weakest liberal precondition} relative to $r$ and $c$.\\

\noindent \emph{Construction.} Let $r = \langle \langle L \hookleftarrow K \hookrightarrow R \rangle, \text{ac} \rangle$ denote a rule and $c$ denote an M-constraint. Then:
	\[ \text{Pre}(r,c) = \forall(\emptyset\hookrightarrow L, (\text{Dang}(r) \wedge \text{ac}_L \wedge \text{L}(r,\text{ac}_R)) \Rightarrow \text{L}(r,\text{A}(r,c))). \]
	\qed
\end{theorem}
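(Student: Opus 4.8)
The plan is to prove the two claims separately: first, the correctness equivalence $G \models \text{Pre}(r,c) \Leftrightarrow H \models c$ for every direct derivation $G \Rightarrow_r H$; and second, that $\text{Pre}(r,c) \vee \neg\text{App}(\{r\})$ is \emph{weakest} among M-constraints with the liberal-precondition property. For the first claim, I would unfold the definition of $\models$ for the outermost $\forall(\emptyset \hookrightarrow L, \dots)$: $G \models \text{Pre}(r,c)$ holds iff for every injective morphism $g\colon L \hookrightarrow G$, whenever $g \models^{I_\emptyset} \text{Dang}(r) \wedge \text{ac}_L \wedge \text{L}(r,\text{ac}_R)$ then $g \models^{I_\emptyset} \text{L}(r,\text{A}(r,c))$.

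The key observation, which I would establish as the central lemma, is that for a fixed plain pushout-diagram structure $L \hookleftarrow K \hookrightarrow R$ and a fixed injective $g\colon L \hookrightarrow G$, the match $g$ extends to a direct derivation $G \Rightarrow_{r,g,h} H$ exactly when $g$ satisfies the dangling condition (captured by $\text{Dang}(r)$, whose correctness I would cite from the proof of Theorem on applicability / \cite{Pennemann09a}) and the application conditions $\text{ac}_L$ and $\text{ac}_R$; moreover the comatch $h\colon R \hookrightarrow H$ is then determined up to isomorphism. Using Theorem \ref{thm:L} (the correctness of $\text{L}$), $g \models^{I_\emptyset} \text{L}(r,\text{ac}_R)$ iff $h \models^{I_\emptyset} \text{ac}_R$, so the antecedent of the implication is satisfied iff $G \Rightarrow_{r,g,h} H$ is a genuine direct derivation via $r$. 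For the consequent, again by Theorem \ref{thm:L}, $g \models^{I_\emptyset} \text{L}(r,\text{A}(r,c))$ iff $h \models^{I_\emptyset} \text{A}(r,c)$, and by Theorem \ref{thm:A} (M-constraints to M-conditions over $R$) this holds iff $H \models c$. Chaining these equivalences: $G \models \text{Pre}(r,c)$ iff for every match $g$ giving a direct derivation $G \Rightarrow_{r,g,h} H$ we have $H \models c$. Since the statement fixes a particular direct derivation $G \Rightarrow_r H$, one direction is immediate; for the other, note that if $H \models c$ fails for \emph{some} derivation then $G \models \text{Pre}(r,c)$ fails, and conversely if it holds for all then $\text{Pre}(r,c)$ is satisfied --- here I should be slightly careful that the theorem as stated quantifies over a single derivation, so I would phrase it as: $G \models \text{Pre}(r,c)$ iff every $H$ with $G \Rightarrow_r H$ satisfies $c$, which specialises to the stated biconditional.

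For the weakest-liberal-precondition claim, I would argue that $d := \text{Pre}(r,c) \vee \neg\text{App}(\{r\})$ is a liberal precondition (if $G \models d$ and $G \Rightarrow_r H$ then $G \models \text{App}(\{r\})$ by Theorem on applicability, so $G \models \text{Pre}(r,c)$, hence $H \models c$ by the first part), and then that it is implied by any other liberal precondition $d'$: suppose $G \models d'$; if $r$ is not applicable to $G$ then $G \models \neg\text{App}(\{r\})$ so $G \models d$; if $r$ is applicable, pick any $G \Rightarrow_r H$, then $H \models c$ since $d'$ is a liberal precondition, and as this holds for all such $H$ the first part gives $G \models \text{Pre}(r,c)$, so again $G \models d$. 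Thus $d' \models d$ for every liberal precondition $d'$, which is exactly weakestness.

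I expect the main obstacle to be the careful bookkeeping in the central lemma linking matches, the dangling condition, $\text{Dang}(r)$, and pushout complements: one must verify that $\text{Dang}(r)$ as constructed (ranging over the three canonical one-item extensions $L^\oplus$ with no pushout complement) precisely characterises the dangling condition, and that the interaction with $\text{L}(r,\text{ac}_R)$ is sound --- i.e.\ that $\text{L}$'s correctness (Theorem \ref{thm:L}), which is stated relative to an \emph{existing} direct derivation, can be invoked here where we are simultaneously asserting that the derivation exists. The clean way around this is to observe that $\text{Dang}(r) \wedge \text{ac}_L$ already guarantees the derivation exists, after which $\text{L}(r,\text{ac}_R)$ and $\text{L}(r,\text{A}(r,c))$ are evaluated against a legitimate comatch $h$; the rest is routine composition of the cited theorems. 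I would also note that this construction is a direct adaptation of \cite{Habel-Pennemann-Rensink06a}, so the novelty is confined to the $\text{L}$ and $\text{A}$ components already handled by Theorems \ref{thm:A} and \ref{thm:L}.
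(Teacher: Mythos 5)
Your proposal is correct and follows essentially the same route as the paper: the paper's own proof of this theorem is only a deferral to the corresponding argument for nested conditions in \cite{Pennemann09a} (which chains Dang, L and A exactly as you do) and to \cite{Poskitt13a} for the weakestness of $\text{Pre}(r,c)\vee\neg\text{App}(\{r\})$, so your write-up is in effect the explicit version of that cited argument. Your two points of care --- that the stated biconditional must be read as quantifying over \emph{all} results $H$ of applying $r$ to $G$, and that $\text{Dang}(r)\wedge\text{ac}_L$ must be used to establish existence of the (plain) derivation and its comatch $h$ before Theorem~\ref{thm:L} can legitimately be invoked on $\text{L}(r,\text{ac}_R)$ --- are exactly the subtleties the paper glosses over.
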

	
\begin{proof}
	As for nested conditions (see \cite{Pennemann09a}), but adapted for the definition of $\models$ for M-conditions and Theorem \ref{thm:L}. That Pre and App can be used to construct the weakest liberal precondition is shown in \cite{Poskitt13a}.
	\qed
\end{proof}
	
\begin{example}\label{eg:grow_Pre_new}\rm
	Take $\tt grow$, $c$, $\gamma$ and $\text{L}(\mathtt{grow},\text{A}(\mathtt{grow},c))$ as considered in Example \ref{eg:grow_L_new}. Applying transformation Pre:
	
	\begin{center}
		\begin{tabular}{r c l}
			\multicolumn{3}{l}{$\text{Pre}(\mathtt{grow},\text{L}(\mathtt{grow},\text{A}(\mathtt{grow},c)))$} \\

			&$=$& $\forall (\begin{graph}(0.45,0.3)(0,0) \fillednodes \roundnode{A}(0.2,0.075) \opaquetextfalse \autonodetext{A}[se]{\vspace{4pt}\scriptsize $\mathtt{1}$}   \end{graph}, \text{ac}_L\Rightarrow \exists_\mathtt{V}\mathtt{X}\mt{,}\mathtt{Y}\mt{[}\ \bigwedge_{i\in\{1,2,4\}}\forall (\begin{graph}(0.5,0.3)(0,0) \fillednodes \roundnode{A}(0.2,0.075) \opaquetextfalse \autonodetext{A}[se]{\vspace{4pt}\scriptsize $\mathtt{1}$}   \end{graph}\hookrightarrow Y_i , \exists(Y_i \mid \mathtt{path}\mt{(}\mathtt{v,w}\mt{)}) \Rightarrow \exists (Y_i \mid \gamma))$\\
			
			&& \hspace{0.125in}$\wedge\ \forall (\begin{graph}(0.8,0.3)(0,0) \fillednodes \roundnode{A}(0.2,0.075) \opaquetextfalse \autonodetext{A}[se]{\vspace{4pt}\scriptsize $\mathtt{1}$}  \roundnode{C}(0.5,0.075) \opaquetextfalse  \autonodetext{C}[se]{\vspace{4pt}\scriptsize $\mathtt{v}$} \end{graph} , \exists(\begin{graph}(0.8,0.3)(0,0) \fillednodes \roundnode{A}(0.2,0.075) \opaquetextfalse \autonodetext{A}[se]{\vspace{4pt}\scriptsize $\mathtt{1}$}  \roundnode{C}(0.5,0.075) \opaquetextfalse  \autonodetext{C}[se]{\vspace{4pt}\scriptsize $\mathtt{v}$} \end{graph} \mid \mathtt{path},\mt{(}\mathtt{v,1}\mt{)}) \Rightarrow \exists (\begin{graph}(0.8,0.3)(0,0) \fillednodes \roundnode{A}(0.2,0.075) \opaquetextfalse \autonodetext{A}[se]{\vspace{4pt}\scriptsize $\mathtt{1}$}  \roundnode{C}(0.5,0.075) \opaquetextfalse  \autonodetext{C}[se]{\vspace{4pt}\scriptsize $\mathtt{v}$} \end{graph} \mid \mathtt{v\in X\ and\ not\ v\in Y}))$\\

			&& \hspace{0.125in}$\wedge\ \forall (\begin{graph}(0.65,0.3)(0,0) \fillednodes \roundnode{A}(0.2,0.075) \opaquetextfalse \autonodetext{A}[se]{\vspace{4pt}\scriptsize $\mathtt{1}\mt{=}\mathtt{v}$}   \end{graph} , \exists (\begin{graph}(0.65,0.3)(0,0) \fillednodes \roundnode{A}(0.2,0.075) \opaquetextfalse \autonodetext{A}[se]{\vspace{4pt}\scriptsize $\mathtt{1}\mt{=}\mathtt{v}$}   \end{graph} \mid \mathtt{v\in X\ and\ not\ v\in Y}))\ \mt{]})$\\
		\end{tabular}
	\end{center}
	
	\noindent where the graphs $Y_i$ are as given in Figure \ref{fig:eg-A-L}. This M-constraint is only satisfied by graphs that do not have any edges between distinct nodes, because of the assertion that every match (i.e.\ every node) must be in $\mathtt{X}$ and not in $\mathtt{Y}$. Were an edge to exist -- i.e.\ a path -- then the M-constraint asserts that its target is in $\mathtt{Y}$; a contradiction.
	\qed
\end{example}

	\section{Proving Non-Local Specifications}\label{sec:example_proofs}
	
	In this section we show how to systematically prove a non-local correctness specification using a Hoare logic adapted from \cite{Poskitt-Plump12a,Poskitt13a}. The key difference is the use of M-constraints as assertions, and our extension of Pre in constructing weakest liberal preconditions for rules. (We note that one could just as easily adapt the Dijkstra-style systems of \cite{Habel-Pennemann-Rensink06a,Pennemann09a}.)
	
	We will specify the behaviour of programs using \emph{(Hoare) triples}, $\{c\}\ P\ \{d\}$, where $P$ is a program, and $c,d$ are \emph{pre-} and \emph{postconditions} expressed as M-constraints. We say that this specification holds in the sense of \emph{partial correctness}, denoted by $\models \{c\}\ P\ \{d\}$, if for any graph $G$ satisfying $c$, every graph $H$ resulting from the execution of $P$ on $G$ satisfies $d$.
	
	For systematically proving a specification, we present a \emph{Hoare logic} in Figure \ref{fig:a_hoare_logic}, where $c,d,e,inv$ range over M-constraints, $P,Q$ over programs, $r$ over rules, and $\R$ over rule sets. If a triple $\{c\}\ P\ \{d\}$ can be instantiated from an axiom or deduced from an inference rule, then it is \emph{provable} in the Hoare logic and we write $\vdash \{c\}\ P\ \{d\}$. Proofs shall be displayed as trees, with the specification as the root, axiom instances as the leaves, and inference rule instances in-between.
	
	\begin{figure}[htb]
	\vspace{-25pt}
	{\footnotesize\begin{center}
	\begin{tabular}{ p{0.5\textwidth} p{0.5\textwidth} }

	\vspace{2pt}\begin{prooftree}
	\AxiomC{[ruleapp]$_\text{wlp}$\ \ $\{ \text{Pre}(r,c) \vee \neg\text{App}(\{r\}) \}~ r ~ \{ c \}$}
	\end{prooftree}
	&
	\begin{prooftree}
	\AxiomC{$\{ c \}~ r~ \{ d \}~ \text{for each $r\in\R$}$}
	\LeftLabel{[$\text{ruleset}$]}
	\UnaryInfC{$\{ c \}~ \R~ \{ d \}$}
	\end{prooftree}
	\\[-15pt]

	\begin{prooftree}
	\AxiomC{$\{ c \}~ P~ \{ e \}$}
	\AxiomC{$\{ e \}~ Q~ \{ d \}$}
	\LeftLabel{[comp]}
	\BinaryInfC{$\{ c \}~ P\mathtt{;}~ Q ~ \{ d \}$}
	\end{prooftree}

	&
	\begin{prooftree}
	\AxiomC{$\{ inv \}~ \mathcal{R}~ \{ inv \}$}
	\LeftLabel{[!]}
	\UnaryInfC{$\{ inv \}~ \mathcal{R}\mathtt{!} ~ \{ inv \wedge \neg\text{App}(\mathcal{R}) \}$}
	\end{prooftree}
	\\[-15pt]

	\multicolumn{2}{p{\textwidth}}{

	\begin{prooftree}
	\AxiomC{$c \Rightarrow c'\ \ \{ c' \}~ P~ \{ d' \}\ \ d' \Rightarrow d$}
	\LeftLabel{[cons]} 
	\UnaryInfC{$\{ c \}~ P ~ \{ d \}$}
	\end{prooftree}
	
	}\\
	\end{tabular}\end{center}\vspace*{-4mm}
	}
	\caption{A Hoare logic for partial correctness}\label{fig:a_hoare_logic}
	\end{figure}
	
	 For simplicity in proofs we will typically treat [ruleapp]$_\text{wlp}$ as two different axioms (one for each disjunct). Note that we have omitted, due to space, the proof rules for the conditional constructs. Note also the restriction to rule sets in [!], because the applicability of arbitrary programs cannot be expressed in a logic for which the model checking problem is decidable \cite{Poskitt13a}.
	
	\begin{theorem}[Soundness]\rm
		Given a program $P$ and M-constraints $c,d$, we have that $\vdash \{c\}\ P\ \{d\}$ implies $\models \{c\}\ P\ \{d\}$.
		\qed
	\end{theorem}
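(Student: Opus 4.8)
The plan is to argue by induction on the structure of the derivation tree witnessing $\vdash \{c\}\ P\ \{d\}$: for each axiom I show that the instantiated triple is valid, and for each inference rule I show that validity of the premises entails validity of the conclusion. Throughout, "$H$ is a possible outcome of $P$ on $G$" refers to the operational semantics of Appendix \ref{app:semantics}, and I use the fact that partial correctness only constrains \emph{graph} outcomes — if an execution yields the state $\mathrm{fail}$, or diverges, no obligation on $d$ arises. Following the remark after Figure \ref{fig:a_hoare_logic}, I treat $[\text{ruleapp}]_\text{wlp}$ as two axioms, one per disjunct.

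The base cases and the "logical" rules are routine. For the axiom $\{\neg\text{App}(\{r\})\}\ r\ \{c\}$: if $G \models \neg\text{App}(\{r\})$ then, by the correctness of $\text{App}$ (Theorem \ref{thm:A}), $r$ is not applicable to $G$, so there is no $H$ with $G \Rightarrow_r H$ and the triple holds vacuously. For the axiom $\{\text{Pre}(r,c)\}\ r\ \{c\}$: if $G \models \text{Pre}(r,c)$ and $G \Rightarrow_r H$, then $H \models c$ by Theorem \ref{thm:Pre}. For $[\text{ruleset}]$: $G \Rightarrow_\R H$ holds iff $G \Rightarrow_r H$ for some $r \in \R$, so the conclusion is immediate from the induction hypotheses for the individual rules. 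For $[\text{cons}]$: validity of an implication $c \Rightarrow c'$ between M-constraints is just inclusion of model classes, so any $G \models c$ satisfies $c'$, every outcome of $P$ on $G$ satisfies $d'$ by the induction hypothesis, hence $d$.

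The cases that genuinely use the operational semantics are $[\text{comp}]$, $[!]$, and the (omitted) conditionals. For $[\text{comp}]$ I would extract from $\Ssos$ the factorisation fact that every graph outcome $H$ of $P\mathtt{;}\,Q$ on $G$ arises as an outcome $H$ of $Q$ on some graph $G'$ that is itself an outcome of $P$ on $G$ (in particular $P$ does not fail), and then apply the induction hypothesis for $P$ to get $G' \models e$, followed by that for $Q$ to get $H \models d$. For $[!]$ I would first establish that a graph outcome $H$ of $\R\mathtt{!}$ on $G$ is exactly a graph reached by a finite sequence $G = G_0 \Rightarrow_\R \cdots \Rightarrow_\R G_n = H$ with $\R$ not applicable to $G_n$; an inner induction on $n$ using the induction hypothesis for the premise $\{inv\}\ \R\ \{inv\}$ gives $G_i \models inv$ for all $i$, so $H \models inv$, while $H \not\models \text{App}(\R)$ by Theorem \ref{thm:A}, so $H \models inv \wedge \neg\text{App}(\R)$. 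The conditional rules are handled analogously, case-splitting on whether $C$ has a graph outcome or fails on $G$: in both the $\mathtt{if}$ and $\mathtt{try}$ variants the final outcome is an outcome of the chosen branch run on an appropriate graph (the input $G$ for $\mathtt{if}$, the graph produced by $C$ for $\mathtt{try}$), so the induction hypothesis for that branch applies; the $\mathtt{skip}$ axiom is trivial.

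I expect the main obstacle to be purely bookkeeping rather than mathematical: making precise, from the semantics $\Ssos$ in the appendix, the "decomposition" lemmas used above — that every outcome of $P\mathtt{;}\,Q$ passes through an outcome of $P$, that every outcome of $\R\mathtt{!}$ is a maximal $\Rightarrow_\R$-sequence, and the analogous statements for $\mathtt{if}$/$\mathtt{try}$ — and being careful about the $\mathrm{fail}$ state and non-termination so that the partial-correctness reading is respected. All the genuinely graph-transformational content (shifting, applicability, weakest liberal preconditions) has already been discharged in Theorems \ref{thm:A} and \ref{thm:Pre}, so no further rewriting arguments are needed.
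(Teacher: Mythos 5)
Your proposal is correct and takes essentially the same approach as the paper: the paper gives no inline proof but defers to the soundness proof of the corresponding calculus in \cite{Poskitt13a}, which proceeds exactly as you describe — structural induction on the derivation tree, with the rule-application axiom discharged by the App and Pre theorems and the compound constructs handled via decomposition lemmas extracted from the operational semantics. Your identification of the real work (the semantic decomposition lemmas for sequencing, iteration, and the conditionals, plus care with $\mathrm{fail}$ and divergence under the partial-correctness reading) matches where that proof spends its effort.
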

	
\begin{proof}
	See \cite{Poskitt13a} for a soundness proof of the corresponding extensional partial correctness calculus.
	\qed
\end{proof}

	The remainder of this section demonstrates the use of our constructions and Hoare logic in proving non-local specifications of two programs. For the first, we will consider a property expressed in terms of MSO variables and expressions, whereas for the second, we will consider properties expressed in terms of $\tt path$ predicates. Both programs are simple, as our focus here is not on building intricate proofs but rather on illustrating the main novelty of this paper: a Pre construction for MSO properties.
	
	\begin{example}\rm
	Recall the program $\mathtt{init;\ grow!}$ of Example \ref{eg:grow_tree} that nondeterministically constructs a tree. A known non-local property of trees is that they can be assigned a 2-colouring (i.e.\ they are bipartite), a property that the M-constraint $col$ of Example \ref{eg:2colouring} precisely expresses. Hence we will show that $\vdash \{emp\}\ \mathtt{init;\ grow!}\ \{col\}$, where $emp = \neg \exists(\begin{graph}(0.4,0.3)(0,0) \fillednodes \roundnode{A}(0.2,0.075)   \end{graph})$ expresses that the graph is empty. A proof tree for this specification is given in Figure \ref{eg:trees_2col}, where the interpretation constraints $\gamma_1$ and $\gamma_2$ in $\text{Pre}(\mathtt{grow},col)$ are respectively $\mt{(}\mathtt{v\!\in\!X\ or\ v\!\in\!Y}\mt{)}$ $\mathtt{and\ not\ }\mt{(}\mathtt{v\!\in\!X\ and\ v\!\in\!Y}\mt{)}$ and $\mathtt{not\ }\mt{(}\mathtt{v\!\in\!X\ and\ w\!\in\!X}\mt{)}\mathtt{\ and\ not\ }\mt{(}\mathtt{v\!\in\!Y\ and\ w\!\in\!Y}\mt{)}$.

\begin{figure}[htb]
	\centering	
	{\small
		\begin{prooftree}

\AxiomC{$\{\text{Pre}(\mathtt{init},col)\}\ \mathtt{init}\ \{col\}$}

	\UnaryInfC{$\{ emp  \}~ \mathtt{init}~ \{ col \}$}

\AxiomC{$\{\text{Pre}(\mathtt{grow},col)\}\ \mathtt{grow}\ \{col\} $}

	\UnaryInfC{$\{ col \}~ \mathtt{grow}~ \{ col \}$}

	\UnaryInfC{$\{ col \}~ \mathtt{grow}!~ \{ col \wedge \neg\text{App}(\mathtt{\{grow\}}) \}$}

	\BinaryInfC{$\vdash \{emp\}\ \mathtt{init; grow!}\ \{col\}$}
	\end{prooftree}}

		\begin{tabular}{r c l}
			$\text{Pre}(\mathtt{init},col)$ &$\equiv$& $col$ \\

			$\text{Pre}(\mathtt{grow},col)$ &$\equiv$& $\forall(\begin{graph}(0.5,0.3)(0,0) \fillednodes \roundnode{A}(0.2,0.075) \opaquetextfalse \autonodetext{A}[se]{\vspace{4pt}\scriptsize $\mathtt{1}$}  \end{graph}, \neg tc \Rightarrow \exists_\mathtt{V}\mathtt{X}\mt{,}\mathtt{Y}\mt{[}\  $\\
			
			&& \hspace{0.25in}$\forall(\begin{graph}(0.8,0.3)(0,0) \fillednodes \roundnode{A}(0.2,0.075) \opaquetextfalse \autonodetext{A}[se]{\vspace{4pt}\scriptsize $\mathtt{1}$}  \roundnode{C}(0.5,0.075) \opaquetextfalse  \autonodetext{C}[se]{\vspace{4pt}\scriptsize $\mathtt{v}$} \end{graph},\exists( \begin{graph}(0.8,0.3)(0,0) \fillednodes \roundnode{A}(0.2,0.075) \opaquetextfalse \autonodetext{A}[se]{\vspace{4pt}\scriptsize $\mathtt{1}$}  \roundnode{C}(0.5,0.075) \opaquetextfalse  \autonodetext{C}[se]{\vspace{4pt}\scriptsize $\mathtt{v}$} \end{graph} \mid \gamma_1)) \wedge\ \forall(\begin{graph}(0.7,0.3)(0,0) \fillednodes \roundnode{A}(0.2,0.075) \opaquetextfalse \autonodetext{A}[se]{\vspace{4pt}\scriptsize $\mathtt{1}\mt{=}\mathtt{v}$}  \end{graph},\exists( \begin{graph}(0.7,0.3)(0,0) \fillednodes \roundnode{A}(0.2,0.075) \opaquetextfalse \autonodetext{A}[se]{\vspace{4pt}\scriptsize $\mathtt{1}\mt{=}\mathtt{v}$}  \end{graph} \mid \gamma_1))$ \\

			&& \hspace{0.25in}$\wedge\ \forall( \begin{graph}(1.1,0.3)(0,0) \fillednodes \roundnode{A}(0.2,0.075) \opaquetextfalse \autonodetext{A}[se]{\vspace{4pt}\scriptsize $\mathtt{1}$}  \roundnode{C}(0.5,0.075) \opaquetextfalse  \autonodetext{C}[se]{\vspace{4pt}\scriptsize $\mathtt{v}$} \roundnode{D}(0.8,0.075) \opaquetextfalse  \autonodetext{D}[se]{\vspace{4pt}\scriptsize $\mathtt{w}$} \end{graph} , \exists(\begin{graph}(1.4,0.3)(0,0) \fillednodes \roundnode{A}(0.2,0.075) \opaquetextfalse \autonodetext{A}[se]{\vspace{4pt}\scriptsize $\mathtt{1}$} \roundnode{C}(0.5,0.075) \opaquetextfalse  \autonodetext{C}[se]{\vspace{4pt}\scriptsize $\mathtt{v}$} \roundnode{D}(1.1,0.075) \opaquetextfalse  \autonodetext{D}[se]{\vspace{4pt}\scriptsize $\mathtt{w}$} \diredge{C}{D} \end{graph}) \Rightarrow \exists( \begin{graph}(1.1,0.3)(0,0) \fillednodes \roundnode{A}(0.2,0.075) \opaquetextfalse \autonodetext{A}[se]{\vspace{4pt}\scriptsize $\mathtt{1}$}  \roundnode{C}(0.5,0.075) \opaquetextfalse  \autonodetext{C}[se]{\vspace{4pt}\scriptsize $\mathtt{v}$} \roundnode{D}(0.8,0.075) \opaquetextfalse  \autonodetext{D}[se]{\vspace{4pt}\scriptsize $\mathtt{w}$} \end{graph} \mid \gamma_2))\ $\\
			
			&& \hspace{0.25in}$\wedge\ \forall( \begin{graph}(1,0.3)(0,0) \fillednodes \roundnode{A}(0.2,0.075) \opaquetextfalse \autonodetext{A}[se]{\vspace{4pt}\scriptsize $\mathtt{1}\mt{=}\mathtt{v}$}  \roundnode{C}(0.7,0.075) \opaquetextfalse  \autonodetext{C}[se]{\vspace{4pt}\scriptsize $\mathtt{w}$} \end{graph} , \exists(\begin{graph}(1.1,0.3)(0,0) \fillednodes \roundnode{A}(0.2,0.075) \opaquetextfalse \autonodetext{A}[se]{\vspace{4pt}\scriptsize $\mathtt{1}\mt{=}\mathtt{v}$} \roundnode{B}(0.8,0.075) \opaquetextfalse \autonodetext{B}[se]{\vspace{4pt}\scriptsize $\mathtt{w}$} \diredge{A}{B}   \end{graph}) \Rightarrow \exists( \begin{graph}(1,0.3)(0,0) \fillednodes \roundnode{A}(0.2,0.075) \opaquetextfalse \autonodetext{A}[se]{\vspace{4pt}\scriptsize $\mathtt{1}\mt{=}\mathtt{v}$}  \roundnode{C}(0.7,0.075) \opaquetextfalse  \autonodetext{C}[se]{\vspace{4pt}\scriptsize $\mathtt{w}$} \end{graph} \mid \gamma_2))\ $\\

			&& \hspace{0.25in}$\wedge\ \forall( \begin{graph}(1,0.3)(0,0) \fillednodes \roundnode{A}(0.2,0.075) \opaquetextfalse \autonodetext{A}[se]{\vspace{4pt}\scriptsize $\mathtt{1}\mt{=}\mathtt{w}$}  \roundnode{C}(0.7,0.075) \opaquetextfalse  \autonodetext{C}[se]{\vspace{4pt}\scriptsize $\mathtt{v}$} \end{graph} , \exists(\begin{graph}(1.1,0.3)(0,0) \fillednodes \roundnode{A}(0.2,0.075) \opaquetextfalse \autonodetext{A}[se]{\vspace{4pt}\scriptsize $\mathtt{1}\mt{=}\mathtt{w}$}  \roundnode{C}(0.8,0.075) \opaquetextfalse  \autonodetext{C}[se]{\vspace{4pt}\scriptsize $\mathtt{v}$} \diredge{C}{A} \end{graph}) \Rightarrow \exists( \begin{graph}(1,0.3)(0,0) \fillednodes \roundnode{A}(0.2,0.075) \opaquetextfalse \autonodetext{A}[se]{\vspace{4pt}\scriptsize $\mathtt{1}\mt{=}\mathtt{w}$}  \roundnode{C}(0.7,0.075) \opaquetextfalse  \autonodetext{C}[se]{\vspace{4pt}\scriptsize $\mathtt{v}$} \end{graph} \mid \gamma_2))\ $\\
			
			&& \hspace{0.25in}$\wedge\ (\forall( \begin{graph}(0.7,0.3)(0,0) \fillednodes \roundnode{A}(0.2,0.075) \opaquetextfalse \autonodetext{A}[se]{\vspace{4pt}\scriptsize $\mathtt{1}\mt{=}\mathtt{v}$}  \end{graph} ,  \exists( \begin{graph}(0.7,0.3)(0,0) \fillednodes \roundnode{A}(0.2,0.075) \opaquetextfalse \autonodetext{A}[se]{\vspace{4pt}\scriptsize $\mathtt{1}\mt{=}\mathtt{v}$}  \end{graph} \mid \mathtt{not\ v\in X}))$\\
			
			&& \hspace{0.375in}$\vee\ \forall( \begin{graph}(0.7,0.3)(0,0) \fillednodes \roundnode{A}(0.2,0.075) \opaquetextfalse \autonodetext{A}[se]{\vspace{4pt}\scriptsize $\mathtt{1}\mt{=}\mathtt{v}$}  \end{graph} ,  \exists( \begin{graph}(0.7,0.3)(0,0) \fillednodes \roundnode{A}(0.2,0.075) \opaquetextfalse \autonodetext{A}[se]{\vspace{4pt}\scriptsize $\mathtt{1}\mt{=}\mathtt{v}$}  \end{graph} \mid \mathtt{not\ v\in Y})))\ \mt{]})$ 

		\end{tabular}
	\caption{Trees are 2-colourable}\label{eg:trees_2col}
\end{figure}

	Observe that $\text{Pre}(\mathtt{grow},col)$ is essentially an ``embedding'' of the postcondition $col$ within the context of possible matches for $\tt grow$. The second line expresses that every node (whether the node of the match or not) is coloured $\mathtt{X}$ or $\mathtt{Y}$. The following three conjuncts then express that any edges in the various contexts of the match connect nodes that are differently coloured. The final conjunct is of the same form, but is ``pre-empting'' the creation of a node and edge by $\tt grow$. To ensure that the graph remains 2-colourable, node $\mathtt{1}$ of the match must not belong to both sets; this, of course, is already established by the first nested conjunct. Hence the first implication arising from instances of [cons], $col \Rightarrow \text{Pre}(\mathtt{grow},col)$, is valid. The second implication, $emp \Rightarrow \text{Pre}(\mathtt{init},col)$, is also valid since a graph satisfying $emp$ will not have any nodes to quantify over.
	\qed
\end{example}

	\begin{example}\rm	
		An \emph{acyclic graph} is a graph that does not contain any \emph{cycles}, i.e.\ non-empty paths starting and ending on the same node. One way to test for acyclicity is to apply the rule $\mathtt{delete} = \langle \langle \begin{graph}(1.1,0.3)(0,0) \fillednodes  \roundnode{A}(0.2,0.075) \opaquetextfalse \autonodetext{A}[se]{\vspace{4pt}\scriptsize $\mathtt{1}$} \roundnode{B}(0.8,0.075) \opaquetextfalse \autonodetext{B}[se]{\vspace{4pt}\scriptsize $\mathtt{2}$} \diredge{A}{B} 
		 \end{graph} \Rightarrow 	\begin{graph}(0.8,0.3)(0,0) \fillednodes  \roundnode{A}(0.2,0.075) \opaquetextfalse \autonodetext{A}[se]{\vspace{4pt}\scriptsize $\mathtt{1}$} \roundnode{B}(0.5,0.075) \opaquetextfalse \autonodetext{B}[se]{\vspace{4pt}\scriptsize $\mathtt{2}$}  
			 \end{graph}\rangle, \text{ac}_L \rangle$ for as long as possible; the resulting graph being edgeless if the input graph was acyclic. Here, $\text{ac}_L$ denotes the left application condition $	\neg\exists(\begin{graph}(1.1,0.3)(0,0) \fillednodes  \roundnode{A}(0.2,0.075) \opaquetextfalse \autonodetext{A}[se]{\vspace{4pt}\scriptsize $\mathtt{1}$} \roundnode{B}(0.8,0.075) \opaquetextfalse \autonodetext{B}[se]{\vspace{4pt}\scriptsize $\mathtt{2}$} \diredge{A}{B} 
			 \end{graph} \hookrightarrow \begin{graph}(1.7,0.3)(0,0) \fillednodes  \roundnode{A}(0.8,0.075) \opaquetextfalse \autonodetext{A}[se]{\vspace{4pt}\scriptsize $\mathtt{1}$} \roundnode{B}(1.4,0.075) \opaquetextfalse \autonodetext{B}[se]{\vspace{4pt}\scriptsize $\mathtt{2}$} \diredge{A}{B} \roundnode{C}(0.2,0.075) \diredge{C}{A} 
			 \end{graph}) \vee \neg\exists(\begin{graph}(1.1,0.3)(0,0) \fillednodes  \roundnode{A}(0.2,0.075) \opaquetextfalse \autonodetext{A}[se]{\vspace{4pt}\scriptsize $\mathtt{1}$} \roundnode{B}(0.8,0.075) \opaquetextfalse \autonodetext{B}[se]{\vspace{4pt}\scriptsize $\mathtt{2}$} \diredge{A}{B} 
			 \end{graph}\hookrightarrow \begin{graph}(1.7,0.3)(0,0) \fillednodes  \roundnode{A}(0.2,0.075) \opaquetextfalse \autonodetext{A}[se]{\vspace{4pt}\scriptsize $\mathtt{1}$} \roundnode{B}(0.8,0.075) \opaquetextfalse \autonodetext{B}[se]{\vspace{4pt}\scriptsize $\mathtt{2}$} \diredge{A}{B} \roundnode{C}(1.4,0.075) \diredge{B}{C}
			 \end{graph})$, expressing that in matches, either the source node has indegree $0$ or the target node has outdegree $0$ (we do not consider the special case of looping edges for simplicity). Note that nodes \emph{within} a cycle would not satisfy this: if a source node has an indegree of $0$ for example, there would be no possibility of an outgoing path ever returning to the same node.
		
		We prove two claims about this rule under iteration: first, that it deletes all edges in an acyclic graph; second, that if applied to a graph containing cycles, the resulting graph would not be edgeless. That is, $\vdash \{\neg c \}\ \mathtt{delete}!\ \{e\}$ and $\vdash \{c\}\ \mathtt{delete}!\ \{\neg e\}$, for M-constraints $c$ (for \underline{c}ycles), $e$ (for \underline{e}dgeless), $\gamma_c = \mathtt{path}\mt{(}\mathtt{v,w,not\ e}\mt{)}\ \mathtt{and}\ \mathtt{path}\mt{(}\mathtt{w,v,not\ e}\mt{)}$, and proofs as in Figure \ref{fig:proof_trees_acyclicity}.

	\begin{figure}[htb]
		\centering

	\vspace{-25pt}{\small\begin{tabular}{ p{0.5\textwidth} p{0.5\textwidth} }
				
					\begin{prooftree}

			\AxiomC{$\{\text{Pre}(\mathtt{delete},\neg c)\}\ \mathtt{delete}\ \{\neg c\}$}

			\UnaryInfC{$ \{\neg c\}\ \mathtt{delete}\ \{ \neg c \}$}

				\UnaryInfC{$ \{\neg c\}\ \mathtt{delete}!\ \{ \neg c \wedge \neg\text{App}(\{\mathtt{delete}\})\}$}

				\UnaryInfC{$\vdash \{\neg c\}\ \mathtt{delete}!\ \{ e \}$}
				\end{prooftree}

		    &
				
					\begin{prooftree}

						\AxiomC{$\{\text{Pre}(\mathtt{delete}, c)\}\ \mathtt{delete}\ \{ c\}$}

			\UnaryInfC{$ \{ c\}\ \mathtt{delete}\ \{  c \}$}

				\UnaryInfC{$ \{ c\}\ \mathtt{delete}!\ \{  c \wedge \neg\text{App}(\{\mathtt{delete}\})\}$}

				\UnaryInfC{$\vdash \{ c\}\ \mathtt{delete}!\ \{ \neg e \}$}
				\end{prooftree}
			\end{tabular}}
			
			\begin{tabular}{r c l}
				$c$ &$=$& $\exists (\begin{graph}(0.8,0.3)(0,0) \fillednodes \roundnode{A}(0.2,0.075)  \autonodetext{A}[se]{\vspace{4pt}\scriptsize $\mathtt{v}$} \roundnode{B}(0.5,0.075) \autonodetext{B}[se]{\vspace{4pt}\scriptsize $\mathtt{w}$} \end{graph} \mid \mathtt{path}\mt{(}\mathtt{v,w}\mt{)}\ \mathtt{and}\ \mathtt{path}\mt{(}\mathtt{w,v}\mt{)})$\\

				$e$ &$=$& $\neg\exists (\begin{graph}(1.1,0.3)(0,0) \fillednodes  \roundnode{A}(0.2,0.075) \opaquetextfalse \autonodetext{A}[se]{\vspace{4pt}\scriptsize $\mathtt{v}$} \roundnode{B}(0.8,0.075) \opaquetextfalse \autonodetext{B}[se]{\vspace{4pt}\scriptsize $\mathtt{w}$} \diredge{A}{B} 
				 \end{graph})$\\
				
				$\text{Pre}(\mathtt{delete},\neg c )$ &$=$& $\forall (\begin{graph}(1.1,0.3)(0,0) \fillednodes  \roundnode{A}(0.2,0.075) \opaquetextfalse \autonodetext{A}[se]{\vspace{4pt}\scriptsize $\mathtt{1}$} \roundnode{B}(0.8,0.075) \opaquetextfalse \autonodetext{B}[se]{\vspace{4pt}\scriptsize $\mathtt{2}$} \diredge{A}{B} \bowtext{A}{B}{0.2}{\scriptsize $\mathtt{e}$}
				 \end{graph}, \text{ac}_L \Rightarrow$\\

				&&  \hspace{0.125in}$\wedge\ \neg\exists(\begin{graph}(1.7,0.3)(0,0) \fillednodes  \roundnode{A}(0.2,0.075) \opaquetextfalse \autonodetext{A}[se]{\vspace{4pt}\scriptsize $\mathtt{1}$} \roundnode{B}(0.8,0.075) \opaquetextfalse \autonodetext{B}[se]{\vspace{4pt}\scriptsize $\mathtt{2}$} \diredge{A}{B} \bowtext{A}{B}{0.2}{\scriptsize $\mathtt{e}$} \roundnode{C}(1.1,0.075) \opaquetextfalse \autonodetext{C}[se]{\vspace{4pt}\scriptsize $\mathtt{v}$} \roundnode{D}(1.4,0.075) \opaquetextfalse \autonodetext{D}[se]{\vspace{4pt}\scriptsize $\mathtt{w}$}
				 \end{graph} \mid \gamma_c) \wedge \neg\exists(\begin{graph}(1.4,0.3)(0,0) \fillednodes  \roundnode{A}(0.2,0.075) \opaquetextfalse \autonodetext{A}[se]{\vspace{4pt}\scriptsize $\mathtt{1}\mt{=}\mathtt{v}$} \roundnode{B}(0.8,0.075) \opaquetextfalse \autonodetext{B}[se]{\vspace{4pt}\scriptsize $\mathtt{2}$} \diredge{A}{B} \bowtext{A}{B}{0.2}{\scriptsize $\mathtt{e}$} \roundnode{C}(1.1,0.075) \opaquetextfalse \autonodetext{C}[se]{\vspace{4pt}\scriptsize $\mathtt{w}$} 
				 \end{graph} \mid \gamma_c)$\\

				&& \hspace{0.125in}$\wedge \neg\exists(\begin{graph}(1.6,0.3)(0,0) \fillednodes  \roundnode{A}(0.2,0.075) \opaquetextfalse \autonodetext{A}[se]{\vspace{4pt}\scriptsize $\mathtt{1}$} \roundnode{B}(0.8,0.075) \opaquetextfalse \autonodetext{B}[se]{\vspace{4pt}\scriptsize $\mathtt{2}\mt{=}\mathtt{v}$} \diredge{A}{B} \bowtext{A}{B}{0.2}{\scriptsize $\mathtt{e}$} \roundnode{C}(1.3,0.075) \opaquetextfalse \autonodetext{C}[se]{\vspace{4pt}\scriptsize $\mathtt{w}$} 
				 \end{graph} \mid \gamma_c) \wedge \neg\exists(\begin{graph}(1.4,0.3)(0,0) \fillednodes  \roundnode{A}(0.2,0.075) \opaquetextfalse \autonodetext{A}[se]{\vspace{4pt}\scriptsize $\mathtt{1}\mt{=}\mathtt{w}$} \roundnode{B}(0.8,0.075) \opaquetextfalse \autonodetext{B}[se]{\vspace{4pt}\scriptsize $\mathtt{2}$} \diredge{A}{B} \bowtext{A}{B}{0.2}{\scriptsize $\mathtt{e}$} \roundnode{C}(1.1,0.075) \opaquetextfalse \autonodetext{C}[se]{\vspace{4pt}\scriptsize $\mathtt{v}$} 
				 \end{graph} \mid \gamma_c)$\\

				&& \hspace{0.125in}$\wedge\ \neg\exists(\begin{graph}(1.6,0.3)(0,0) \fillednodes  \roundnode{A}(0.2,0.075) \opaquetextfalse \autonodetext{A}[se]{\vspace{4pt}\scriptsize $\mathtt{1}$} \roundnode{B}(0.8,0.075) \opaquetextfalse \autonodetext{B}[se]{\vspace{4pt}\scriptsize $\mathtt{2}\mt{=}\mathtt{w}$} \diredge{A}{B} \bowtext{A}{B}{0.2}{\scriptsize $\mathtt{e}$} \roundnode{C}(1.3,0.075) \opaquetextfalse \autonodetext{C}[se]{\vspace{4pt}\scriptsize $\mathtt{v}$} 
				 \end{graph} \mid \gamma_c) \wedge \neg\exists(\begin{graph}(1.2,0.3)(0,0) \fillednodes  \roundnode{A}(0.2,0.075) \opaquetextfalse \autonodetext{A}[se]{\vspace{4pt}\scriptsize $\mathtt{1}\mt{=}\mathtt{v}$} \roundnode{B}(0.8,0.075) \opaquetextfalse \autonodetext{B}[se]{\vspace{4pt}\scriptsize $\mathtt{2}\mt{=}\mathtt{w}$} \diredge{A}{B} \bowtext{A}{B}{0.2}{\scriptsize $\mathtt{e}$} 
				 \end{graph} \mid \gamma_c)$\\

				&& \hspace{0.125in}$ \wedge\ \neg\exists(\begin{graph}(1.2,0.3)(0,0) \fillednodes  \roundnode{A}(0.2,0.075) \opaquetextfalse \autonodetext{A}[se]{\vspace{4pt}\scriptsize $\mathtt{1}\mt{=}\mathtt{w}$} \roundnode{B}(0.8,0.075) \opaquetextfalse \autonodetext{B}[se]{\vspace{4pt}\scriptsize $\mathtt{2}\mt{=}\mathtt{v}$} \diredge{A}{B} \bowtext{A}{B}{0.2}{\scriptsize $\mathtt{e}$} 
				 \end{graph} \mid \gamma_c))$\\

				$\text{App}(\{\mathtt{delete}\})$ &$=$& $\exists (\begin{graph}(1.1,0.3)(0,0) \fillednodes  \roundnode{A}(0.2,0.075) \opaquetextfalse \autonodetext{A}[se]{\vspace{4pt}\scriptsize $\mathtt{1}$} \roundnode{B}(0.8,0.075) \opaquetextfalse \autonodetext{B}[se]{\vspace{4pt}\scriptsize $\mathtt{2}$} \diredge{A}{B} 
				 \end{graph}, \text{ac}_L)	$\\
			\end{tabular}
	
		\caption{Acyclity (or lack thereof) is invariant}\label{fig:proof_trees_acyclicity}
	\end{figure}

	First, observe that $\text{Pre}(\mathtt{delete},\neg c)$ is essentially an ``embedding'' of the postcondition $\neg c$ within the context of possible matches for $\tt delete$. The path predicates in $\gamma_c$ now additionally assert (as a result of the L transformation) that paths do not include images of edge $\tt e$: this is crucially important for establishing the postcondition because the rule deletes the edge. For space reasons we did not specify $\text{Pre}(\mathtt{delete},c)$, but this can be constructed from $\text{Pre}(\mathtt{delete},\neg c)$ by replacing each $\wedge$ with $\vee$ and removing each $\neg$ in the nested part.
	
	The instances of [cons] give rise to implications that we must show to be valid. First, $\neg c \Rightarrow \text{Pre}(\mathtt{delete},\neg c)$ is valid: a graph satisfying $\neg c$ does not contain any cycles, hence it also does not contain cycles outside of the context of matches for $\tt delete$. Second, $\neg c \wedge \neg\text{App}(\{\mathtt{delete}\}) \Rightarrow e$ is valid: a graph satisfying the antecedent does not contain any cycles and also no pair of incident nodes for which $\text{ac}_L$ holds. If the graph is not edgeless, then there must be some such pair satisfying $\text{ac}_L$; otherwise the edges are within a cycle. Hence the graph must be edgeless, satisfying $e$.
	
	In the second proof tree, $c \Rightarrow \text{Pre}(\mathtt{delete},c)$ is valid. A graph satisfying $c$ contains a cycle: clearly, no edge (with its source and target) in this cycle satisfies $\text{ac}_L$; hence the graph satisfies the consequent, since images of edge $\tt e$ cannot be part of the cycle in the graph. Finally, $c \wedge \neg \text{App}(\{\mathtt{delete}\}) \Rightarrow \neg e$ is valid: if a graph satisfies the antecedent, then it contains a cycle, the edges of which $\tt delete$ will never be applicable to because of $\text{ac}_L$; hence the graph cannot be edgeless, and satisfies $\neg e$.
	\qed
	\end{example}

	\section{Related Work}\label{sec:related_work}
	
	We point to a few related publications addressing the verification of non-local graph properties through proofs / theorem proving and model checking.
	
	Habel and Radke have considered HR conditions \cite{Habel-Radke10a}, an extension of nested conditions embedding hyperedge replacement grammars via graph variables. The formalism is more expressive than MSO logic on graphs (it is able, for example, to express node-counting MSO properties such as ``the graph has an even number of nodes'' \cite{Radke13a}) but it is not yet clear whether an effective construction for weakest liberal preconditions exists. Percebois et al. \cite{Percebois-Strecker-Tran13a} demonstrate how one can verify global invariants involving paths, directly at the level of rules. Rules are modelled with (a fragment of) first-order logic on graphs in the interactive theorem prover Isabelle. Inaba et al. \cite{Inaba-et-al11a} address the verification of type-annotated Core UnCAL -- a query algebra for graph-structured databases -- against input/output graph schemas in MSO. They first reformulate the query algebra itself in MSO, before applying an algorithm that reduces the verification problem to the validity of MSO over trees.
	
	The GROOVE model checker \cite{Ghamarian-Mol-Rensink-Zambon-Zimakova12a} supports rules with paths in the left-hand side, expressed as a regular expression over edge labels. One can specify such rules to match only when some (un)desirable non-local property holds, and then verify automatically that the rule is never applicable. Augur 2 \cite{Koenig-Kozioura08a} also uses regular expressions, but for expressing forbidden paths that should not occur in any reachable graph.
	
	\section{Conclusion}\label{sec:conclusion}
	
	This paper has contributed the means for systematic proofs of graph programs with respect to non-local specifications. In particular, we defined M-conditions, an extension of nested conditions equivalently expressive to MSO logic on graphs, and defined for this assertion language an effective construction for weakest liberal preconditions of rules. We demonstrated the use of this work in some Hoare-style proofs of programs relative to non-local invariants, i.e.\ the existence of 2-colourings, and the existence of arbitrary-length cycles. Some interesting topics for future work include: extending M-conditions and Pre to support other useful predicates (e.g.\ an \emph{undirected} path predicate), adding support for attribution (e.g.\ along the lines of \cite{Poskitt-Plump12a,Poskitt13a}), implementing the construction of Pre, and generalising the resolution- and tableau-based reasoning systems for nested conditions \cite{Pennemann08a,Lambers-Orejas14a} to M-conditions.\\
	
	\noindent \textbf{Acknowledgements.} The research leading to these results has received funding from the
	European Research Council under the European Union's Seventh Framework
	Programme (FP7/2007-2013) / ERC Grant agreement no. 291389.

	\bibliographystyle{splncs03}
	
	\bibliography{references}
	
	\newpage\appendix
	
	\section*{Appendix: Proofs and Semantics}
	
	\section{Expressive Equivalence to MSO Formulae}\label{app:proofs:expressiveness}
	
	In this section we prove that M-conditions and MSO formulae on graphs are equivalently expressive. We define a many-sorted MSO logic on graphs (in the spirit of \cite{Courcelle90a}), and show that there are translations from this logic to M-conditions and vice versa. The logic and translations are based on those of \cite{Poskitt13a} for nested conditions with expressions. (An alternative approach is to use a single-sorted logic, e.g. \cite{Habel-Radke10a}.) Throughout this section we will assume that graphs are labelled over some fixed label alphabet $\mathcal{C} = \langle \mathcal{C}_V,\mathcal{C}_E \rangle$.
	
	\subsection{Syntax and Semantics}
	
	We define the syntax and semantics of a many-sorted MSO logic on graphs. The idea is to assign sorts (or types) -- edge, vertex, edge set, or vertex set -- to every expression of the logic, and prevent at the syntactic level the composition of formulae that do not ``make sense'' under interpretation. For example, we discard as syntactically ill-formed any expression $\mathtt{s}\mt{(}x\mt{)}$ in which $x$ is not an edge expression (since this will be interpreted as the source function of some graph).

	\begin{definition}[Expressions]\label{defn:ms_logic:expressions}\rm
		The grammar in Figure \ref{fig:app:expressions} defines four syntactic categories of \emph{expressions}: Edge, Vertex, EdgeSet, and VertexSet. They respectively contain (disjoint) syntactic categories of \emph{variables}: EVar, VVar, ESetVar, and VSetVar.

	\vspace*{-.25\baselineskip} 
	\begin{figure}[htb]
	\renewcommand{\arraystretch}{1.2}
	\begin{center}
	\begin{tabular}{lcl}
	Expression & ::= & Edge $\mid$ Vertex $\mid$ EdgeSet $\mid$ VertexSet \\
	Edge & ::= & EVar \\
	Vertex & ::= & VVar $\mid$ ($\mathtt{s}$ $\mid$ $\mathtt{t}$) '$\mt{(}$' Edge '$\mt{)}$'\\
	EdgeSet & ::= & ESetVar \\
	VertexSet & ::= & VSetVar \\
	\end{tabular}
	\end{center}
	\caption{Abstract syntax of expressions \label{fig:app:expressions}}
	\end{figure}
	\qed
	\end{definition}

	\begin{definition}[Sorts, sort function]\label{defn:app:sorts}\rm
		Every expression is associated with a \emph{sort} (or \emph{type}), determined by the syntactic category it is contained within. We use the name of that category to denote its sort. The function $\text{sort}(e)$ is the \emph{sort function}, that takes an expression $e$ as input and returns its sort.
	\qed
	\end{definition}
	
	The formulae of the logic can quantify over first-order and MSO (i.e. set) variables, and express the existence of edges and nodes in sets of the corresponding type. Note that we do not include equality of set variables, since this can be defined precisely in terms of set membership over individual elements.

	\begin{definition}[Formulae]\label{defn:app:formulae}\rm
		Figure \ref{fig:app:ms_formula_syntax} defines \emph{formulae}, where $b \in \mathcal{C}_E$ and $c \in \mathcal{C}_V$.

	\begin{figure}[htb]
	\renewcommand{\arraystretch}{1.2}
	\begin{center}
	\begin{tabular}{lcl}
	Formula & ::= & $\mathtt{true}$ $\mid$ $\mathtt{false}$ $\mid$ Edge '\verb#=#' Edge $\mid$ Vertex '\verb#=#' Vertex\\
	&& $\mid$ $\mathtt{lab}_b$ '$\mt{(}$' Edge '$\mt{)}$' $\mid$ $\mathtt{lab}_c$ '$\mt{(}$' Vertex '$\mt{)}$' \\
	&& $\mid$ Edge '$\in$' EdgeSet $\mid$ Vertex '$\in$' VertexSet \\
	&& $\mid$\ '$\neg$' Formula $\mid$ Formula BoolOp Formula  \\
	&& $\mid$\ Quantifier (VVar '\texttt{:}$\mathtt{V}$' $\mid$ EVar '\texttt{:}$\mathtt{E}$'\\
	&& \hspace{0.25in} $\mid$\ ESetVar '\texttt{:}$\mathtt{ES}$' $\mid$ VSetVar '\texttt{:}$\mathtt{VS}$' ) '\mt{.}' Formula \\
	BoolOp & ::= & $\wedge$ $\mid$ $\vee$ $\mid$ $\Rightarrow$ $\mid$ $\Leftrightarrow$ \\
	Quantifier & ::= & $\forall$ $\mid$ $\exists$
	\end{tabular}
	\end{center}
	\caption{Abstract syntax of formulae}\label{fig:app:ms_formula_syntax}
	\end{figure}

	\qed
	\end{definition}
	
The symbols $\mathtt{s}$, $\mathtt{t}$ are \emph{function symbols} of arity one, and are syntactic representations of source and target functions. The symbols $\text{lab}_y$ are \emph{predicate symbols} of arity one, expressing that an item is labelled by $y$. The symbols $\mt{=}, \in$ are predicate symbols of arity two, and are syntactic representations of equality and set membership.
	
	The \emph{free variables} of a formula are those that are not bound by a quantifier. Note that such variables still have sorts. If a formula contains no such free variables, then we call it a sentence.

	\begin{definition}[Sentence]\label{defn:sentence}\rm
		A \emph{sentence} (or a \emph{closed formula}) is a formula that contains no free variables.
	\qed
	\end{definition}
	
	Sentences of the logic are evaluated with respect to interpretations. These map the sorts to disjoint semantic domains, function symbols to functions, and predicate symbols to Boolean-valued functions. In particular, given some graph, we build an interpretation from its nodes, edges, source, target, and labelling functions. (Note that interpretations here are different from interpretations for M-conditions, which map only set variables to elements of the corresponding semantic domains.)

	\begin{definition}[Satisfaction of sentences]\label{defn:app:graph_structure}\label{defn:app:interpretation_function}\label{defn:app:satisfaction_structures}\rm
		An \emph{interpretation} $I$ is a mapping from (1) sorts to semantic domains, (2) expressions $f(e_1,\dots,e_n)$, with $f$ a function symbol and each $e_i$ an expression, to functions of arity:
	\[I(\text{sort}(e_1)) \times \cdots \times I(\text{sort}(e_n)) \rightarrow I(\text{sort}(f(e_1,\dots,e_n))),\]

	\noindent and (3) formulae $p(e_1,\dots,e_n)$, with $p$ a predicate symbol and each $e_i$ an expression, to Boolean-valued functions of arity:
	\[I(\text{sort}(e_1)) \times \cdots \times I(\text{sort}(e_n)) \rightarrow \mathbb{B}.\]

		Let $I$ be an interpretation function, and $\varphi$ be a sentence. The \emph{satisfaction} of $\varphi$ by $I$, denoted $I \models \varphi$, is defined inductively as follows.

		If $\varphi$ is $\mathtt{true}$ (resp. $\mathtt{false}$), then $I \models \varphi$ (resp. $I \models \varphi$ does not hold). If $\varphi$ is $p(e_1,\dots,e_n)$ with $p$ a predicate symbol and each $e_i$ an expression, then $I \models \varphi$ if $I(p)(I(e_1),\dots,I(e_n)) = \text{true}$.

		Let $\varphi_1,\varphi_2$ be sentences. If $\varphi$ is $\neg\varphi_1$, then $I \models \varphi$ if $I \models \varphi_1$ does not hold. If $\varphi$ is $\varphi_1 \wedge \varphi_2$ (resp. $\varphi_1 \vee \varphi_2$), then $I\models \varphi$ if $I\models \varphi_1$ and (resp. or) $I \models \varphi_2$. If $\varphi$ is $\varphi_1 \Rightarrow \varphi_2$, then $I\models \varphi$ if $I \models \neg \varphi_1$ or $I \models \varphi_2$. If $\varphi$ is $\varphi_1 \Leftrightarrow \varphi_2$, then $I\models \varphi$ if $I\models \varphi_1 \Rightarrow \varphi_2$ and $I\models \varphi_2 \Rightarrow \varphi_1$.

		Let $\mathtt{x}$ be a variable of sort $s$, and $\varphi_1$ be a formula with $\mathtt{x}$ as its only free variable. Let also $S$ denote the symbol that corresponds with sort $s$. If $\varphi$ has the form $\exists \mathtt{x}:S\mt{.}\ \varphi_1$, then $I \models \varphi$ if there is some $a\in I(s)$ such that $I_{\mathtt{x}\mapsto a} \models \varphi_1$ where $I_{\mathtt{x}\mapsto a}$ is equal to $I$ but with the addition that $I(\mathtt{x}) = a$. If $\varphi$ is $\forall \mathtt{x}:S\mt{.}\ \varphi_1$, then $I \models \varphi$ if for every $a\in I(s)$, $I_{\mathtt{x}\mapsto a} \models \varphi_1$.

	\qed
	\end{definition}

	\begin{definition}[Satisfaction of sentences by graphs]\label{defn:interpretation_function}\label{defn:satisfaction_graphs}\rm
		Let $G$ be a graph and $\varphi$ be a sentence. We say that $G$ \emph{satisfies} $\varphi$, denoted by $G \models \varphi$, if $I_G \models \varphi$, where $I_G$ is the \emph{interpretation induced by $G$}, defined as follows:

		\emph{Sorts.} We define $I_G(\text{Edge})=E_G$, $I_G(\text{Vertex})=V_G$, $I_G(\text{EdgeSet}) = 2^{E_G}$, and  $I_G(\text{VertexSet}) = 2^{V_G}$.
		
		\emph{Function symbols.} We define $I_G(\mathtt{s}) = s_G$ and $I_G(\mathtt{t}) = t_G$. We define $I_G(\mathtt{l})$ and $I_G(\mathtt{m})$ to be the functions $l_G$ and $m_G$ respectively.

		\emph{Predicate symbols.} We define $I_G(\mathtt{lab}_b) = \text{lab}_b$ where $\text{lab}_b\!: E_G \rightarrow \mathbb{B}$ returns true for inputs $e$ if $m_G(e) = b$; false otherwise. (Analogous for node label predicates.) We define $I_G(\mt{=})$ to be equality in the standard sense. We define $I_G(\in) = \text{in}_G$ where $\text{in}_G\!:(E_G\times 2^{E_G})\cup (V_G\times 2^{V_G})\rightarrow \mathbb{B}$ returns true for inputs $(x,X)$ if $x \in X$; false otherwise.

	\qed
	\end{definition}

	\subsection{From Formulae to M-Conditions}
	
	In this subsection we prove that formulae can be translated into equivalent M-conditions. We define a translation over the abstract syntax of formulae and expressions. It is assumed that distinct quantifiers bind distinct variables in formulae, allowing us to use node and edge variables as identifiers in the corresponding M-condition. This correspondence is very important in the translation: a node variable $\mathtt{v}$ will correspond to a node identifier $v$ in the M-condition, and an edge variable $\mathtt{e}$ will correspond to an edge identifier $e$ with source and target nodes $s_e, t_e$.
	
	First, we define a helper function that takes a Vertex-sorted expression as input, and returns the node identifier that will be associated with it in the M-condition.

	\begin{definition}[Helper function VertexID]\label{def:VertexID}\rm
		Let $t$ denote an expression in Vertex. We define:

			\[\text{VertexID}(t) = \left\{
				     \begin{array}{ll}
				       v & \text{if $t = \mathtt{v}$ with $\mathtt{v}\in\text{VVar}$}\\
				       s_e & \text{if $t = \mathtt{s}\mt{(}\mathtt{e}\mt{)}$ with $\mathtt{e}\in\text{EVar}$}\\
				       t_e & \text{if $t = \mathtt{t}\mt{(}\mathtt{e}\mt{)}$ with $\mathtt{e}\in\text{EVar}$}\\
				     \end{array}
				   \right.\]
		\qed
	\end{definition}
	
	\begin{theorem}[Sentences can be expressed as M-constraints]\label{thm:Cond}\rm
		Let $\varphi$ denote a sentence. There is a transformation Cond such that for all graphs $G$,

	\[ G \models \varphi\ \ \text{if and only if}\ \ G \models \text{Cond}(\varphi). \]

	\noindent \emph{Construction.} We assume that quantifiers in $\varphi$ bind distinct variables (otherwise one can always rename the variables), which allows for variables to correspond to node and edge identifiers. For all sentences $\varphi$, let $\text{Cond}(\varphi) = \text{Cond}'(\varphi,\emptyset)$. The transformation $\text{Cond}'$ takes the formula that remains to be translated as its first input, and the domain of the next morphism in the generated M-condition as its second input. We define it inductively over the abstract syntax of formulae (Figure \ref{fig:app:ms_formula_syntax}) and expressions (Figure \ref{fig:app:expressions}).

	Let $X$ denote a graph over $\mathcal{C}$. Let $\varphi',\varphi_1,\varphi_2$ denote formulae (not necessarily sentences).

	If $\varphi = \mathtt{true}$ (resp.\ $\mathtt{false}$), then $\text{Cond}'(\varphi,X) = \mathtt{true}$ (resp.\ $\mathtt{false}$). If $\varphi = \neg\varphi'$, then $\text{Cond}'(\varphi,X) = \neg \text{Cond}'(\varphi',X)$. If $\varphi = \varphi_1\oplus\varphi_2$ with $\oplus\in\text{BoolOp}$, then $\text{Cond}'(\varphi,X) = \text{Cond}'(\varphi_1,X)\oplus\text{Cond}'(\varphi_2,X)$.
	
	If $\varphi = \mathtt{e}\ \mt{=}\ \mathtt{f}$ with $\mathtt{e},\mathtt{f}\in\text{EVar}$, then $\text{Cond}'(\varphi,X) = \mathtt{true}$ if edges $e,f$ are identified in $X$, otherwise false.

	If $\varphi = v_1\ \mt{=}\ v_2$ with $v_1,v_2$ in Vertex, then $\text{Cond}'(\varphi,X) = \mathtt{true}$ if $\text{VertexID}(v_1)$, $\text{VertexID}(v_2)$ are identified in $X$, otherwise $\mathtt{false}$.
	
	If $\varphi = \text{lab}_b\mt{(}\mathtt{e}\mt{)}$ with $b\in\mathcal{C}_E$ and $\mathtt{e}$ in EVar, then $\text{Cond}'(\varphi,X) = \mathtt{true}$ if $m_X(e) = b$, otherwise false. (Analogous for node label predicates.)

	If $\varphi = \mathtt{e} \in \mathtt{E}$ with $\mathtt{e}$ in EVar and $\mathtt{E}$ in ESetVar, then:
	\[\text{Cond}'(\varphi,X) = \exists (X\hookrightarrow X\mid e \in \mathtt{E}).\]
		
	If $\varphi = v \in \mathtt{V}$ with $v$ in Vertex and $V$ in VSetVar, then:
		\[\text{Cond}'(\varphi,X) = \exists (X\hookrightarrow X\mid \text{VertexID}(v) \in \mathtt{V}).\]

	If $\varphi = \exists \mathtt{v}:\mathtt{V}\mt{.}\ \varphi'$, then:

	\[ \text{Cond}'(\varphi,X) = \bigvee_{X'\in\text{VMerge}(X,\mathtt{v})} \hspace{-0.25in}\exists(X\hookrightarrow X', \text{Cond}'(\varphi',X')) \]

	\noindent Here, $\text{VMerge}(X,\mathtt{v})$ is the (finite) set of graphs constructed from $X$ by disjointly adding a single node $v$ with some label in $\mathcal{C}_V$, and every graph obtainable from these by identifying a node with $v$.

	If $\varphi = \exists \mathtt{e}:\mathtt{E}\mt{.}\ \varphi'$, then:

	\[ \text{Cond}'(\varphi,X) =  \bigvee_{X'\in\text{EMerge}(X,\mathtt{e})} \hspace{-0.25in}\exists ( X\hookrightarrow X',\text{Cond}'(\varphi',X')) \]

	\noindent Here, $\text{EMerge}(X,\mathtt{e})$ is the (finite) set of graphs defined as follows. Let $X^*$ denote a graph obtained from $X$ by disjointly adding nodes with identifiers $s_{e}$, $t_{e}$ and an edge with identifier $e$ such that $s_{X^*}(e) = s_{e}$, $t_{X^*}(e) = t_{e}$, $l_{X^*}(s_{e}) \in \mathcal{C}_V$, $l_{X^*}(t_{e}) \in \mathcal{C}_V$, and $m_{X^*}(e) \in \mathcal{C}_E$. The set $\text{EMerge}(X,\mathtt{e})$ contains all such graphs $X^*$, and all other graphs obtainable from them by identifying $e,s_{e},t_{e}$ with nodes and edges. (Note that $s_e$ and $t_e$ can be identified to create a loop.)
	
	If $\varphi = \exists \mathtt{E}:\mathtt{ES}\mt{.}\ \varphi'$, then:
	
	\[ \text{Cond}'(\varphi,X) = \exists_\mathtt{E}\mathtt{E} \mt{[}\ \text{Cond}'(\varphi',X)\ \mt{]} \]
	
	If $\varphi = \exists \mathtt{V}:\mathtt{VS}\mt{.}\ \varphi'$, then:
	
	\[ \text{Cond}'(\varphi,X) = \exists_\mathtt{V}\mathtt{V} \mt{[}\ \text{Cond}'(\varphi',X)\ \mt{]} \]

	If $\varphi = \forall \mathtt{x}:S\mt{.}\ \varphi'$, then:
	\[\text{Cond}'(\varphi,X) = \text{Cond}'(\neg\exists \mathtt{x}:S\mt{.}\ \neg\varphi', X).\]
	\qed
	\end{theorem}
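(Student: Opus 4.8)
The construction of $\text{Cond}$ is already given, so what remains is to verify its correctness by induction on the structure of the formula. The first step is to notice that $\text{Cond}'$ recurses on \emph{open} subformulae together with a graph $X$ that carries their free first-order variables, so the statement must be generalised before induction can get off the ground. The plan is to establish the following lemma: for every graph $X$, every formula $\varphi$ whose free node (resp.\ edge) variables occur among the identifiers of $V_X$ (resp.\ $E_X$) in accordance with the $s_e,t_e$ naming convention, every injective morphism $q\colon X\hookrightarrow G$, and every interpretation $I$ in $G$ defined on all free set variables of $\varphi$,
\[ (I_G)_{q,I}\models\varphi\quad\text{if and only if}\quad q\models^I\text{Cond}'(\varphi,X), \]
where $(I_G)_{q,I}$ is the interpretation that agrees with $I_G$ on sorts, function symbols and predicate symbols, sends each free first-order variable $x$ of $\varphi$ to $q(x)$, and sends each free set variable $\mathtt{X}$ to $I(\mathtt{X})$. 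Theorem~\ref{thm:Cond} then drops out as the instance $X=\emptyset$, $q=i_G$, $I=I_\emptyset$, since a sentence has no free variables and $\text{Cond}(\varphi)=\text{Cond}'(\varphi,\emptyset)$.

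Most cases of the induction are routine. For $\mathtt{true}$, $\mathtt{false}$ and the Boolean connectives the argument is immediate, because $\models^I$ distributes over $\neg,\wedge,\vee$ exactly as $\models$ does on the MSO side. For the atomic formulae that $\text{Cond}'$ decides outright -- equalities of edge or vertex expressions, and label predicates -- injectivity and label-preservation of $q$ give $q(x)=q(y)$ iff $x=y$ in $X$, and $m_G(q(e))=m_X(e)$, so the truth value of $\varphi$ under $(I_G)_{q,I}$ coincides with the Boolean constant $\text{Cond}'$ returns. For a membership atom $x\in\mathtt{X}$, unfolding the semantics of $\exists(X\hookrightarrow X\mid\cdot)$ forces the witnessing morphism to be $q$ itself, and the remaining constraint asserts precisely $q(x)\in I(\mathtt{X})$. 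For a set quantifier $\exists\mathtt{X}:S\mt{.}\ \varphi'$, the $\exists_\mathtt{V}$/$\exists_\mathtt{E}$ clause of $\models^I$ ranges over the same subsets of $V_G$ (resp.\ $E_G$) that the MSO quantifier ranges over in $I_G(\text{VertexSet})$ (resp.\ $I_G(\text{EdgeSet})$), so the induction hypothesis for $\varphi'$ with $I$ suitably extended closes the case. The universal quantifier is handled via its abbreviation $\neg\exists\neg$.

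The real content is in the two first-order quantifier cases. For $\varphi=\exists\mathtt{v}:\mathtt{V}\mt{.}\ \varphi'$ I would set up a correspondence between elements $a\in V_G$ and pairs $(X',q')$ with $X'\in\text{VMerge}(X,\mathtt{v})$ and $q'\colon X'\hookrightarrow G$ an injective morphism satisfying $q'\circ(X\hookrightarrow X')=q$: in one direction send $(X',q')$ to $q'(v)$; in the other, given $a$, take $X'$ to be $X$ with the fresh node $v$ identified with the unique node $w$ of $X$ having $q(w)=a$ if such a $w$ exists (legitimate since $q$ preserves labels), and otherwise $X$ with $v$ added disjointly and carrying label $l_G(a)$, in either case letting $q'$ extend $q$ by $v\mapsto a$. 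One checks that these are exactly the graphs in $\text{VMerge}(X,\mathtt{v})$, that the extension $q'$ is injective precisely because either $v$ was merged with its own preimage or $a$ was new, and that $(I_G)_{q',I}$ is $(I_G)_{q,I}$ extended by $\mathtt{v}\mapsto a$; applying the induction hypothesis to $\varphi'$, $X'$, $q'$ then turns the MSO existential over $a$ into the M-condition disjunction over $X'$. The case $\varphi=\exists\mathtt{e}:\mathtt{E}\mt{.}\ \varphi'$ follows the same pattern but with more bookkeeping: choosing $a\in E_G$ corresponds to choosing, for the fresh items $e,s_e,t_e$ of the merge graph, whether each is identified with an existing item of $X$ (forced exactly when $a$, $s_G(a)$, $t_G(a)$ already lie in the image of $q$) and whether $s_e$ and $t_e$ are identified into a loop (forced when $s_G(a)=t_G(a)$), with labels read off from $G$; $\text{EMerge}(X,\mathtt{e})$ is precisely the set of graphs arising from these choices, so the argument transfers. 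I expect the only genuine obstacle to be this bookkeeping: pinning down the identifier/alias conventions on the merge graphs, checking that $\text{VMerge}$ and $\text{EMerge}$ contain exactly the graphs required so that the correspondence with elements of $G$ is onto from both sides, and verifying injectivity of the extended morphism in each sub-case -- in particular that identifications forced by the graph structure (such as identifying $e$ with $e_0$ forcing $s_e$ with $s_{e_0}$) are automatically present in the quotient. None of this is deep, but it is where a careless argument would break.
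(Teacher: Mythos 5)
Your proposal is correct and takes essentially the same route as the paper: the paper likewise first generalises the statement to Lemma~\ref{lemma:Cond}, asserting $I_G^{z,I} \models \varphi$ iff $z \models^{I} \text{Cond}'(\varphi,X)$ for open formulae, injective $z\colon X\hookrightarrow G$ carrying the free first-order variables, and interpretations $I$ of the free set variables, proves this by structural induction, and then obtains the theorem as the instance $X=\emptyset$, $z=i_G$, $I=I_\emptyset$. Your sketch of the first-order quantifier cases (the correspondence between elements of $G$ and merge graphs with extended injective morphisms) is in fact more explicit than the paper's, which defers those cases to Lemma~6.18 of \cite{Poskitt13a}.
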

	
	To prove the theorem, we first prove a more general lemma about the translation of formulae.

	\begin{lemma}[Formulae can be expressed as M-conditions]\label{lemma:Cond}\rm
		Let $\varphi$ denote a formula, $I$ an interpretation defined for the free set variables of $\varphi$, and $X$ a graph in which every identifier $x$ corresponds to a free (node or edge) variable $\tt x$ in $\varphi$. For all injective graph morphisms $z\!:X\hookrightarrow G$, we have:
		\[ I_G^{z,I} \models \varphi\ \ \text{if and only if}\ \ z\!: X\hookrightarrow G \models^I \text{Cond}'(\varphi,X).\]

		\noindent Here, $I_G^{z,I}$ is defined as $I_G$ but with the following mappings for free variables in $\varphi$: (1) for each set variable $\mathtt{Y}$ in the domain of $I$, $I_G^{z,I} (\mathtt{Y}) = I(\mathtt{Y})$; (2) for each node $v$ in $X$, $I_G^{z,I} (\mathtt{v}) = z(v)$; and (3) for each edge $e$ in $X$, $I_G^{z,I} (\mathtt{e}) = z(e)$.
	\qed
	\end{lemma}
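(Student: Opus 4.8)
The plan is to prove the lemma by induction on the structure of the formula $\varphi$, following the grammar of Figure~\ref{fig:app:ms_formula_syntax}, with the induction hypothesis quantified over all graphs $X$, all interpretations $I$ defined for the free set variables of the formula, and all injective morphisms $z\colon X \hookrightarrow G$ satisfying the stated correspondence between identifiers of $X$ and free node/edge variables. Both of these invariants are preserved by every recursive call of $\mathrm{Cond}'$: a set quantifier recurses with $I$ extended by the newly bound set variable and $X$ unchanged, while a first-order quantifier recurses with $X$ extended by exactly one fresh identifier naming the newly bound variable and $I$ unchanged, the assumption that distinct quantifiers bind distinct variables keeping this naming unambiguous. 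Throughout, $z$ being injective is what lets the ``at the level of $X$'' bookkeeping of $\mathrm{Cond}'$ mirror the ``at the level of $G$'' semantics of the logic.

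The base cases are routine. For $\mathtt{true}$ and $\mathtt{false}$ there is nothing to check. For the equality atoms $\mathtt{e}=\mathtt{f}$, $v_1=v_2$ and the label atoms $\mathrm{lab}_b(\mathtt{e})$, $\mathrm{lab}_c(v)$, the translation returns $\mathtt{true}$ or $\mathtt{false}$ according to a property of $X$; the equivalence follows because $z$ is injective (items are identified in $X$ exactly when their $z$-images coincide in $G$), morphisms preserve labels, and $I_G^{z,I}$ interprets $\mathtt{s},\mathtt{t}$ as $s_G,t_G$, so that via $\mathrm{VertexID}$ (Definition~\ref{def:VertexID}) we have $I_G^{z,I}(\mathtt{s}(\mathtt{e})) = s_G(z(e)) = z(s_e)$, and likewise for $\mathtt{t}$. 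For a membership atom $\mathtt{e}\in\mathtt{E}$, the translation is $\exists(\mathrm{id}_X\colon X\hookrightarrow X \mid e \in \mathtt{E})$; by the definition of satisfaction of M-conditions the only injective $q\colon X\hookrightarrow G$ with $q\circ\mathrm{id}_X = z$ is $q=z$, so the condition holds iff $\mathtt{E}$ lies in the domain of $I$ (which it does, being free in the atom) and $z(e)\in I(\mathtt{E})$, which is exactly $I_G^{z,I}\models \mathtt{e}\in\mathtt{E}$; the atom $v\in\mathtt{V}$ is the same via $\mathrm{VertexID}$. The Boolean cases ($\neg$ and the connectives in $\mathrm{BoolOp}$) are immediate from the induction hypothesis, since $\mathrm{Cond}'$ and $\models$ commute with these connectives, treating $\Rightarrow,\Leftrightarrow$ as the usual abbreviations on each side.

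The substance of the argument is in the quantifier cases. For the set quantifiers $\exists\mathtt{E}\colon\mathtt{ES}\mt{.}\,\varphi'$ and $\exists\mathtt{V}\colon\mathtt{VS}\mt{.}\,\varphi'$, the translation wraps $\mathrm{Cond}'(\varphi',X)$ in the corresponding M-condition set quantifier over the same $X$; by the definition of $\models^I$ this holds iff $z\models^{I'}\mathrm{Cond}'(\varphi',X)$ for some $I' = I\cup\{\mathtt{E}\mapsto a\}$ with $a\subseteq E_G$ (resp.\ $a\subseteq V_G$), and since $I_G^{z,I'} = (I_G^{z,I})_{\mathtt{E}\mapsto a}$ the induction hypothesis applied to $\varphi'$, $X$, $z$, $I'$ matches the semantics of $\exists$ in Definition~\ref{defn:app:satisfaction_structures}. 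For the first-order quantifier $\exists\mathtt{v}\colon\mathtt{V}\mt{.}\,\varphi'$, the translation is a disjunction over $X'\in\mathrm{VMerge}(X,\mathtt{v})$ of $\exists(X\hookrightarrow X',\mathrm{Cond}'(\varphi',X'))$, and the key combinatorial fact to establish is a bijection between elements $a\in V_G$ and pairs $(X',q)$ with $X'\in\mathrm{VMerge}(X,\mathtt{v})$ and $q\colon X'\hookrightarrow G$ an injective morphism extending $z$: the graph $X'$ records the label of $v$ and which node of $X$, if any, shares the image $q(v)$, after which $q$ is fixed by $z$ together with the choice $q(v)=a$. Given this bijection, $z$ satisfies the disjunction iff some such $q$ satisfies $\mathrm{Cond}'(\varphi',X')$, iff (induction hypothesis on $\varphi'$, $X'$, $q$, $I$) $I_G^{q,I}\models\varphi'$, and since $I_G^{q,I} = (I_G^{z,I})_{\mathtt{v}\mapsto a}$ this matches $I_G^{z,I}\models\exists\mathtt{v}\colon\mathtt{V}\mt{.}\,\varphi'$. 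The edge quantifier $\exists\mathtt{e}\colon\mathtt{E}\mt{.}\,\varphi'$ is analogous with $\mathrm{EMerge}(X,\mathtt{e})$, except that $X'$ must additionally record whether the endpoints $s_e,t_e$ of the new edge coincide with each other (a loop) or meet existing nodes of $X$; here $\mathrm{VertexID}$ again ensures $I_G^{q,I}(\mathtt{s}(\mathtt{e})) = s_G(q(e)) = q(s_e)$ on both sides. Finally, $\forall\mathtt{x}\colon S\mt{.}\,\varphi'$ is handled by the identity $\mathrm{Cond}'(\forall\mathtt{x}\colon S\mt{.}\,\varphi',X) = \mathrm{Cond}'(\neg\exists\mathtt{x}\colon S\mt{.}\,\neg\varphi',X)$, the fact that the logic interprets $\forall$ as $\neg\exists\neg$, and the already-treated negation and existential cases applied to the subformula $\varphi'$.

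I expect the main obstacle to be the edge quantifier case: making the bijection with $\mathrm{EMerge}(X,\mathtt{e})$ precise, enumerating exactly the ways the two endpoints of the new edge may be identified with one another and with nodes already present in $X$, and checking carefully that the interpretation $I_G^{q,I}$ induced by each such extension $q$ agrees with $(I_G^{z,I})_{\mathtt{e}\mapsto a}$ on every derived expression $\mathtt{s}(\mathtt{e})$ and $\mathtt{t}(\mathtt{e})$. Theorem~\ref{thm:Cond} then follows as the special case $X=\emptyset$, $z=i_G\colon\emptyset\hookrightarrow G$, $I=I_\emptyset$, where $I_G^{i_G,I_\emptyset} = I_G$ and $\mathrm{Cond}(\varphi) = \mathrm{Cond}'(\varphi,\emptyset)$.
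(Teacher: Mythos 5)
Your proposal is correct and follows essentially the same route as the paper: a structural induction with the hypothesis generalised over $X$, $z$, and $I$, handling the set-membership atoms and the set quantifiers exactly as the paper does (extending $I$ by the newly bound set variable and observing $I_G^{z,I'} = (I_G^{z,I})_{\mathtt{E}\mapsto a}$). The only difference is one of presentation: the paper delegates the first-order quantifier cases (the VMerge/EMerge bijections) and the remaining atoms to the proof of Lemma 6.18 in \cite{Poskitt13a}, whereas you sketch them explicitly, and your sketch matches the standard argument.
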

	
	\begin{proof}
		\emph{Basis.} Most cases are easily adapted from the proof of Lemma 6.18 in \cite{Poskitt13a}. In the case that $\varphi$ has the form $\mathtt{lab}_b\mt{(}\mathtt{e}\mt{)}$ with $\mathtt{e}$ in EVar,\\
		
		\begin{center}\begin{tabular}{r c l }
			$(I_G^{z,I} \models \mathtt{lab}_b\mt{(}\mathtt{e}\mt{)} )$ & $=$ & $I_G^{z,I}(\mathtt{lab}_b)(\mathtt{e})$ \\
			
			&$=$& $(\text{lab}_b(z(e)))$  \\
			
			& $=$ & $(m_G(z(e)) = b)$  \\
			
			& $=$ & $(m_X(e) = b)$ \\
			
			&$=$& $(z \models^I \text{Cond}'(\varphi,X))$
		\end{tabular}\end{center}
		
		\noindent (Analogous for node label predicates.)
		
		In the case that $\varphi$ has the form $\mathtt{e} \in \mathtt{E}$ with $\mathtt{e}$ in EVar and $\mathtt{E}$ in ESetVar,
		
		\begin{center}\begin{tabular}{r c l }
			$(I_G^{z,I} \models \mathtt{e} \in \mathtt{E} )$ & $=$ & $I_G^{z,I}(\in)(\mathtt{e},\mathtt{E})$ \\
			
			&$=$& $(\text{in}_G(z(e),I(\mathtt{E})))$  \\
			
			& $=$ & $(z(e) \in I(\mathtt{E}))$  \\
			
			& $=$ & $(e \in \mathtt{E})^{I,z}$ \\
			
			& $=$ & $(z \models^I \exists(X\hookrightarrow X \mid e \in \mathtt{E}))$ \\
			
			&$=$& $(z \models^I \text{Cond}'(\varphi,X))$
		\end{tabular}\end{center}
		
		\noindent \emph{Step. Only if.} Assume that $I_G^{z,I} \models \varphi$. Most cases are easily adapted from the proof of Lemma 6.18 in \cite{Poskitt13a}. In the case that $\varphi$ has the form $\exists\mathtt{E}:\mathtt{ES}\mt{.}\ \varphi'$, by assumption, there exists some $a\in 2^{E_G}$ such that $I_G^{z,I} \cup \{\mathtt{E}\mapsto a\} \models \varphi'$. Define $I' = I \cup \{\mathtt{E}\mapsto a\}$. Then $I_G^{z,I'} \models \varphi'$, and by induction hypothesis, $z \models^{I'} \text{Cond}'(\varphi',X)$. Finally, with the definition of $\models$ for M-conditions, we get the result that $z \models^I \exists_\mathtt{E}\mathtt{E}\mt{[}\ \text{Cond}'(\varphi',X)\ \mt{]} = \text{Cond}'(\varphi,X)$. (Case for node set quantification is analogous.)  \\
		
		\noindent \emph{Step. If.} Assume that $z \models^I \text{Cond}'(\varphi,X)$. Most cases are easily adapted from the proof of Lemma 6.18 in \cite{Poskitt13a}. In the case that $\varphi$ has the form $\exists\mathtt{E}:\mathtt{ES}\mt{.}\ \varphi'$, by assumption and construction, we have $z \models^I \exists_\mathtt{E}\mathtt{E}\mt{[}\ \text{Cond}'(\varphi',X)\ \mt{]}$. Then there is some $I' = I \cup \{\mathtt{E}\mapsto a\}$ with $a\subseteq E_G$ (equiv.\ $a\in 2^{E_G}$) such that $z \models^{I'} \text{Cond}'(\varphi',X)$. By induction hypothesis, we have $I_G^{z,I'} \models \varphi'$. Finally, with the definition of $\models$ for formulae, we have the result that $I_G^{z,I} \models \exists\mathtt{E}:\mathtt{ES}\mt{.}\ \varphi'$. (Case for node set quantification is analogous.)		
		\qed
	\end{proof}
	
	\begin{proof}[of Theorem \ref{thm:Cond}]
		Define $i_G\!:\emptyset \hookrightarrow G$. We have that:
		
		\begin{center}\begin{tabular}{r c l}
			$G \models \varphi$ &$\ \text{iff}\ $& $I_G \models \varphi$ \\
			&$\ \text{iff}\ $& $I_G^{i_G,I_\emptyset} \models \varphi$ \\
			&$\ \text{iff}\ $& $i_G \models^{I_\emptyset} \text{Cond}'(\varphi,\emptyset)$ \\
			&$\ \text{iff}\ $& $i_G \models^{I_\emptyset} \text{Cond}(\varphi)$ \\
			&$\ \text{iff}\ $& $G\models \text{Cond}(\varphi)$.
		\end{tabular}\end{center}
		
		\noindent from the definition of $\models$ for formulae and M-conditions, and Lemma \ref{lemma:Cond}.
		\qed
	\end{proof}

	\subsection{From M-Conditions to Formulae}
	
	In this subsection we prove that M-conditions can be translated into equivalent formulae. To simplify the translation, we define a normal form for M-conditions that allows us to assume one new node or edge per level of nesting, as well as the absence of path predicates.
	
	\begin{definition}[Normal form for M-conditions]\rm
		An M-condition $c$ is in \emph{normal form} if all of the following hold:
		
		\begin{enumerate}
			\item all morphisms are inclusions;
			
			\item all morphisms $a\!:P\hookrightarrow C$ are either identity morphisms ($P=C$), or $C$ is the graph $P$ but with one additional node or one additional edge;
					
			\item no interpretation constraint contains a path predicate.
		\end{enumerate}
		\qed
	\end{definition}
	
	\begin{proposition}[M-conditions can be normalised]\rm
		For every M-condition $c$, there is an M-condition $\overline{c}$ in normal form such that $c$ and $\overline{c}$ are equivalent, i.e.\ for every morphism $p$, and every interpretation $I$,
		\[ p \models^I c\ \ \text{if and only if}\ \ p \models^I \overline{c}. \]
		\qed
	\end{proposition}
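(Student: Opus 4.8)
The plan is an induction on the structure of the M-condition $c$, bringing it into normal form by treating the three defining clauses in an order chosen so that later passes do not disturb what earlier ones have achieved. The base case $c=\mathtt{true}$ and the cases $\neg c_1$, $c_1\wedge c_2$, $c_1\vee c_2$, $\exists_\mathtt{V}\mathtt{X}\mt{[}c'\mt{]}$, $\exists_\mathtt{E}\mathtt{X}\mt{[}c'\mt{]}$ are immediate: each of the three normal-form clauses is preserved by Boolean combination and by set quantification, and $p\models^I$ distributes over these operators by definition, so one simply applies the induction hypothesis to the immediate subconditions. Thus the whole argument concerns $c=\exists(a\!:P\hookrightarrow C\mid\gamma,c')$.

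\textbf{Removing path predicates (clause 3).} The key observation is that when $\gamma$ contains a path-predicate atom $\beta=\mathtt{path}\mt{(}v,w,\mathtt{not}\ E\mt{)}$ over items of $C$, its value $\beta^{I,q}$ for a morphism $q\!:C\hookrightarrow G$ depends only on $q$, namely it equals $\text{path}_G(q(v),q(w),q(E))$ --- exactly the sort of dependence an M-condition over $C$ evaluated at $q$ has. Writing $\gamma$ as $\gamma[\beta]$ to display one occurrence, the definition of $\models$ yields $\exists(a\mid\gamma[\beta],c') \equiv \exists(a\mid\gamma[\mathtt{true}],\mathtt{Path}(v,w,E)\wedge c') \vee \exists(a\mid\gamma[\mathtt{false}],\neg\mathtt{Path}(v,w,E)\wedge c')$, where $\mathtt{Path}(v,w,E)$ is a path-predicate-free M-condition over $C$ to be defined so that $q\models^I\mathtt{Path}(v,w,E)$ iff $\text{path}_G(q(v),q(w),q(E))$; iterating removes every path atom from $\gamma$ (the subconditions grow only by conjoining path-free $\mathtt{Path}$-terms to $c'$). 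For $\mathtt{Path}$ I use the classical MSO encoding of reachability: $q(w)$ is reachable from $q(v)$ along edges outside $q(E)$ iff every node set containing $q(v)$ and closed under taking the target of any edge not in $q(E)$ also contains $q(w)$. As an M-condition over $C$, with a \emph{fresh} node-set variable $\mathtt{Z}$, this reads $\forall_\mathtt{V}\mathtt{Z}\mt{[}\,\exists(C\hookrightarrow C\mid v\in\mathtt{Z})\wedge\mathrm{Closed}(\mathtt{Z})\Rightarrow\exists(C\hookrightarrow C\mid w\in\mathtt{Z})\,\mt{]}$, where $\mathrm{Closed}(\mathtt{Z})$ expresses that no edge of $G$ outside $q(E)$ leads from $\mathtt{Z}$ to its complement. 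The one real subtlety is the definition of $\mathrm{Closed}(\mathtt{Z})$: a closure-violating edge of $G$ may be the image of an edge of $C$ or a genuinely new edge, and its endpoints may or may not be images of nodes of $C$. Accordingly $\mathrm{Closed}(\mathtt{Z})$ is a finite conjunction: for each edge $f$ of $C$ with $f\notin E$, a conjunct $\neg\bigl(\exists(C\hookrightarrow C\mid s_C(f)\in\mathtt{Z})\wedge\neg\exists(C\hookrightarrow C\mid t_C(f)\in\mathtt{Z})\bigr)$; and for each graph $C^+$ obtained from $C$ by adding a single fresh edge whose source and target are each either a fresh node or an existing node of $C$ (finitely many, as in the $\mathrm{EMerge}$ construction of Theorem~\ref{thm:Cond}), a conjunct $\neg\exists(C\hookrightarrow C^+\mid a^+\in\mathtt{Z}\wedge\neg(b^+\in\mathtt{Z}))$ with $a^+,b^+$ the source and target of the added edge. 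Injectivity of the extending morphism makes its image an edge of $G$ that is not the image of any edge of $C$ --- in particular not in $q(E)$ --- so the two families together range over exactly the edges of $G$ outside $q(E)$. Correctness of $\mathtt{Path}$ then follows by a routine induction mirroring the inductive definition of $\text{path}_G$ (for one direction one checks that the set of nodes reachable from $q(v)$ avoiding $q(E)$ is closed and contains $q(v)$; for the other, that this set is contained in every such closed set).

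\textbf{Inclusions and single-item steps (clauses 1 and 2).} With $\gamma$ now path-free, and $c'$ equivalent by the induction hypothesis to a normal-form M-condition $\overline{c'}$ over $C$, first replace $C$ by an isomorphic copy $\overline{C}$ whose items on $a(P)$ are those of $P$ and which is fresh elsewhere, so that for the isomorphism $\phi\!:C\to\overline{C}$ the composite $\phi\circ a$ is an inclusion $P\hookrightarrow\overline{C}$; transport $\gamma$ and $\overline{c'}$ along $\phi$ (relabel the $C$-part of every graph occurring in $\overline{c'}$ by $\phi$, leaving items introduced at deeper levels untouched, which keeps all morphisms inclusions), using the routine fact that satisfaction is invariant under such consistent renaming. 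Now $C\setminus P$ is a finite set of nodes and edges; enumerate it with all new nodes before all new edges, giving a chain $P=P_0\subseteq P_1\subseteq\cdots\subseteq P_n=C$ in which each inclusion adds exactly one node or one edge (the endpoints of a new edge being already present). Since a morphism $q\!:C\hookrightarrow G$ with $q\circ a=p$ is precisely the same data as a chain of extending morphisms $P_i\hookrightarrow G$, the definition of $\models$ gives $\exists(P\hookrightarrow C\mid\gamma,\overline{c'}) \equiv \exists(P_0\hookrightarrow P_1,\exists(P_1\hookrightarrow P_2,\ldots\exists(P_{n-1}\hookrightarrow P_n\mid\gamma,\overline{c'})\cdots))$, with $\gamma$ kept at the innermost level (all its identifiers lie in $P_n=C$); if $n=0$ the morphism is already an identity. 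Every morphism here is an inclusion adding at most one item, $\gamma$ is path-free, and the innermost subcondition $\overline{c'}$ is already normal, so the result is in normal form; applying this at each level of $c$ and composing the level-wise equivalences (each checked inline) yields $\overline{c}$ with $p\models^I c$ iff $p\models^I\overline{c}$ for all $p,I$. The main obstacle is the path-elimination step, specifically getting $\mathrm{Closed}(\mathtt{Z})$ right: one has only the finitely many ``shapes'' visible through $C$ with which to quantify over all closure-violating edges of an arbitrary host graph, so the $\mathrm{EMerge}$-style case split is essential, and a naive encoding --- quantifying only over genuinely new edges, or treating the forbidden set $E$ step-by-step rather than globally --- would be unsound. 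Everything else (keeping quantified set variables distinct and fresh, and the isomorphism-transport lemma for clause 1) is routine bookkeeping.
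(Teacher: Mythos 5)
Your proof is correct, and its key idea for clause (3) is the same one the paper uses: express $\mathtt{path}\mt{(}v,w,\mathtt{not}\ E\mt{)}$ via the standard MSO closure property of reachability and re-express that as a path-free M-condition. The differences are in how much is actually carried out. The paper's proof is a sketch that (i) delegates clauses (1) and (2) entirely to an external reference and (ii) for clause (3) only treats a lone path predicate attached to an identity morphism, obtaining the replacement as $\text{Cond}'(\varphi,X)$ for the reachability formula $\varphi$ and thereby inheriting correctness from Lemma \ref{lemma:Cond}. You instead build the reachability condition directly, with the $\mathrm{Closed}(\mathtt{Z})$ conjunction split between edges visible in $C$ and EMerge-style extensions $C\hookrightarrow C^+$ --- which is exactly what $\text{Cond}'$ would produce, so nothing is lost, only the appeal to Theorem \ref{thm:Cond} --- and you supply the chain decomposition for clauses (1) and (2) rather than citing it. Two of your additions are genuinely more general than the paper's sketch: the truth-value case split $\gamma[\beta]\equiv(\gamma[\mathtt{true}]$ with $\mathtt{Path}\wedge c')\vee(\gamma[\mathtt{false}]$ with $\neg\mathtt{Path}\wedge c')$, which handles path atoms embedded in compound constraints under non-identity morphisms (the situation that actually arises in the paper's own Pre constructions), and the explicit accounting that the two families in $\mathrm{Closed}(\mathtt{Z})$ jointly range over exactly $E_G\setminus q(E)$. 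One small point to make explicit: the morphisms $C\hookrightarrow C^+$ inside $\mathrm{Closed}(\mathtt{Z})$ may add an edge together with up to two fresh nodes, so they violate clause (2) as introduced and must themselves be run through your single-item chain decomposition; your ``apply this at each level'' covers this, and the paper's $\text{Cond}'$ output has the identical issue, but it is worth stating since the whole point of the ordering of passes is that path elimination not re-break clauses (1) and (2).
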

	
	\begin{proof}[sketch]\rm
		Section 6.4.1 of \cite{Poskitt13a} shows how morphisms can be replaced and decomposed to satisfy (1) and (2) of normal form. For (3), observe that an M-condition $\exists(a\!:X\hookrightarrow X \mid \mathtt{path}\mt{(}\mathtt{v,w,not\ }e_1\mt{|}e_2\mt{|}\dots                                                                                                                                                        \mt{)})$ can be replaced by the equivalent M-condition $\text{Cond}'(\varphi,X)$, where $\varphi$ is defined as follows:\\
		
		\begin{center}\begin{tabular}{r c l}
			$\varphi$ &$=$& \vspace{5pt}$\forall\mathtt{X}:\mathtt{VS}\mt{.}\ (( (\forall\mathtt{y,z}:\mathtt{V}\mt{.}\ \mathtt{y}\in\mathtt{X} \wedge (\exists\mathtt{e}:\mathtt{E}\mt{.}\ \mathtt{s}\mt{(}\mathtt{e}\mt{)}\ \mt{=}\ \mathtt{y} \wedge \mathtt{t}\mt{(}\mathtt{e}\mt{)}\ \mt{=}\ \mathtt{z}$\\
			
			&& \vspace{5pt}\hspace{0.5in}$\wedge\ \neg\mathtt{e}\mt{=}\mathtt{e}_1 \wedge \neg\mathtt{e}\mt{=}\mathtt{e}_2\wedge\dots)\Rightarrow \mathtt{z}\in\mathtt{X})$\\
			
			&& \vspace{5pt}\hspace{0.25in}$\wedge\ (\forall\mathtt{y}:\mathtt{V}\mt{.}\ (\exists\mathtt{e}:\mathtt{E}\mt{.}\ \mathtt{s}\mt{(}\mathtt{e}\mt{)}\ \mt{=}\ \mathtt{v} \wedge \mathtt{t}\mt{(}\mathtt{e}\mt{)}\ \mt{=}\ \mathtt{y}$\\
			
			&& \vspace{5pt}\hspace{0.5in}$\wedge\ \neg\mathtt{e}\mt{=}\mathtt{e}_1 \wedge \neg\mathtt{e}\mt{=}\mathtt{e}_2\wedge\dots) \Rightarrow \mathtt{y}\in\mathtt{X}))$\\
			
			&& \vspace{5pt}\hspace{0.25in}$\Rightarrow\ \mathtt{w}\in\mathtt{X} )$
		\end{tabular}\end{center}
		
		\noindent i.e.\ path predicates can be expressed in terms of MSO expressions.
			\qed
	\end{proof}
	
	Now, we define and prove the correctness of a translation from M-conditions (in normal form) to formulae. The assumption that morphisms are inclusions allows us to establish a correspondence between identifiers and variables. For example, a node with identifier $v$ will be translated into a variable $\tt v$ from VVar.
	
	\begin{theorem}[M-conditions can be expressed as formulae]\label{thm:Form}\rm
		There is a transformation Form such that for all M-constraints $c$, and all graphs $G$, we have:

		\[ G \models c\ \ \text{if and only if}\ \ G \models \text{Form}(c).\]

		\noindent \emph{Construction.} We assume that M-constraint $c$ is in normal form (otherwise replace it with an equivalent M-constraint that is). Define $\text{Form}(c) = \text{Form}'(c,\{\})$. Here, the second parameter can understood as the set of all node and edge variables that have already been bound to quantifiers by the transformation. Then $\text{Form}'(c,V)$ is defined inductively as follows, where $V$ denotes a set of sorted variables.

		If $c=\mathtt{true}$, then $\text{Form}'(c,V) = \mathtt{true}$. If $c = \exists_\mathtt{V}\mathtt{X}\mt{[}c'\mt{]}$, then $\text{Form}'(c,V) = \exists\mathtt{X}:\mathtt{VS}\mt{.}\ \text{Form}'(c',V)$ (analogous for edge set quantification). If $c=\exists(a\mid\gamma,c')$, then there are three possible outputs for $\text{Form}'(c,V)$ defined for the three forms that $a$ may take in normal form.

		Suppose that $c = \exists(\text{id}_P\!:P\hookrightarrow P\mid \gamma, c')$, i.e.\ an M-condition with a morphism that is an identity. Then, $\text{Form}'(c,V)$ is equal to:\\
		\[ \gamma^* \wedge \text{Form}'(c',V). \]

	\noindent Here (and in the following), $\gamma^*$ denotes the formula obtained from $\gamma$ by replacing node identifiers $v$ (resp.\ edge identifiers $e$) with variables $\tt v$ in VVar (resp.\ $\tt e$ in EVar), and by replacing $\mathtt{and},\mathtt{or},\mathtt{not}$ respectively with $\wedge,\vee,\neg$.

		Suppose that $c = \exists([va]\!:P\hookrightarrow P'\mid \gamma, c')$, where $[va]$ denotes a morphism with codomain $P'$ equal to domain $P$, except for an additional node $v$ labelled with $a\in\mathcal{C}_V$. Then, $\text{Form}'(c,V)$ is equal to:

		\begin{center}
		\begin{tabular}{l}
			\vspace{5pt}$\exists\mathtt{v}:\mathtt{V}\mt{.}\ (\bigwedge_{\mathtt{v}'\in V\cap \text{VVar}} \neg\mathtt{v}\ \mt{=}\ \mathtt{v}') \wedge \mathtt{lab}_a\mt{(}\mathtt{v}\mt{)} \wedge \gamma^* \wedge \text{Form}'(c',V\cup \{\mathtt{v}\})$
		\end{tabular}
		\end{center}

		Suppose that $c = \exists([euva]\!:P\hookrightarrow P'\mid \gamma, c')$, where $[euva]$ denotes a morphism with codomain $P'$ equal to domain $P$, except for an additional edge $e$ with label $a\in\mathcal{C}_E$, source node $u$, and target node $v$. Then, $\text{Form}'(c,V)$ is equal to:

		\begin{center}
		\begin{tabular}{l}
			\vspace{5pt}$\exists\mathtt{e}:\mathtt{E}\mt{.}\ (\bigwedge_{\mathtt{e}'\in V\cap \text{EVar}} \neg\mathtt{e}\ \mt{=}\ \mathtt{e}') \wedge \mathtt{lab}_a\mt{(}\mathtt{e}\mt{)} \wedge \mathtt{s}\mt{(}\mathtt{e}\mt{)}\ \mt{=}\ \mathtt{u} \wedge \mathtt{t}\mt{(}\mathtt{e}\mt{)}\ \mt{=}\ \mathtt{v}$\\
			
			\vspace{5pt}\hspace{0.5in}$\wedge\ \gamma^* \wedge \text{Form}'(c',V\cup \{\mathtt{e}\})$
		\end{tabular}
		\end{center}

	\noindent Note that this exploits the correspondence between node identifiers and node variables established in the previous case.

	For Boolean formulae over M-conditions, the transformation $\text{Form}'$ is defined in the standard way, that is, $\text{Form}'(\neg c,V) = \neg \text{Form}'(c,V)$, $\text{Form}'(c \wedge d,V) = \text{Form}'(c,V) \wedge \text{Form}'(d,V)$, and $\text{Form}'(c \vee d,V) = \text{Form}'(c,V) \vee \text{Form}'(d,V)$.

	\qed
	\end{theorem}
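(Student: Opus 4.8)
The plan is to mirror the proof of Lemma~\ref{lemma:Cond}, but running the translation in the opposite direction. First I would invoke the preceding normalisation result to assume without loss of generality that $c$ is in normal form (and that distinct quantifiers bind distinct variables): all morphisms are inclusions, each $a\colon P\hookrightarrow C$ is either an identity or adds exactly one node or one edge, and no interpretation constraint mentions $\mathtt{path}$. This is exactly what makes the inductive clauses of $\text{Form}'$ exhaustive, and it lets us identify the node and edge identifiers of the graphs occurring in $c$ with first-order variables of the target logic.

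Next I would prove a more general lemma, by structural induction on the M-condition $c'$ in normal form, of the following shape: for every inclusion $p\colon P\hookrightarrow G$, every interpretation $I$ in $G$ defined on the free set variables of $c'$, and writing $V$ for the set of sorted variables corresponding to the identifiers of $P$, we have $p\models^{I}c'$ if and only if $I_G^{p,I}\models\text{Form}'(c',V)$, where $I_G^{p,I}$ extends $I_G$ by $\mathtt{Y}\mapsto I(\mathtt{Y})$ for free set variables and by $\mathtt{x}\mapsto p(x)$ for each identifier $x$ of $P$ --- the exact analogue of $I_G^{z,I}$ in Lemma~\ref{lemma:Cond}. The base case $c'=\mathtt{true}$ is immediate; the Boolean cases are routine from the definitions of $\models$ for M-conditions and for formulae together with the induction hypothesis; and the set-quantifier cases are literally the argument from Lemma~\ref{lemma:Cond} run backwards, using that an interpretation of $\mathtt{X}$ as a subset of $V_G$ (resp.\ $E_G$) is the same data as an element of $I_G(\text{VertexSet})$ (resp.\ $I_G(\text{EdgeSet})$).

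The substantive cases are the three forms of $\exists(a\mid\gamma,c')$. For an identity morphism $\text{id}_P$, satisfaction of $\exists(\text{id}_P\mid\gamma,c')$ at $p$ amounts to $\gamma^{I,p}=\text{true}$ together with $p\models^{I}c'$; I would check by a short sub-induction on $\gamma$ (which, being in normal form, contains no path predicates, only atoms $x\in\mathtt{X}$ and Boolean connectives) that $\gamma^{I,p}=\text{true}$ iff $I_G^{p,I}\models\gamma^{*}$ --- for the atom both sides assert $p(x)\in I(\mathtt{X})$ --- and then apply the induction hypothesis to $c'$ with the unchanged $V$. For the node-adding morphism $[va]$, satisfaction requires an injective $q\colon P'\hookrightarrow G$ with $q\circ[va]=p$; since $P'$ is $P$ with one extra node $v$ labelled $a$, such a $q$ is precisely a choice of $q(v)\in V_G$ that is (i) labelled $a$ and (ii) distinct from $p(u)$ for every node $u$ already in $P$ (by injectivity of $q$). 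This is captured exactly by the formula $\exists\mathtt{v}:\mathtt{V}\mt{.}\ (\bigwedge_{\mathtt{v}'\in V\cap\text{VVar}}\neg\mathtt{v}\mt{=}\mathtt{v}')\wedge\mathtt{lab}_a\mt{(}\mathtt{v}\mt{)}\wedge\gamma^{*}\wedge\text{Form}'(c',V\cup\{\mathtt{v}\})$, with $q(v)$ as the witness for $\mathtt{v}$; the $\gamma$ sub-induction and the induction hypothesis applied to $c'$ (now with domain $P'$ and variable set $V\cup\{\mathtt{v}\}$) close the case. The edge-adding case $[euva]$ is analogous, with the extra conjuncts $\mathtt{s}\mt{(}\mathtt{e}\mt{)}\mt{=}\mathtt{u}$ and $\mathtt{t}\mt{(}\mathtt{e}\mt{)}\mt{=}\mathtt{v}$ encoding that the source and target of the new edge are the already-present nodes named $u$ and $v$, using the node-identifier/variable correspondence.

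Finally, the theorem follows by instantiating the lemma with $P=\emptyset$, $p=i_G\colon\emptyset\hookrightarrow G$, $I=I_\emptyset$ and $V=\{\}$: then $I_G^{i_G,I_\emptyset}=I_G$ and $\text{Form}'(c,\{\})=\text{Form}(c)$, so $G\models c$ iff $i_G\models^{I_\emptyset}c$ iff $I_G\models\text{Form}(c)$ iff $G\models\text{Form}(c)$. I expect the main obstacle to be bookkeeping rather than conceptual: aligning the injectivity condition on the extending morphism $q$ with the conjuncts $\bigwedge\neg\mathtt{v}\mt{=}\mathtt{v}'$ (resp.\ $\bigwedge\neg\mathtt{e}\mt{=}\mathtt{e}'$) requires the invariant that $V$ enumerates all and only the identifiers of the current domain $P$ --- which is exactly where normal form (one new item per level, morphisms are inclusions, M-constraints built up from $\emptyset$) does the work --- together with threading the interpretation $I$ correctly through the set quantifiers so that every set variable of each $\gamma$ lies in its domain by the time $\gamma$ is evaluated.
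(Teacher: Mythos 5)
Your proposal is correct and follows essentially the same route as the paper: the paper likewise reduces the theorem to the general Lemma~\ref{lemma:Form} (your auxiliary statement, with $V=P^*$), proves it by structural induction (deferring the morphism cases to the earlier thesis~\cite{Poskitt13a} where you spell them out), and then instantiates at $p=i_G$, $I=I_\emptyset$, $V=\{\}$ exactly as you do. Your explicit treatment of the identity, node-adding, and edge-adding cases --- including the alignment of injectivity with the inequality conjuncts via the invariant on $V$ --- matches what the paper's cited argument does.
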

	
	To prove the theorem, we first prove a more general lemma about the translation of M-conditions.
	
	\begin{lemma}[M-conditions can be expressed as formulae]\label{lemma:Form}\rm
		For every M-condition $c$ in normal form, injective graph morphism $p\!:P\hookrightarrow G$, and interpretation $I$ in $G$ (defined for the free variables of $c$), we have:
		\[ p\!:P\hookrightarrow G \models^I c\ \ \text{if and only if}\ \ I_G^{p,I} \models \text{Form}'(c,P^*) \]
		
		\noindent where $P^*$ is the set of node and edge variables corresponding to the identifiers in $P$. Furthermore, $I_G^{p,I}$ is defined as $I_G$ but with the following mappings for free variables in $\varphi$: (1) for each set variable $\mathtt{Y}$ in the domain of $I$, $I_G^{p,I} (\mathtt{Y}) = I(\mathtt{Y})$; (2) for each node $v$ in $X$, $I_G^{p,I} (\mathtt{v}) = p(v)$; and (3) for each edge $e$ in $X$, $I_G^{p,I} (\mathtt{e}) = p(e)$.
		\qed
	\end{lemma}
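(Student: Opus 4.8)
The plan is to prove the biconditional by structural induction on the M-condition $c$ (which we may assume, as usual, to be in normal form with distinct quantifiers binding distinct variables), following the shape of the construction of $\text{Form}'$. The guiding fact I would keep in mind throughout is that, because the identifiers of $P$ are in bijection with the variables of $P^*$ and $I_G^{p,I}$ sends each such variable to its image under $p$, every injective morphism extending $p$ corresponds to a variable assignment over $G$ and vice versa; this is what lets the two notions of satisfaction be matched up level by level. The base case $c=\mathtt{true}$ and the Boolean cases $\neg c$, $c_1\wedge c_2$, $c_1\vee c_2$ are then immediate from the induction hypothesis, since $\text{Form}'$ and $\models$ both commute with the connectives. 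For the set-quantifier case $c=\exists_\mathtt{V}\mathtt{X}\mt{[}c'\mt{]}$ (the edge-set case being symmetric), I would unfold both sides: $p\models^I c$ holds iff $p\models^{I\cup\{\mathtt{X}\mapsto V\}}c'$ for some $V\subseteq V_G$, while $I_G^{p,I}\models\exists\mathtt{X}:\mathtt{VS}\mt{.}\ \text{Form}'(c',P^*)$ holds iff $(I_G^{p,I})_{\mathtt{X}\mapsto a}\models\text{Form}'(c',P^*)$ for some $a\in 2^{V_G}$; the observation that closes the case is simply $(I_G^{p,I})_{\mathtt{X}\mapsto a}=I_G^{p,\,I\cup\{\mathtt{X}\mapsto a\}}$, after which the induction hypothesis applies to $c'$ with the extended interpretation.

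The morphism-quantifier case is where the real work lies, and I would first split off a sub-claim about interpretation constraints: for every path-free constraint $\gamma$ over the items of a graph $C$, every injective morphism $q\colon C\hookrightarrow G$, and every interpretation $I$, one has $\gamma^{I,q}=\text{true}$ iff $I_G^{q,I}\models\gamma^*$. This is a short induction on $\gamma$ — the atomic case $x\in\mathtt{X}$ reduces on both sides to $q(x)\in I(\mathtt{X})$ by the definitions of $I_G$ and $I_G^{q,I}$, and the connective cases hold because $\gamma\mapsto\gamma^*$ maps $\mathtt{and},\mathtt{or},\mathtt{not}$ to $\wedge,\vee,\neg$; path predicates never arise because $c$ is in normal form.

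With the sub-claim in hand I would treat the three shapes a morphism can take in normal form. For $c=\exists(\text{id}_P\mid\gamma,c')$ the only $q$ with $q\circ\text{id}_P=p$ is $p$ itself, so $p\models^I c$ iff $\gamma^{I,p}=\text{true}$ and $p\models^I c'$, which by the sub-claim and the induction hypothesis is exactly $I_G^{p,I}\models\gamma^*\wedge\text{Form}'(c',P^*)$. For $c=\exists([va]\mid\gamma,c')$, where $[va]$ adds a fresh $a$-labelled node $v$, the injective morphisms $q\colon P'\hookrightarrow G$ extending $p$ are in bijection with vertices $b\in V_G$ satisfying $b\notin p(V_P)$ and $l_G(b)=a$; these are precisely the $b$ for which $(I_G^{p,I})_{\mathtt{v}\mapsto b}$ satisfies $\bigl(\bigwedge_{\mathtt{v}'\in P^*\cap\text{VVar}}\neg\mathtt{v}=\mathtt{v}'\bigr)\wedge\mathtt{lab}_a\mt{(}\mathtt{v}\mt{)}$, and under the bijection $(I_G^{p,I})_{\mathtt{v}\mapsto b}=I_G^{q,I}$ with $(P')^*=P^*\cup\{\mathtt{v}\}$; the sub-claim then takes care of $\gamma^*$ and the induction hypothesis of $\text{Form}'(c',(P')^*)$. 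The edge case $c=\exists([euva]\mid\gamma,c')$ is entirely analogous, with $q$ now determined by an edge $\bar e\in E_G$ that is distinct from all $p(e')$ and has $m_G(\bar e)=a$, $s_G(\bar e)=p(u)$, $t_G(\bar e)=p(v)$ — matching the conjuncts $\neg\mathtt{e}=\mathtt{e}'$, $\mathtt{lab}_a\mt{(}\mathtt{e}\mt{)}$, $\mathtt{s}\mt{(}\mathtt{e}\mt{)}\ \mt{=}\ \mathtt{u}$, $\mathtt{t}\mt{(}\mathtt{e}\mt{)}\ \mt{=}\ \mathtt{v}$ — and allowing $u=v$ for loops.

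I expect the main obstacle to be the routine-but-fiddly verification in these two item-adding cases that ``injective morphism into $G$ extending $p$'' and ``choice of a fresh vertex (resp.\ edge) of $G$ meeting the distinctness, label and incidence side conditions'' genuinely biject, and that the correspondingly-extended semantic interpretation $(I_G^{p,I})_{\mathtt{x}\mapsto\cdot}$ agrees exactly with $I_G^{q,I}$ for the matching $q$; once that is pinned down, everything else is forced by the definitions of $\models$, $I_G$, and $\text{Form}'$. Finally, Theorem~\ref{thm:Form} drops out as the instance $p=i_G\colon\emptyset\hookrightarrow G$, $I=I_\emptyset$, for which $P^*=\{\}$ and $I_G^{i_G,I_\emptyset}=I_G$, giving $G\models c$ iff $i_G\models^{I_\emptyset}c$ iff $I_G\models\text{Form}'(c,\{\})=\text{Form}(c)$ iff $G\models\text{Form}(c)$.
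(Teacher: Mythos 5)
Your proposal is correct and takes essentially the same route as the paper: a structural induction on normal-form M-conditions mirroring the construction of $\text{Form}'$, with the set-quantifier cases closed by the identity $(I_G^{p,I})_{\mathtt{X}\mapsto a}=I_G^{p,\,I\cup\{\mathtt{X}\mapsto a\}}$ and the morphism cases handled via the correspondence between injective extensions of $p$ and fresh items of $G$ meeting the distinctness, label, and incidence conditions. The only difference is one of presentation: the paper states the set-quantifier cases as ``clear'' and delegates the remaining cases to Lemma~6.33 of \cite{Poskitt13a}, whereas you write them out in full, including the $\gamma$-versus-$\gamma^*$ sub-claim.
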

	
	\begin{proof}
		Cases $c = \exists_\mathtt{V}\mathtt{X}\mt{[}c'\mt{]}$ and $c = \exists_\mathtt{E}\mathtt{X}\mt{[}c'\mt{]}$ are clear from the definition of $\models$, the construction, and induction hypothesis. All other cases are easily adapted from the proof of Lemma 6.33 in \cite{Poskitt13a}.
		\qed
	\end{proof}

	\begin{proof}[of Theorem \ref{thm:Form}]
		Define $i_G\!:\emptyset \hookrightarrow G$. We have that:
		
		\begin{center}\begin{tabular}{r c l}
			$G \models c$ &$\ \text{iff}\ $& $i_G \models^{I_\emptyset} c$ \\
			&$\ \text{iff}\ $& $I_G^{i_G,I_\emptyset}\models \text{Form}'(c,\{\})$ \\
			&$\ \text{iff}\ $& $I_G\models \text{Form}'(c,\{\})$ \\
			&$\ \text{iff}\ $& $G\models \text{Form}(c)$.
		\end{tabular}\end{center}
		
		\noindent from the definition of $\models$ for M-conditions and formulae, and Lemma \ref{lemma:Form}.
		\qed
	\end{proof}
	
	\subsection{Proof of Theorem \ref{thm:formulae_equiv}}\label{thm:formulae_equiv:PROOF}
	
	\begin{proof}
		We obtain the result directly from Theorems \ref{thm:Cond} and \ref{thm:Form}.
		\qed
	\end{proof}

	\newpage\section{Semantics of Graph Programs}\label{app:semantics}
	
	This appendix contains an operational semantics -- in the style of GP 2 \cite{Plump12a} -- for the graph programs defined in this paper. The semantics consists of inference rules, which inductively define a small-step transition relation $\rightarrow$ on \emph{configurations}. Intuitively, configurations represent the current state (a graph or special failure state) paired with a program that remains to be executed.

	\begin{definition}[Configuration]\rm
		Let $\mathcal{P}$ denote the class of all graph programs and $\G$ the set of all graphs over $\mathcal{C}$. A \emph{program configuration} is either a program with a graph in $\mathcal{P}\times \G$, just a graph in $\G$, or the special element fail.	
		\qed
	\end{definition}

	\begin{definition}[Transition relation]\rm
	    A \emph{small-step transition relation}
	\[ \rightarrow\ \subseteq (\mathcal{P}\times\G) \times ((\mathcal{P}\times\G)\cup\G\cup\{\text{fail}\}) \]
	\noindent over configurations defines the individual steps of computation. The transitive and reflexive-transitive closures of $\rightarrow$ are written $\rightarrow^+$ and $\rightarrow^*$ respectively.
		\qed
	\end{definition}

	Configurations in $\mathcal{P}\times \G$ represent states of unfinished computations, whereas graphs in $\G$ are proper results. The configuration fail represents a failure state. A configuration $\gamma$ is said to be \emph{terminal} if there is no configuration $\delta$ such that $\gamma \rightarrow \delta$.

	We provide semantic inference rules for the commands of programs. Each inference rule has a premise and conclusion, separated by a horizontal bar. Both contain (implicitly) universally quantified meta-variables for programs and graphs, where $\R$ stands for a rule set call, $C,P,P',Q$ for programs in $\mathcal{P}$, and $G,H$ for graphs in $\G$.

	\begin{definition}[Semantic inference rules for core commands]\label{def:core_sos_rules}\rm
		The \emph{inference rules for core commands} of programs are given in Figure \ref{fig:core_sos_rules}. The notation $G\not\Rightarrow_\R$ expresses that for a graph $G$, there is no graph $H$ such that $G\Rightarrow_\R H$.

	\begin{figure}[htb]
	\centering
	\begin{tabular}{lcl}
	$\mathrm{[call_1]}_\text{OS}$ $\frac{\displaystyle G \dder_\R H}{\displaystyle\tuple{\R,\,G} \to H}$ 
	&&
	$\mathrm{[call_2]}_\text{OS}$ $\frac{\displaystyle G \not\dder_\R}{\displaystyle\tuple{\R,\,G} \to \failrm}$
	\\\\
	$\mathrm{[seq_1]}_\text{OS}$ $\frac{\displaystyle \tuple{P,\, G} \to \tuple{P',\, H}}{\displaystyle \tuple{P;Q,\, G} \to \tuple{P';Q,\, H}}$ 
	&&
	$\mathrm{[seq_2]}_\text{OS}$ $\frac{\displaystyle \tuple{P,\, G} \to H}{\displaystyle \tuple{P;Q,\, G}\to \tuple{Q,\, H}}$
	\\\\
	$\mathrm{[seq_3]}_\text{OS}$ $\frac{\displaystyle \tuple{P,\, G} \to \failrm}{\displaystyle \tuple{P;Q,\, G}\to \failrm}$
	\\\\
	\multicolumn{3}{c}{$\mathrm{[if_1]}_\text{OS}$ $\frac{\displaystyle \tuple{C,\, G} \to^+ H}{\displaystyle \tuple{\ifte{C}{P}{Q},\, G}\to \tuple{P,\, G}}$}
	\\\\
	\multicolumn{3}{c}{$\mathrm{[if_2]}_\text{OS}$ $\frac{\displaystyle \tuple{C,\, G} \to^+ \failrm}{\displaystyle \tuple{\ifte{C}{P}{Q},\, G} \to \tuple{Q,\, G}}$}
	\\\\
	\multicolumn{3}{c}{$\mathrm{[try_1]}_\text{OS}$ $\frac{\displaystyle \tuple{C,\, G} \to^+ H}{\displaystyle \tuple{\tryte{C}{P}{Q},\, G}\to \tuple{P,\, H}}$}
	\\\\
	\multicolumn{3}{c}{$\mathrm{[try_2]}_\text{OS}$ $\frac{\displaystyle \tuple{C,\, G} \to^+ \failrm}{\displaystyle \tuple{\tryte{C}{P}{Q},\, G} \to \tuple{Q,\, G}}$}
	\\\\
	$\mathrm{[alap_1]}_\text{OS}$ $\frac{\displaystyle \tuple{P,\, G} \to^+ H}{\displaystyle \tuple{P!,\, G} \to \tuple{P!,\, H}}$
	&&
	$\mathrm{[alap_2]}_\text{OS}$ $\frac{\displaystyle \tuple{P,\, G} \to^+ \failrm}{\displaystyle \tuple{P!,\, G} \to G}$
	\end{tabular} 
	\caption{Inference rules for core commands}\label{fig:core_sos_rules}
	\end{figure}
		\qed
	\end{definition}

	To convey an intuition as to how the rules should be read, consider the rule [call$_1$]$_\text{OS}$. This reads: ``for all sets of rules $\R$ and all graphs $G,H$, $G\Rightarrow_\R H$ implies that $\langle \R,G \rangle \rightarrow H$''.

	By inspection of the inference rules, we note that a program execution can only result in a failure state if a set of rules is applied to a graph for which no rule in the set is applicable.

	The meaning of programs is given by the semantic function $\llbracket \_ \rrbracket$, which assigns to each program $P$ the function $\llbracket P \rrbracket$ mapping an input graph $G$ to the set of all possible results of executing $P$ on $G$. The application of function $\llbracket P \rrbracket$ to graph $G$ is denoted $\llbracket P \rrbracket G$. As well as graphs, this set may contain the special values fail and $\bot$. The former indicates a program run ending in failure, whereas $\bot$ indicates that at least one execution diverges (does not terminate), or ``gets stuck''.

	\begin{definition}[Divergence]\rm
		A program $P$ \emph{can diverge from} graph $G$ if there is an infinite sequence:
		\[ \langle P, G \rangle \rightarrow \langle P_1, G_1 \rangle \rightarrow \langle P_2, G_2 \rangle \rightarrow \dots \]
		\qed
	\end{definition}

	\begin{definition}[Getting stuck]\rm
		A program $P$ \emph{can get stuck from} graph $G$ if there is a terminal configuration $\langle Q, H \rangle$ such that $\langle P, G \rangle \rightarrow^* \langle Q, H \rangle$.
		\qed
	\end{definition}

	A program can get stuck if the guard program $C$ of a conditional can diverge on some graph $G$, neither producing a graph nor failing, or if the same property is true for a program that is iterated. The execution in these cases gets stuck because none of the inference rules for conditionals and iteration can be applied.

	\begin{definition}[Semantic function]\label{def:semantic_function}\rm
	The \emph{semantic function} $\llbracket \_ \rrbracket\!: \mathcal{P}\rightarrow (\G \rightarrow 2^{\G \cup \{\text{fail},\bot\}} )$, given a graph $G$ and a program $P$, is defined by:

	\begin{align*}
		\llbracket P \rrbracket G =& \ \{ X \in \G \cup \{\text{fail}\} \mid \langle P, G \rangle \rightarrow^+ X \}\\
		& \ \cup \{ \bot \mid \text{$P$ can diverge or get stuck from $G$}\}.\\
	\end{align*}
	\noindent	\qed
	\end{definition}

	Finally, we provide a straightforward definition of program equivalence which is based on the definition of semantic functions.

	\begin{definition}[Semantic equivalence]\rm
		Two graph programs $P$ and $Q$ are \emph{semantically equivalent}, denoted by $P \equiv Q$, if $\llbracket P \rrbracket = \llbracket Q \rrbracket$.
		\qed
	\end{definition}

	\newpage\section{Weakest Liberal Precondition Constructions}
	
	\subsection{Proof of Lemma \ref{lemma:shifting}}\label{lemma:shifting:PROOF}
	
	\begin{proof}
		By structural induction.\\

		\noindent \emph{Basis.} Let $c= \text{true}$. Then $\text{A}'(p,\text{true}) = \text{true}$. All morphisms satisfy true.\\

		\noindent \emph{Step. Only if.} Let $c = \exists_\mathtt{V}\mathtt{X}\mt{[}c'\mt{]}$. Assume that:
		\[p'' \models^I \text{A}'(p,\exists_\mathtt{V}\mathtt{X}\mt{[}c'\mt{]}) = \exists_\mathtt{V}\mathtt{X}\mt{[}\text{A}'(p,c')\mt{]}.\]
		
		\noindent Then there exists an interpretation $I' = I \cup \{\mathtt{X}\mapsto V\}$ for some $V \subseteq V_H$ such that $p'' \models^{I'} \text{A}'(p,c')$. By induction hypothesis, we have $p'' \circ p \models^{I'} c'$. By definition of $\models$, we get the result that $p'' \circ p \models^I \exists_\mathtt{V}\mathtt{X}\mt{[}c'\mt{]}$. Analogous for case $c = \exists_\mathtt{E}\mathtt{X}\mt{[}c'\mt{]}$.

		Let $c = \exists(a \mid \gamma,c')$. Assume that:
		\[p'' \models^{I} \text{A}'(p,\exists(a \mid \gamma,c')) = \bigvee_{e\in\varepsilon}\exists(b\mid\gamma,\text{A}'(s,c'))\]
		
		\noindent i.e.\ there exists an $e\in\varepsilon$ such that $p'' \models^I \exists(b\mid\gamma,\text{A}'(s,c'))$. By the definition of $\models$, there exists a morphism $q''\!: E \hookrightarrow H$ with $p'' = q'' \circ b$, $\gamma^{I,q''} = \text{true}$, and $q''\models^I \text{A}'(s,c')$. Directly from the proof of Lemma 3 in \cite{Habel-Pennemann09a}, we get $p''\circ p \models^{I} \exists (a)$. Using the induction hypothesis, $q''\models^I \text{A}'(s,c')$ implies $q'' \circ s \models^{I} c'$. Define $q'\!: C\hookrightarrow H$ as $q'' \circ s$. With the definition of $\models$, we have $p''\circ p \models^{I} \exists(a,c')$. Finally, with the assumption that $\gamma^{I,q''} = \text{true}$, that $\gamma$ is defined only for nodes and edges in $C$, and that $q' = q'' \circ s$, we have that $\gamma^{I,q'} = \text{true}$. Together, we have the result that $p'' \circ p \models^{I} \exists(a\mid \gamma,c') = c$.\\

		\noindent \emph{Step. If.} Let $c = \exists_\mathtt{V}\mathtt{X}\mt{[}c'\mt{]}$. Assume that $p''\circ p \models^I \exists_\mathtt{V}\mathtt{X}\mt{[}c'\mt{]}$. Then there exists an interpretation $I' = I \cup \{X\mapsto V\}$ for some $V\subseteq V_H$ such that $p''\circ p \models^{I'} c'$. By induction hypothesis, we have that $p'' \models^{I'} \text{A}'(p,c')$. By definition of $\models$ and the construction of $\text{A}'$, we get the result that $p'' \models^I \exists_\mathtt{V}\mathtt{X}\mt{[}\text{A}'(p,c')\mt{]} = \text{A}'(p,\exists_\mathtt{V}\mathtt{X}\mt{[}c'\mt{]})$. Analogous for case $c = \exists_\mathtt{E}\mathtt{X}\mt{[}c'\mt{]}$.

		Let $c = \exists(a \mid \gamma, c')$. As for the ``only if'' direction, one can derive $p'' \models^I \text{A}'(p,\exists(a\mid \gamma,c'))$ directly from the proof of Lemma 3 in \cite{Habel-Pennemann09a}, with the additional requirement that $\gamma^{I,q'} = \gamma^{I,q''} = \text{true}$ for the satisfying morphisms $q'\!:C\hookrightarrow H$ and $q''\!:E\hookrightarrow H$ (simple to show, because $\gamma$ is defined only over items in $C$, and the proof in \cite{Habel-Pennemann09a} shows that $q' = q'' \circ s$).\\

		For Boolean formulae over M-conditions, the statement follows from the definition of $\models$ and the induction hypothesis.
		\qed
	\end{proof}

	\newpage\subsection{Proof of Proposition \ref{prop:LPath}}\label{prop:LPath:PROOF}
	
	\begin{proof}
		Let $p = \mathtt{path}\mt{(}v,w,\mathtt{not}\ E\mt{)}$ with $v,w$ denoting some nodes in $V_R$, and $E$ a set of edges in $E_R$. Furthermore, given some morphism $p$, let $p(E)$ abbreviate the set $\{p(e) \mid e\in E\}$. We show that the equality holds for all contexts of $v,w$ and all types of rules $r$ (in the sense of what the rules create and/or delete).\\
		
		\noindent\emph{Case (1).} Suppose that $v,w\in V_K$ and $E_L = E_R$. Then $\text{LPath}(r,p)$ returns the predicate $\mathtt{path}\mt{(}v,w,\mathtt{not}\ E\mt{)}$. By the definition of interpretations, the equality holds if $\left[\text{path}_G(g(v),g(w),g(E))\ \text{iff}\ \text{path}_H(h(v),h(w),h(E))\right]$. If such a path exists in $G$ then the same path exists in $H$ (and vice versa), since the rule does not create or delete edges, and since any nodes created or deleted would not be part of such paths (by the dangling condition).\\
		
		\noindent\emph{Case (2).} Suppose that $v,w\in V_K$ and $E_R \subset E_L$. Then $\text{LPath}(r,p)$ returns the predicate $\mathtt{path}\mt{(}v,w,\mathtt{not}\ E^\ominus\mt{)}$. By the definition of interpretations, the equality holds if $\left[\text{path}_G(g(v),g(w),g(E^\ominus))\ \text{iff}\ \text{path}_H(h(v),h(w),h(E))\right]$. The argument is similar to that of the previous case, noting that edges $r$ deletes are included in $E^\ominus$ and hence are never part of such a path in $G$.\\
		
		\noindent\emph{Case (3).} Suppose that $v,w\in V_K$ and $E_L\subset E_R$. Then $\text{LPath}(r,p)$ returns:
		\[ \mathtt{path}\mt{(}v,w,\mathtt{not}\ E\mt{)}\ \mathtt{or}\ \text{FuturePaths}(r,p). \]
		
		\noindent For paths along edges $e\in E_L,E_R$, the argument is as before. For paths including edges $e\in E_R\setminus E_L$, the equality then holds if:
		\[\left[\text{FuturePaths}(r,p)^{I,g}\ \text{iff}\ \text{path}_H(h(v),h(w),h(E))\right].\]
		
		\noindent By construction, $\text{FuturePaths}(r,p)$ expands to a disjunction of:
		\begin{center}
			\begin{tabular}{c}
				$\mt{(}\mathtt{path}\mt{(}v,x_1,\mathtt{not}\ E\mt{)}\ \mathtt{and}\ \mathtt{path}\mt{(}y_1,x_2,\mathtt{not}\ E\mt{)} \dots \mathtt{and}\ \mathtt{path}\mt{(}y_i,x_{i+1},\mathtt{not}\ E\mt{)}$\\
				$\dots \mathtt{and}\ \mathtt{path}\mt{(}y_n,w,\mathtt{not}\ E\mt{)}\mt{)}$
			\end{tabular}
		\end{center}

		\noindent over all non-empty sequences of distinct pairs $\langle \langle x_1,y_1 \rangle, \dots, \langle x_n,y_n \rangle \rangle$ drawn from:
		\[ \{\langle x,y \rangle \mid x,y\in V_K \wedge \text{path}_R(x,y,E) \wedge \neg \text{path}_L(x,y,E)\}. \]
		
		\noindent Assume that one such disjunct is satisfied in $G$. Then there are paths from $g(v)$ to $g(x_1)$, \dots $g(y_i)$ to $g(x_{i+1})$, \dots and $g(y_n)$ to $g(w)$ (all excluding edges in $g(E)$). These paths also exist in $H$, since each $x,y$ pair is in the interface of $r$, since $r$ does not delete edges, and since any nodes deleted would not have been part of the paths in $G$ (to satisfy the dangling condition, all incident edges must also be deleted, which would lead to a contradiction). Furthermore, each endpoint $h(x)$ of these paths is connected to the beginning of the next one $h(y)$ (since $\text{path}_R(x,y,E)$, and each element of the path is injectively mapped to $H$). Together, we have a witness for $\text{path}_H(h(v),h(w),h(E))$.
		
		Assume now that $\text{path}_H(h(v),h(w),h(E))$ holds by some path consisting of edges $e_1,e_2,\dots,e_n$ in sequence, with $s(e_1) = h(v)$, and $t(e_n) = h(w)$. Let this path be denoted by the pair $\langle h(v), h(w) \rangle$. Assume that at least one of its edges is in $h(E_R\setminus E_L)$, i.e.\ is created by $r$. The path contains path segments $e_a, \dots , e_b$ in $h(E_R\setminus E_L)$, denoted by $\langle s(e_a), t(e_b) \rangle$, such that if there is an edge $e_{a-1}$ (resp.\ $e_{b+1}$) in $\langle h(v), h(w) \rangle$, that edge is in the set $h(E_L)$. Since nodes created by $r$ can only be incident to edges in $E_R\setminus E_L$ (by the dangling condition), there exists in $G$ the same sequence of edges $\langle h(v), h(w) \rangle$ but with ``gaps'' for all such path segments $\langle s(e_a), t(e_b) \rangle$. Each pair of nodes $s(e_a), t(e_b)$ corresponds to a pair $x_i,y_i \in V_K$ such that $g(x_i) = h(x_i) = s(e_a)$ and $g(y_i) = h(y_i) = t(e_b)$. The construction returns a disjunct:		
		\[ \mt{(}\mathtt{path}\mt{(}v,x_1,\mathtt{not}\ E\mt{)}\ \mathtt{and}\ \mathtt{path}\mt{(}y_1,x_2,\mathtt{not}\ E\mt{)} \dots \mathtt{and}\ \mathtt{path}\mt{(}y_m,w,\mathtt{not}\ E\mt{)}\mt{)} \]
		
		\noindent for all such pairs $x_i,y_i$, since paths between them are created by $r$ (that is, $\text{path}_R(x_i,y_i,E)$ holds and $\text{path}_L(x_i,y_i,E)$ does not). As the disjunct evaluates to true under $I$ and $g$, so does $\text{FuturePaths}(r,p)$.\\

		\noindent\emph{Case (4).} Suppose that $v,w\in V_K$ and $E_L \neq E_R$, i.e.\ including the previous cases but also rules that both delete and create edges. Here, $\text{LPath}(r,p)$ returns:
		\[ \mathtt{path}\mt{(}v,w,\mathtt{not}\ E^\ominus\mt{)}\ \mathtt{or}\ \text{FuturePaths}(r,p) \]
		
		\noindent with $\text{FuturePaths}(r,p)$ expanding as before but with $E^\ominus$ replacing $E$. The argument is as in the previous case, but noting that edges $r$ deletes are included in $E^\ominus$, and hence are not considered in $H$ (nor in any corresponding path segments in $G$).\\
		
		\noindent\emph{Case (5).} Now, suppose that $v\notin V_K$, $w\in V_K$, and $E_L \neq E_R$ (the rule must create at least one edge from from $v$ in $V_R$). Then $\text{LPath}(r,p)$ returns the constraint:
		
		\begin{center}\begin{tabular}{c}
			$\mathtt{false\ or}\ \mathtt{path}\mt{(}x_1,w,\mathtt{not}\ E^\ominus\mt{)}\ \mathtt{or}\ \mathtt{path}\mt{(}x_2,w,\mathtt{not}\ E^\ominus\mt{)}\ \mathtt{or}\dots$\\
		 $\dots \mathtt{or}\ \text{FuturePaths}(r,p)$
		\end{tabular}\end{center}
		
		\noindent for each $x_i \in V_K$ such that $\text{path}_R(v,x_i,E^\ominus)$, and the equality holds if: \[\left[\text{LPath}(r,p)^{I,g}= \text{true} \ \text{iff}\ \text{path}_H(h(v),h(w),h(E^\ominus))\right].\]
		
		\noindent Assume that $\text{LPath}(r,p)^{I,g} = \text{true}$. Suppose that $\text{LPath}'(r,v,w,E^\ominus) = \text{true}$. Then from the construction, there exists some $z\in V_K$ with $\text{path}_R(v,z,E^\ominus)$ such that $(\mathtt{path}\mt{(}z,w,\mathtt{not}\ E^\ominus\mt{)})^{I,g}=\text{true}=\text{LPath}(r,\mathtt{path}\mt{(}z,w,\mathtt{not}\ E^\ominus\mt{)})^{I,g}$. By induction we get $\text{path}_H(h(z),h(w),h(E^\ominus))$. From the assumption $\text{path}_R(v,z,E^\ominus)$ and the definition of morphisms we derive that $\text{path}_H(h(v),h(z),E^\ominus)$, and hence the result that $\text{path}_H(h(v),h(w),h(E^\ominus)$. Suppose $\text{FuturePaths}(r,p)^{I,g} = \text{true}$ instead, i.e. 
		\[ \mt{(}\mathtt{path}\mt{(}z,x_1,\mathtt{not}\ E^\ominus\mt{)}\ \mathtt{and}\ \mathtt{path}\mt{(}y_1,x_2,\mathtt{not}\ E^\ominus\mt{)} \dots \mathtt{and}\ \mathtt{path}\mt{(}y_m,w,\mathtt{not}\ E^\ominus\mt{)}\mt{)} \]
		
		\noindent for some $z\in V_K$ such that $\text{path}_R(v,z,E^\ominus)$. Let $p' = \mathtt{path}\mt{(}z,w,\mathtt{not}\ E^\ominus\mt{)}$. Then:
		\[ \text{FuturePaths}(r,p)^{I,g} = \text{true} = \text{FuturePaths}(r,p')^{I,g} = \text{LPath}(r,p')^{I,g}.\]
		
		\noindent By induction, we have that $\text{path}_H(h(z),h(w),h(E^\ominus))$. As before we derive that $\text{path}_H(h(v),h(z),E^\ominus)$, and hence the result that $\text{path}_H(h(v),h(w),h(E^\ominus)$.

		Assume that $\text{path}_H(h(v),h(w),h(E^\ominus))$. There is a node $z\in V_K$ such that $\text{path}_R(v,z,E^\ominus)$ and $\text{path}_H(h(z),h(w),h(E^\ominus))$. (Suppose there is no such node. Then every node reachable from the image of $v$ in $H$ will also have been created by the rule, and in particular, none of these nodes will be the image of $w$ since $w\in V_K$. A contradiction.) Let $p' = \mathtt{path}\mt{(}z,w,\mathtt{not}\ E\mt{)}$. By induction, we have that:
		\[\text{LPath}'(r,p')^{I,g} = (\mathtt{path}\mt{(}z,w,\mathtt{not}\ E^\ominus\mt{)}\ \mathtt{or}\ \text{FuturePaths}(r,p'))^{I,g} =  \text{true}.\]
		
		\noindent If the first disjunct evaluates to true, then so does $\text{LPath}'(r,p)^{I,g}$ since the construction yields a disjunct $\mathtt{path}\mt{(}x_i,w,\mathtt{not}\ E^\ominus\mt{)}$ where $x_i = z$. If the second disjunct evaluates to true, i.e.\
		\[ \mt{(}\mathtt{path}\mt{(}z,x_1,\mathtt{not}\ E^\ominus\mt{)}\ \mathtt{and}\ \mathtt{path}\mt{(}y_1,x_2,\mathtt{not}\ E^\ominus\mt{)} \dots \mathtt{and}\ \mathtt{path}\mt{(}y_m,w,\mathtt{not}\ E^\ominus\mt{)}\mt{)} \]
		
		\noindent then so does $\text{LPath}'(r,p)^{I,g}$ since the above is yielded by the construction, i.e.
		\[\text{LPath}'(r,v,x_1,E^\ominus) = \mathtt{false}\ \mathtt{or}\ \mathtt{path}\mt{(}z,x_1,\mathtt{not}\ E^\ominus\mt{)}\ \mathtt{or}\dots. \]

		\noindent \emph{Case (6).} Suppose that $v\in V_K,w\notin V_K$, and $E_L \neq E_R$. Analogous to Case (5). \\
		
		\noindent \emph{Case (7).} Finally, suppose that $v,w \notin V_K$ and $E_L \neq E_R$. There are two subcases: (A) there are nodes in $V_K$ which are reachable from $v,w$ in $R$; and (B) there are no such nodes in $V_K$. For the former subcase, the proof is along the lines of Cases (5)-(6). For the latter subcase, a path can only exist in $H$ if there is a path from $v$ to $w$ in $R$. If there is, the construction returns the disjunct $\mathtt{true}$; otherwise $\mathtt{false}$. The equality clearly holds.
		\qed
	\end{proof}

	\newpage
	\subsection{Proof of Lemma \ref{lemma:L}}\label{lemma:L:PROOF}
	
	\begin{proof}
		By structural induction. Note that the construction distinguishes two cases in the step, according to whether a pushout complement exists or not. We will consider the case that one does; the other proceeds analogously to the proof of Theorem 6 in \cite{Habel-Pennemann09a}.\\
		
		\noindent\emph{Basis.} Let $c = \text{true}$. Then $\text{L}'(r,c,M) = \text{true}$. All morphisms satisfy true.\\
		
		\noindent\emph{Step. Only if.} Let $c = \exists_\mathtt{V}\mathtt{X}\mt{[}c'\mt{]}$. By assumption,
		\[g\models^I \text{L}'(r,\exists_\mathtt{V}\mathtt{X}\mt{[}c'\mt{]},M) = \exists_\mathtt{V}\mathtt{X}\mt{[}\ \bigvee_{M'\in 2^{M_\mathtt{V}}}\text{L}'(r,c',M\cup M')\ \mt{]}.\]
		
		\noindent There exists an $M'$ such that $g\models^I \exists_\mathtt{V}\mathtt{X}\mt{[} \text{L}'(r,c',M\cup M')\mt{]}$. By the definition of $\models$, there exists some $V \subseteq V_G$ such that  $I' = I \cup \{\mathtt{X}\mapsto V\}$ and $g \models^{I'} \text{L}'(r,c',M\cup M')$. By induction, we get that $h \models^{I'_{M\cup M'}} c'$. Observe that: 
		\[I'_{M\cup M'} = I_M \cup \{\mathtt{X}\mapsto V \cup \{h(x) \mid (x,\mathtt{X})\in M'\}\}.\]
		
		\noindent By definition of $\models$, we get the result that $h \models^{I_M} \exists_\mathtt{V}\mathtt{X}\mt{[}c'\mt{]}$. (Analogous for case $c = \exists_\mathtt{E}\mathtt{X}\mt{[}c'\mt{]}$.)
		
		Now let $c = \exists (a\mid \gamma, c')$. By assumption,
		\[g \models^I \text{L}'(r,\exists (a\mid \gamma, c'),M) = \exists(b\mid \gamma_M,\text{L}'(r^*,c',M))\]
		
		\noindent i.e.\ there is a morphism $q'\!:Y\hookrightarrow G$ such that $\gamma_M^{I,q'} = \text{true}$, $q'\circ b = g$, and $q' \models^I \text{L}'(r^*,c',M)$. Following the proof of Theorem 6 in \cite{Habel-Pennemann09a}, we derive a morphism $q\!: X\hookrightarrow H$ with $q \circ a = h$, i.e.\ $h \models^I \exists(a)$. Consider now the construction of $\gamma_M$ from $\gamma$. Each MSO expression $x\in\mathtt{X}$ for $x\in X \setminus Y$ is replaced in $\gamma_M$ by $\tt true$ (resp.\ $\tt false$) if $(y,\mathtt{X})\in M$ (resp.\ $\notin$) for some $y = x$. Observe that each $(x\in\mathtt{X})^{I_M,q}$ evaluates to the same Boolean value as the corresponding replacement ($\tt true$ or $\tt false$) in $\gamma_M$ under $I,q'$. Moreover, each path predicate $p$ in $\gamma$ is replaced with $\text{LPath}(r^*,p)$. By Proposition \ref{prop:LPath},
		\[ \text{LPath}(r^*,p)^{I,q'} = p^{I,q}. \]
		
		\noindent Together, we have $\gamma_M^{I,q'} = \gamma^{I_M,q} = \text{true}$ and $h \models^{I_M} \exists(a\mid\gamma)$. From the induction hypothesis, $q' \models^I \text{L}'(r^*,c',M)$ implies $q \models^{I_M} c'$. With this, we get the result that $h \models^{I_M} \exists (a \mid \gamma, c')$.\\

		\noindent\emph{Step. If.} Let $c = \exists_\mathtt{V}\mathtt{X}\mt{[}c'\mt{]}$. By assumption, $h \models^{I_M} \exists_\mathtt{V}\mathtt{X}\mt{[}c'\mt{]}$. By the definition of $\models$, there exists some $V\subseteq V_H$ such that $h \models^{I_M \cup \{\mathtt{X}\mapsto V\}} c'$. Define:
		\[M_\mathtt{X} = \{ (x,\mathtt{X}) \mid x\in V_R\setminus V_L \wedge h(x) \in V \}\]
		
		\noindent and $V^\ominus = V \setminus (V_H\setminus V_D)$. Now define:
		\[ I' = I \cup \{\mathtt{X}\mapsto V^\ominus\} \]
		
		\noindent and hence:
		\[ I'_{M\cup M_\mathtt{X}} = I_M \cup \{\mathtt{X}\mapsto V^\ominus \cup \{h(x) \mid (x,\mathtt{X})\in M_\mathtt{X}\} \}. \]
		
		\noindent Observe that $h \models^{I'_{M\cup M_\mathtt{X}}} c'$. By induction hypothesis, we get $g \models^{I'} \text{L}'(r,c',M\cup M_\mathtt{X})$. Clearly, $M_\mathtt{X}$ is in $2^{M_\mathtt{V}}$ from the construction, hence:
		\[g\models^{I'}  \bigvee_{M'\in 2^{M_\mathtt{V}}}\text{L}'(r,c',M\cup M').\]
		
		\noindent By the definition of $\models$, we get the result that:
		
		\[g\models^{I}  \exists_\mathtt{V}\mathtt{X}\mt{[}\ \bigvee_{M'\in 2^{M_\mathtt{V}}}\text{L}'(r,c',M\cup M')\ \mt{]} = \text{L}'(r,c,M).\]
		
		\noindent (Analogous for case $c = \exists_\mathtt{E}\mathtt{X}\mt{[}c'\mt{]}$.)
		
		Now let $c = \exists (a\mid \gamma, c')$. By assumption, $h \models^{I_M} \exists (a\mid \gamma, c')$, i.e.\ there is a morphism $q\!:X\hookrightarrow H$ with $\gamma^{I_M,q} = \text{true}$, $q\circ a = h$, and $q \models^{I_M} c'$. Following the proof of Theorem 6 in \cite{Habel-Pennemann09a}, we derive a morphism $q'\!: Y \hookrightarrow G$ with $q' \circ b = g$, i.e.\ $g \models^I \exists(b)$. Consider now the construction of $\gamma_M$ from $\gamma$. Each MSO expression $x\in\mathtt{X}$ for $x\in X \setminus Y$ is replaced in $\gamma_M$ by $\tt true$ (resp.\ $\tt false$) if $(y,\mathtt{X})\in M$ (resp.\ $\notin$) for some $y = x$. Observe that each $(x\in\mathtt{X})^{I_M,q}$ evaluates to the same Boolean value as the corresponding replacement ($\tt true$ or $\tt false$) in $\gamma_M$ under $I,q'$. Moreover, each path predicate $p$ in $\gamma$ is replaced with $\text{LPath}(r^*,p)$. By Proposition \ref{prop:LPath},
		\[ \text{LPath}(r^*,p)^{I,q'} = \text{LPath}(r^*,p)^{I_M,q'} = p^{I_M,q}. \]
		
		\noindent Together, we have $\gamma_M^{I,q'} = \gamma^{I_M,q} = \text{true}$ and $g \models^{I} \exists(b\mid\gamma_M)$. From the induction hypothesis, $q \models^{I_M} c'$ implies $q' \models^I \text{L}'(r^*,c',M)$. Finally, we get the result that:
		\[ g \models^I \exists (b\mid\gamma_M, \text{L}'(r^*,c',M)) = \text{L}'(r,c,M).  \]

		For Boolean formulae over M-conditions, the statement follows from the definition of $\models$ and the induction hypothesis.
		\qed
	\end{proof}

\end{document}